\newcommand\version{April 10, 2020}
\newtheorem{theorem}{Theorem}[section]
\newtheorem{proposition}{Proposition}[section]
\newtheorem{lemma}{Lemma}[section]
\newtheorem{corollary}{Corollary}[section]
\theoremstyle{definition}
\theoremstyle{remark}
\newtheorem{remark}[theorem]{Remark}
\newtheorem{assumption}{Assumption}
\numberwithin{equation}{section}
\newcommand{\Hh}{\mathbb{H}}
\newcommand{\Aa}{\mathbb{A}}
\newcommand{\eps}{\epsilon}
\renewcommand{\epsilon}{\varepsilon}
\newcommand{\F}{\mathcal{F}}
\newcommand{\E}{\mathcal{E}}
\newcommand{\N}{\mathbb{N}}
\newcommand{\R}{\mathbb{R}}
\newcommand\const{{\rm const.\, }}
\newcommand\pp{{\varphi^{\rm P}}}
\newcommand\id{\mathds{1}}
\DeclareMathOperator{\dist}{dist}
\DeclareMathOperator{\infspec}{inf\, spec}
\DeclareMathOperator{\ran}{ran}
\DeclareMathOperator{\Tr}{Tr}
\DeclareMathOperator{\tr}{Tr}
\DeclareMathOperator{\spa}{span}
\begin{document}

\title[Quantum corrections for a strongly coupled polaron --- \version]{Quantum corrections to the Pekar asymptotics of a strongly coupled polaron}

\author{Rupert L. Frank}
\address[R. Frank]{Mathematisches Institut, Ludwig-Maximilians Universit\"at M\"unchen, The\-resienstr. 39, 80333 M\"unchen, Germany, and Department of Mathematics, California Institute of Technology, Pasadena, CA 91125, USA}
\email{r.frank@lmu.de, rlfrank@caltech.edu}

\author{Robert Seiringer}
\address[R. Seiringer]{IST Austria (Institute of Science and Technology Austria), Am Campus 1, 3400 Klosterneuburg, Austria}
\email{robert.seiringer@ist.ac.at}

\begin{abstract}
We consider the Fr\"ohlich polaron model in the strong coupling limit. 
It is well known that to leading order the ground state energy is given by the (classical) Pekar energy. In this work, we establish the subleading correction, describing quantum fluctuation about the classical limit. Our proof applies to a model of a confined polaron, where both the electron and the polarization field are restricted to a set of finite volume, with linear size determined by the natural length scale of the Pekar problem. 
\end{abstract}

\renewcommand{\thefootnote}{${}$} \footnotetext{\copyright\, 2019 by
  the authors. This paper may be reproduced, in its entirety, for
  non-commercial purposes.}
  
\date{\version}

\maketitle



\section{Introduction}

The polaron model was introduced by Fr\"ohlich  \cite{F37} as a model of an electron interacting with the quantized optical modes of a polar crystal. It represents a simple and well-studied model of non-relativistic quantum field theory, and we refer to \cite{AD,FLST,GL,JSM,Spohn} for properties, results and further references.  

In the strong coupling limit $\alpha\to \infty$, the model allows for an exact solution, in the sense that the ground state energy asymptotically equals the one given by the Pekar approximation \cite{P54}, which amounts to a classical approximation to the quantum field theory. This was first shown by Donsker and Varadhan \cite{DoVa} using a path integral formulation of the problem. (See also \cite{MV1,MV2} for recent work on the construction of the Pekar process \cite{Spohn}.) Later the result was improved by Lieb and Thomas \cite{LT} who provided a quantitative bound on the difference. 

We are interested here in the subleading correction to the classical (Pekar) approximation. It was predicted in the physics literature (see \cite{Allcock,Alcock,tja,Gross} and references there) that this correction results from quantum fluctuations about the classical limit, and is $O(\alpha^{-2})$ smaller than the  main term. It can be calculated by evaluating the ground state energy of a system of (infinitely many) harmonic oscillators with frequencies determined by the Hessian of the Pekar functional. This result is verified rigorously in this paper, by giving upper and lower bounds on the ground state energy of the Fr\"ohlich polaron model that establish this subleading  correction.  Our analysis applies to a model of a confined polaron, where both the electron and the polarization field are restricted to a finite volume (with linear size of the natural length scale set by the Pekar problem). 

The confinement breaks translation invariance, which removes zero modes otherwise present in the Hessian of the Pekar functional, and avoids having to localize the electron on the Pekar scale, which simplifies the problem. The singular ultraviolet behavior is unaffected by the confinement, however, and represents one of the main technical challenges. A key ingredient in our analysis is a multiple use of the commutator method of Lieb and Yamazaki \cite{LY}, combined with Nelson's Gross transformation \cite{grossT,nelson}.

\section{Model and Main Results}

\subsection{The Model}
For $\Omega\subset \R^3$ open, let $\Delta_\Omega$ denote the Dirichlet Laplacian, and let $v_x(\,\cdot\,) = (-\Delta_\Omega)^{-1/2}(x,\,\cdot\,)$. The model we consider is defined by the Hamiltonian
\begin{equation}\label{def:ham}
\Hh \coloneqq -\Delta_\Omega - a(v_x) - a^\dagger(v_x) + \N
\end{equation}
in $L^2(\Omega)\otimes\mathcal F$, where $\mathcal F$ is the bosonic Fock space over $L^2(\Omega)$. The creation and annihilation operators satisfy the commutation relation
\begin{equation}
[a(f),a^\dagger(g)] = \alpha^{-2} \langle f| g\rangle
\qquad\text{for}\ f,g\in L^2(\Omega)
\end{equation}
with a parameter $\alpha>0$. The field energy is given by the number operator $\N = \sum_j a^\dagger(\varphi_j)a(\varphi_j)$ for some orthonormal basis $\{ \varphi_j\}$ in $L^2(\Omega)$, with spectrum $\sigma(\N) = \alpha^{-2} \{0,1,2,\ldots\}$. We are interested in the ground state energy of $\Hh$ as $\alpha\to \infty$.

We note that the expression \eqref{def:ham} is somewhat formal, since $v_x\not\in L^2(\Omega)$ and hence $a^\dagger(v_x)$ is not densely defined. The operator $\Hh$ can be defined with the aid of its corresponding quadratic form, however. It is in fact well known that $\Hh$ defines a self-adjoint operator on a suitable domain, see \cite{griesemer} or Section~\ref{section:GT} below. 

\begin{remark}
By rescaling all lengths by $\alpha$, $\Hh$ is unitarily equivalent to the operator $\alpha^{-2} \tilde \Hh$ with
\begin{equation}
\tilde \Hh = -\Delta_{\Omega/\alpha} - \sqrt{\alpha}\left( \tilde a(\tilde v_x) - \tilde a^\dagger(\tilde v_x) \right) + \tilde \N
\end{equation}
where $\tilde v_x(\,\cdot\,) = (-\Delta_{\Omega/\alpha})^{-1/2}(x,\,\cdot\,)$, $\tilde\N = \sum_j \tilde a^\dagger(\varphi_j)\tilde a(\varphi_j)$ and the $\tilde a$ and $\tilde a^\dagger$ operators satisfy $[\tilde a(f),\tilde a^\dagger(g)] = \langle f| g\rangle$ (and are thus independent of $\alpha$). Large $\alpha$ hence corresponds to the strong-coupling limit of a polaron confined to a region of linear size $\alpha^{-1}$. We find it more convenient to work in the variables defined in \eqref{def:ham}, however.
\end{remark}

\begin{remark}
Typically the polaron model is considered without confinement, i.e., for $\Omega=\R^3$, in which case the electron-phonon coupling function equals  $(-\Delta_{\R^3})^{-1/2}(x, y) = (2\pi)^{-3} \int_{\R^3} e^{ik\cdot (x-y)} |k|^{-1} dk = (2\pi^2)^{-1} |x-y|^{-2}$. For the proof of our main theorem the compactness of $(-\Delta_{\Omega})^{-1}$ will be important, hence we need to consider bounded sets $\Omega$ here.
\end{remark}

\subsection{Pekar Functional(s)}

We introduce the  classical energy functional corresponding to \eqref{def:ham} as 
\begin{equation}\label{epv}
\E(\psi,\varphi) = \int_\Omega |\nabla \psi(x) |^2 \, dx   - 2 \iint\nolimits_{\Omega\times\Omega} \varphi(x) (-\Delta_\Omega)^{-1/2}(x,y) |\psi(y)|^2 \,dx\,dy  + \int_\Omega \varphi(x)^2\, dx
\end{equation}
where $\psi \in H_0^1(\Omega)$, $\|\psi\|_2=1$, and  $\varphi \in L^2_\R(\Omega)$, the  real-valued functions in $L^2(\Omega)$. Formally, it can be obtained from \eqref{def:ham} by replacing the field operators $a(f)$ and $a^\dagger(f)$ by $\int \varphi(x) f(x) dx$, and taking an expectation value with the electron wave function $\psi$. 
The Pekar energy is
\begin{equation}\label{def:ep}
e^{\rm P} = \min_{\psi,\varphi} \E(\psi,\varphi) \,.
\end{equation}
For $\Omega=\R^3$ it was shown in \cite{DoVa,LT} that $\infspec \Hh\to e^{\rm P}$ as $\alpha\to\infty$. The result can be shown to hold also for general $\Omega$. Our goal here is to compute the subleading correction in this asymptotics.

We will work under the following

\begin{assumption}\label{ass:unique}
The functional $\E$ in \eqref{epv} has a unique minimizer $\psi^{\rm P}, \varphi^{\rm P}$ (up to a trivial constant phase factor for $\psi^{\rm P}$).
\end{assumption}

Our proof works under the more general assumption that the set of minimizers of $\E$ is discrete (up to the phase degeneracy). The case where minimizers form a  continuous manifold requires additional ideas, however.

Since $\E(|\psi|,\varphi)\leq\E(\psi,\varphi)$
we assume from now on that $\psi^{\rm P}$ is non-negative.
For given $\psi$, the choice of the minimizing $\varphi$ is clearly unique, and vice versa. In particular, our  Assumption~\ref{ass:unique} concerns uniqueness of the minimizer of the corresponding Pekar functional
\begin{equation}
\E^{\rm P}(\psi) = \min_\varphi \E(\psi,\varphi) = \int_\Omega |\nabla \psi(x) |^2 \, dx - \iint\nolimits_{\Omega\times\Omega} |\psi(x)|^2 (-\Delta_\Omega)^{-1}(x,y) |\psi(y)|^2 \,dx\,dy\,.
\end{equation}
Recall that, for $\Omega=\R^3$, uniqueness of minimizers of $\E^{\rm P}$ (up to translations and phase factor) is known \cite{Lieb} (see also \cite{Tod}).  We expect Assumption~\ref{ass:unique} to hold if $\Omega$ is convex, for instance. The proof in \cite{Lieb}  can be adapted to show uniqueness in case $\Omega$ is a ball~\cite{dario}.

\begin{assumption}\label{ass:hessian}
There exists a $\kappa>0$ such that 
\begin{equation}\label{ass:hp}
\E^{\rm P}(\psi) \geq \E^{\rm P}(\psi^{\rm P}) +  \kappa \int_{\Omega} \left| \nabla \left(\psi - \psi^{\rm P} \right)\right|^2  \quad \forall \psi \in H_0^1(\Omega), \, \psi\geq 0, \, \|\psi\|_2=1\,.
\end{equation}
\end{assumption}

The bound  \eqref{ass:hp} follows from an a priori weaker spectral assumption on the absence of non-trivial zero modes of the Hessian of $\E^{\rm P}$ at its minimizer $\psi^{\rm P}$, by a simple compactness argument. For completeness, we spell out the details of this argument in Appendix~\ref{sec:compact}. 
 The analogue of this spectral assumption in the case $\Omega=\R^3$  is known (up to zero-modes resulting from the translation invariance) \cite{Lenzmann,WeWi}.
Using the method in \cite{Lenzmann}, one can prove Assumption~\ref{ass:hessian} in case $\Omega$ is a ball \cite{dario}. 

If one minimizes $\E(\psi,\varphi)$ over $\psi$ for given $\varphi$, one obtains the functional
\begin{equation}\label{def:pekarF}
\F^{\rm P}(\varphi) = \min_\psi \E(\psi,\varphi) = \|\varphi\|_2^2 +  \infspec \left( -\Delta_\Omega + V_\varphi(x)  \right) 
\end{equation}
where $V_\varphi = -2 (-\Delta_\Omega)^{-1/2} \varphi$. 
Let $H^{\rm P}$ denote its Hessian at the unique minimizer $\varphi^{\rm P}$, i.e.,
\begin{equation}\label{def:hessian}
\lim_{\eps\to 0} \frac 1{\eps^2} \left( \F^{\rm P}(\varphi^{\rm P} + \eps \varphi) - e^{\rm P} \right) = \langle \varphi | H^{\rm P} | \varphi\rangle \quad \forall \varphi\in L^2_{\R}(\Omega). 
\end{equation}
An explicit computation gives 
\begin{equation}\label{def:hp}
H^{\rm P} = \id  - 4 (-\Delta_\Omega)^{-1/2} \psi^{\rm P} \frac {Q^{\rm P}}{ -\Delta_\Omega + V_{\varphi^{\rm P}}  -\mu^{\rm P}  } \psi^{\rm P} (-\Delta_\Omega)^{-1/2}
\end{equation}
where $\psi^{\rm P}$ acts as a multiplication operator,  $\mu^{\rm P} = \infspec (-\Delta_\Omega + V_{\varphi^{\rm P}}) = e^{\rm P} -\|\varphi^{\rm P}\|_2^2$, and $Q^{\rm P}$ is the projection orthogonal to $\psi^{\rm P}$, i.e., orthogonal to the kernel of $-\Delta_\Omega + V_{\varphi^{\rm P}}-\mu^{\rm P}$. 
It is not difficult to see that Assumption \ref{ass:hessian} implies that  $H^{\rm P}$ is non-degenerate, i.e., strictly positive (compare with Proposition~\ref{hessianglobal} in Section~\ref{ss:hessianglobal} below).

Finally, we need a regularity assumption on the domain $\Omega$.

\begin{assumption}\label{ass:domain}
The domain $\Omega$ is bounded, and has a $C^{3,\delta}$ boundary for some $0<\delta<1$.
\end{assumption}

For a proper definition of the meaning of $C^{3,\delta}$ boundary, see Appendix~\ref{sec:appa}. Assumption~\ref{ass:domain} allows to estimate derivatives of the integral kernel of certain functions of the Dirichlet Laplacian (see Appendix~\ref{sec:appb}). The required estimates certainly hold under less restrictive assumptions on $\Omega$, and we expect our main result to hold also in case $\Omega$ is a cube, for instance. We shall not try to investigate the minimal regularity assumptions, however, and shall henceforth work with Assumption~\ref{ass:domain}.


\subsection{Main Result}\label{ss:mr}

Recall the definition \eqref{def:ep} for the Pekar energy $e^{\rm P}$, as well as (\ref{def:hp}) for the Hessian $H^{\rm P}$ of $\F^{\rm P}$ in \eqref{def:pekarF} at the unique minimizer $\varphi^{\rm P}$. 
Our main result is as follows. 

\begin{theorem}\label{main:thm}
Under Assumptions \ref{ass:unique}--\ref{ass:domain} one has, as $\alpha\to \infty$, 
\begin{equation}\label{eq:thm}
\infspec \Hh = e^{\rm P} -  \frac 1 {2\alpha^{2}} \Tr \left( \id - \sqrt{ H^{\rm P}} \right)  + o(\alpha^{-2})\,.
\end{equation}
More precisely,  the bounds 
\begin{equation}\label{eq:thm2}
 -C \alpha^{-1/7} (\ln \alpha)^{5/14}  \leq \alpha^2 \infspec \Hh - \alpha^2 e^{\rm P} +  \frac 1{2} \Tr \left( \id - \sqrt{ H^{\rm P}} \right) \leq C \alpha^{-2/11}
\end{equation}
hold for some constant $C>0$  and $\alpha$ large enough.
\end{theorem}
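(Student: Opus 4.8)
The plan is to prove the two inequalities in \eqref{eq:thm2} separately, both by expanding $\Hh$ around the classical minimizer $(\psi^{\rm P},\varphi^{\rm P})$ by means of a Weyl transformation. If $W$ denotes the unitary implementing $a(f)\mapsto a(f)+\langle f|\varphi^{\rm P}\rangle$, then the stationarity of $\E$ --- which yields $\varphi^{\rm P}(y)=\int_\Omega|\psi^{\rm P}(x)|^2(-\Delta_\Omega)^{-1/2}(x,y)\,dx$, hence $\int_\Omega|\psi^{\rm P}(x)|^2\,w_x\,dx=0$ for $w_x:=v_x-\varphi^{\rm P}$ --- gives $W^*\Hh W-e^{\rm P}=h^{\rm P}+\N-a(w_x)-a^\dagger(w_x)$, where $x$ is the electron coordinate and $h^{\rm P}:=-\Delta_\Omega+V_{\varphi^{\rm P}}-\mu^{\rm P}\ge 0$ is gapped above its one-dimensional kernel $\C\psi^{\rm P}$. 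The interaction $-a(w_x)-a^\dagger(w_x)$ couples only to electron fluctuations orthogonal to $\psi^{\rm P}$, and since these are gapped, they can be eliminated by an (operator-valued) Schur complement; the leading contribution is precisely the field quadratic form with Hessian $H^{\rm P}$ of \eqref{def:hp}, and the resulting bosonic Bogoliubov Hamiltonian has ground-state energy, relative to the $\N$-vacuum, equal to $-\tfrac1{2\alpha^2}\Tr(\id-\sqrt{H^{\rm P}})$. Assumptions~\ref{ass:unique}--\ref{ass:hessian} guarantee, via Proposition~\ref{hessianglobal}, that $0<H^{\rm P}\le\id$, so this trace is finite and nonnegative, which is where the stated asymptotics and its sign come from.

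For the upper bound I would take the trial state $W\Xi$, where $\Xi$ is (a good approximation to) the ground state of the quadratic Hamiltonian above --- an explicit quasi-free state correlating the gapped electron fluctuation with a squeezed field --- with its field modes truncated to eigenvalues of $-\Delta_\Omega$ below a cutoff $\Lambda$. Evaluating $\langle W\Xi|\Hh|W\Xi\rangle$, the leftover linear term drops by the stationarity identity, the quadratic part reproduces $e^{\rm P}-\tfrac1{2\alpha^2}\Tr(\id-\sqrt{H^{\rm P}})$ up to the part of the trace beyond $\Lambda$, and the remaining errors --- from the truncation, the ultraviolet regularization of $w_x$, the normalization constraint on the electron, and the non-perturbative part of the electron elimination --- are estimated using that $\Xi$ carries $O(\alpha^{-2})$ excitations per mode and that $v_x$ is small at high momenta. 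Optimizing over $\Lambda$ yields the $C\alpha^{-2/11}$ error.

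The lower bound is the harder half and proceeds in three stages. First, \emph{ultraviolet renormalization}: since $v_x\notin L^2(\Omega)$, I would conjugate $\Hh$ by Nelson's Gross transformation $\exp(a^\dagger(\beta_x)-a(\beta_x))$ with $\beta_x$ built from the part of $v_x$ above momentum $\Lambda$, trading the singular linear interaction for a bounded modified potential plus commutator terms; those commutators are in turn controlled by the Lieb--Yamazaki device of writing one factor $|k|^{-1}$ as $|k|^{-2}\cdot|k|$ and integrating the resulting gradient by parts against $-\Delta_\Omega$, iterated several times, using the kernel estimates of Appendix~\ref{sec:appb} (whence Assumption~\ref{ass:domain}). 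Second, \emph{condensation}: for any state $\Psi$ with $\langle\Psi|\Hh|\Psi\rangle\le e^{\rm P}+O(\alpha^{-2})$ I would show that the classical field component is within $o(1)$ of $\varphi^{\rm P}$ and the electron one-particle density matrix is close to $|\psi^{\rm P}\rangle\langle\psi^{\rm P}|$, bootstrapping from the Lieb--Thomas leading-order bound and using the coercivity estimates $\E^{\rm P}(\psi)-e^{\rm P}\ge\kappa\|\nabla(\psi-\psi^{\rm P})\|_2^2$ (Assumption~\ref{ass:hessian}) and $\F^{\rm P}(\varphi)-e^{\rm P}\ge c\|\varphi-\varphi^{\rm P}\|_2^2$ (from $H^{\rm P}>0$). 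Third, \emph{Bogoliubov lower bound}: after conjugating with the Weyl operator centering the now-condensed field at $\varphi^{\rm P}$, I would bound the Hamiltonian below by its quadratic part minus controllable errors --- eliminating the gapped electron fluctuation by an operator-valued Schur complement, showing the number of field excitations above the coherent state is $O(\alpha^{-2})$ up to the cutoff so the remainder terms are of lower order, and bounding what is left below by $-\tfrac1{2\alpha^2}\Tr(\id-\sqrt{H^{\rm P}})$.

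I expect the main obstacle to be the tension between the ultraviolet cutoff and the Bogoliubov analysis: the Gross transformation and the commutator estimates produce errors growing in $\Lambda$, while the condensation and quadratic-form bounds improve as $\Lambda$ grows, so $\Lambda\to\infty$ must be tuned at exactly the right rate --- which is the origin of the unusual exponents $\alpha^{-1/7}(\ln\alpha)^{5/14}$ and $\alpha^{-2/11}$ in \eqref{eq:thm2}. A secondary difficulty is that, unlike for the classical functional $\F^{\rm P}$, the electron cannot literally be minimized out; one must verify that the gapped quantum fluctuation, integrated out at the energy scale $\alpha^{-2}$, reproduces the Hessian \eqref{def:hp} with an error that is genuinely $o(\alpha^{-2})$.
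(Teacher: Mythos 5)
Your identification of the mechanism is correct, and two of your ingredients coincide with the paper's: the subleading term does arise as the ground-state energy of the harmonic system governed by the Hessian $H^{\rm P}$, and the ultraviolet problem is indeed handled by combining the Gross transformation with an iterated Lieb--Yamazaki commutator bound (Sections~\ref{sec:LY} and~\ref{section:GT}; note, though, that the paper performs the Gross transformation at a \emph{fixed} momentum scale $K=O(1)$, only to establish the domain equivalence of Proposition~\ref{lem:GT}, while the $\alpha$-dependent cutoff $\Lambda$ enters separately through the triple commutator estimate). The upper bound is close in spirit, although the paper does not build a squeezed quasi-free state: it takes a Born--Oppenheimer product $\Psi(x,\varphi)=G(\varphi)\,\chi(\cdots)\,\psi_\varphi(x)$ in the Schr\"odinger representation of finitely many field modes, with $\psi_\varphi$ the \emph{exact} ground state of $H_\varphi$ and $G$ the Gaussian with exponent $\alpha^2\langle\varphi-\pp|(\id-\Pi K\Pi)^{1/2}|\varphi-\pp\rangle$; the exact $\psi_\varphi$ encodes the electron--field correlation you propose to engineer in $\Xi$, and the energy then reduces to $\F^{\rm P}(\varphi)+\N$ plus controllable $\partial_\lambda\psi_\varphi$ terms.

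The genuine gap is in your lower bound, at exactly the two places you flag as difficulties. Neither the ``condensation'' step nor the ``operator-valued Schur complement'' is carried out, and both are problematic as stated: the coupling $a(w_x)+a^\dagger(w_x)$ is of order one, so a perturbative elimination of the gapped electron fluctuation does not close without a priori bounds on the field fluctuations, which is precisely what condensation is supposed to supply --- and proving condensation at the accuracy needed to control the non-quadratic remainders to $o(\alpha^{-2})$ is the hard part, not a corollary of the Lieb--Thomas bound. (Also, the global estimate $\F^{\rm P}(\varphi)-e^{\rm P}\gtrsim\|\varphi-\pp\|_2^2$ does not follow from $H^{\rm P}>0$, which is purely local information at $\pp$; it is Proposition~\ref{hessianglobal}, which needs the global coercivity of Assumption~\ref{ass:hessian} together with the Hardy-type Lemma~\ref{hessiangloballemma}.) The paper avoids both issues by a different device: after the cutoff reduces the field to $N<\infty$ modes, it passes to $L^2(\Omega)\otimes L^2(\R^N)$, where $-\Delta_\Omega+V_\varphi$ acts fiberwise in $\varphi$ and can be replaced by $\infspec(-\Delta_\Omega+V_\varphi)$ \emph{exactly}, with no Schur-complement error; this produces the classical functional $\F^{\rm P}(\varphi)$ as a potential on $\R^N$, and what remains is a finite-dimensional Schr\"odinger problem treated by IMS localization --- the local Hessian expansion of Proposition~\ref{hessian} near $\pp$ and the global bound \eqref{lb2} away from it. No statement about low-energy states is ever needed. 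This exact fiberwise minimization of the electron is the key idea missing from your proposal, and it is what makes the error budget, and hence the exponents in \eqref{eq:thm2}, come out.
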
 

The trace in \eqref{eq:thm} and \eqref{eq:thm2} is over $L^2(\Omega)$. We shall see below that $\id - \sqrt{H^{\rm P}}$ is actually trace class. Note also that $H^{\rm P}< \id$, hence the coefficient of $\alpha^{-2}$ in \eqref{eq:thm} is strictly negative.

In the case $\Omega=\R^3$, the correctness of the leading term $e^{\rm P}$ was shown in \cite{DoVa,LT}. The proof in \cite{LT} gives an error bound of the order $\alpha^{-1/5}$. In the confined case considered here, we improve this error bound to $O(\alpha^{-2})$, and actually compute the next order correction. We conjecture that the formula  \eqref{eq:thm} also holds true in case $\Omega=\R^3$, as predicted in the physics literature \cite{Allcock,Alcock,tja,Gross}. Our upper bound, in fact, can easily be generalized to this case. While our methods are not strong enough to prove the corresponding lower bound, parts of our proof are applicable also to the $\Omega=\R^3$ case, and yield an improved error bound compared to the one given in \cite{LT}.

The $\alpha^{-2}$ correction to the ground state energy in \eqref{eq:thm} can be interpreted as arising from quantum fluctuations around the classical limit described by the Pekar functional. The trace originates from the ground state energy of a Hamiltonian describing a system of (infinitely) many harmonic oscillators.

The remainder of the paper is devoted to the proof of Theorem~\ref{main:thm}. We start with a brief outline to guide the reader.

\subsection{Outline of the Proof} 
In Section~\ref{sec:pekar} we study the Pekar functional \eqref{def:pekarF}. We shall compute its Hessian at the unique minimizer $\varphi^{\rm P}$, and use it to estimate the functional in a small neighborhood of its minimizer. We shall also derive a useful quadratic lower bound that is valid globally, i.e., not just close to the minimizer. 

In Section~\ref{sec:upper} we shall derive an upper bound on the ground state energy of $\Hh$ that has the desired asymptotic form as $\alpha\to \infty$. We shall construct an appropriate trial state and utilize the estimate of the Pekar functional close to its minimizer from the previous section. 

Sections~\ref{sec:LY} and~\ref{section:GT} contain auxiliary results that are essential for the lower bound, in particular to allow for an ultraviolet regularization of the problem. In Section~\ref{sec:LY} the commutator method of Lieb and Yamazaki \cite{LY} is applied multiple (in fact, three) times in order to estimate the effect of an ultraviolet cutoff in the coupling function $v_x$ in terms of the  number operator $\N$ and the electron kinetic energy $-\Delta_\Omega$. The relevant operator that needs to be bounded is $\N^{1/2} (-\Delta_\Omega)^{3/2}$, which cannot be controlled in terms of $\Hh^2$, however. The necessary bound does hold after a unitary Gross transformation, which shall be explained in Section~\ref{section:GT}. This will be sufficient for our purpose. 

In Section~\ref{sec:lower} we shall give a lower bound on the ground state energy of $\Hh$ of the desired asymptotic form. We shall use the results of Sections~\ref{sec:LY} and~\ref{section:GT} to implement an ultraviolet cutoff, which effectively reduces the problem to finitely many modes. We shall then use an IMS localization in Fock space and the bounds in Section~\ref{sec:pekar} to conclude the desired lower bound. 

In Appendix~\ref{sec:compact} we shall give an equivalent formulation of Assumption~\ref{ass:hessian} in terms of spectral properties of the Hessian of $\E^{\rm P}$. 
In further appendices we shall derive bounds on derivatives of the integral kernel of certain functions of the Dirichlet Laplacian $\Delta_\Omega$ that we need in our proof. These bounds are derived in Appendix~\ref{sec:appb} utilizing a theorem in Appendix~\ref{sec:appa} on bounds on solutions of Poisson's equation. 

Throughout the proof, we shall use the symbol $a\lesssim b$ if $a\leq C b$ for some constant $C>0$. 


\section{The Pekar Functional}\label{sec:pekar}

\subsection{Hessian of the Pekar Functional}

We consider the Pekar functional \eqref{def:pekarF} and write it as 
\begin{equation}
\F^{\rm P}(\varphi)=e(\varphi)+\|\varphi\|_2^2
\end{equation}
with
\begin{equation}
e(\varphi) = \infspec  H_\varphi \qquad \text{and} \qquad H_\varphi = -\Delta_\Omega + V_\varphi(x) \,.
\end{equation}
Recall that for $\varphi\in L^2_\R(\R^3)$ we set $V_\varphi = -2 (-\Delta_\Omega)^{-1/2} \varphi$. In this section we work under Assumption~\ref{ass:unique} which states that  $\F^{\rm P}(\varphi)$ has a unique minimizer $\pp$. We have $e(\varphi)+\|\varphi\|_2^2 \geq e(\pp) + \|\pp\|_2^2$ and our goal in this section is to obtain upper and lower bounds on the difference.

Recall that $\psi^{\rm P}$ denotes the unique non-negative minimizer of $\E^{\rm P}$, which is the ground state of $H_{\pp}$. We have 
\begin{equation}\label{eq:rvp}
\varphi^{\rm P}=(-\Delta_\Omega)^{-1/2}|\psi^{\rm P}|^2 \,.
\end{equation}
For later use, we record that $\psi^{\rm P}$ is a bounded function.

\begin{lemma}\label{psibounded}
$\psi^{\rm P}\in L^\infty(\Omega)$
\end{lemma}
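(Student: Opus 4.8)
The plan is a routine elliptic bootstrap, the point being that the nonlinear potential $V_{\pp}$ is in fact a \emph{bounded} function, after which boundedness of the ground state $\psi^{\rm P}$ follows from standard three‑dimensional regularity theory. First I would note that $\psi^{\rm P}\in H^1_0(\Omega)$ with $\|\psi^{\rm P}\|_2=1$, so by the Sobolev embedding $H^1_0(\Omega)\hookrightarrow L^6(\Omega)$ the density $\rho:=|\psi^{\rm P}|^2$ lies in $L^3(\Omega)$ (and trivially in $L^1(\Omega)$).

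Next I would show $V_{\pp}\in L^\infty(\Omega)$. By \eqref{eq:rvp} we have $V_{\pp}=-2(-\Delta_\Omega)^{-1/2}\varphi^{\rm P}=-2(-\Delta_\Omega)^{-1}\rho$, i.e.\ $V_{\pp}(x)=-2\int_\Omega G_\Omega(x,y)\,\rho(y)\,dy$ with $G_\Omega$ the Dirichlet Green's function of $-\Delta_\Omega$. Since $0\le G_\Omega(x,y)\le c\,|x-y|^{-1}$ on $\Omega\times\Omega$ (by the maximum principle, the free minus the Dirichlet Green's function is harmonic in $\Omega$ and nonnegative on $\partial\Omega$), and since $|x-\cdot|^{-1}\in L^{3/2}(\Omega)$ with norm bounded uniformly in $x\in\Omega$ (as $\Omega$ is bounded and $\tfrac32<3$), Hölder's inequality with exponents $\tfrac32$ and $3$ gives $\|V_{\pp}\|_\infty\lesssim\|\rho\|_3<\infty$. (Alternatively, Calderón--Zygmund estimates for $-\Delta_\Omega$, valid under Assumption~\ref{ass:domain}, yield directly $(-\Delta_\Omega)^{-1}\rho\in W^{2,3}(\Omega)\hookrightarrow L^\infty(\Omega)$.)

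Finally, since $V_{\pp}$ is bounded, $H_{\pp}=-\Delta_\Omega+V_{\pp}$ is a bounded perturbation of $-\Delta_\Omega$, so $\dom(H_{\pp})=\dom(-\Delta_\Omega)=H^2(\Omega)\cap H^1_0(\Omega)$, where the last identity uses the boundary regularity from Assumption~\ref{ass:domain}. As $\psi^{\rm P}$ is an eigenfunction of $H_{\pp}$ it lies in this domain, and the Sobolev embedding $H^2(\Omega)\hookrightarrow C(\overline\Omega)\subset L^\infty(\Omega)$ (valid since the dimension is $3$) gives $\psi^{\rm P}\in L^\infty(\Omega)$. There is no serious obstacle here; the only mild point requiring care is the behaviour near $\partial\Omega$, which is where Assumption~\ref{ass:domain} enters (both for the Green's function bound / $W^{2,p}$ estimate and for the identification of $\dom(-\Delta_\Omega)$). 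If one wished, one further iteration would even give $\psi^{\rm P}\in C^{0,1/2}(\overline\Omega)$, but plain boundedness is all that is needed.
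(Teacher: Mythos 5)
Your argument is correct, and its first half coincides with the paper's: both start from the Euler--Lagrange equation, use Sobolev to get $|\psi^{\rm P}|^2\in L^3(\Omega)$, and then the pointwise domination of the Dirichlet Green's function by $(4\pi|x-y|)^{-1}$ together with H\"older to conclude $V_{\pp}\in L^\infty(\Omega)$. You diverge only in the last step: you invoke $\dom(-\Delta_\Omega)=H^2(\Omega)\cap H^1_0(\Omega)$ and the embedding $H^2\hookrightarrow L^\infty$ in dimension $3$, which genuinely requires the boundary regularity of Assumption~\ref{ass:domain}. The paper instead simply applies the kernel bound a second time: writing $\psi^{\rm P}=(-\Delta_\Omega)^{-1}f$ with $f=(\mu+2(-\Delta_\Omega)^{-1}|\psi^{\rm P}|^2)\psi^{\rm P}\in L^2(\Omega)$, one gets $\|\psi^{\rm P}\|_\infty\lesssim\||x-\cdot|^{-1}\|_{L^2(\Omega)}\|f\|_2<\infty$ since $|x-\cdot|^{-1}\in L^2$ of a bounded set uniformly in $x$. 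That version needs no boundary regularity at all, so it is the more economical route here, but both are valid under the paper's standing assumptions.
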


\begin{proof}
The Euler--Lagrange equation for $\psi^{\rm P}$ reads $-\Delta_\Omega\psi^{\rm P} - 2((-\Delta_\Omega)^{-1}|\psi^{\rm P}|^2)\psi^{\rm P} = \mu\psi^{\rm P}$ for some $\mu\in\R$, which we rewrite as
\begin{equation}
\psi^{\rm P} = (-\Delta_\Omega)^{-1}\left( \left( \mu + 2((-\Delta_\Omega)^{-1}|\psi^{\rm P}|^2) \right)\psi^{\rm P}\right).
\end{equation}
From \eqref{eq:b2} we deduce that $ (-\Delta_\Omega)^{-1}(x,y) \leq (-\Delta_{\R^3})^{-1}(x,y) =(4\pi |x-y|)^{-1}$. 
 By Sobolev's inequality  $|\psi^{\rm P}|^2\in L^3(\Omega)$, and hence by H\"older's inequality  $(-\Delta_\Omega)^{-1}|\psi^{\rm P}|^2\in L^\infty(\Omega)$. Thus, $f=\left( \mu + 2((-\Delta_\Omega)^{-1}|\psi^{\rm P}|^2) \right)\psi^{\rm P}\in L^2(\Omega)$ and once again by H\"older's inequality, $\psi^{\rm P} = (-\Delta_\Omega)^{-1} f\in L^\infty(\Omega)$, as claimed.
\end{proof}

Let $P=|\psi^{\rm P}\rangle\langle\psi^{\rm P}|$ and $Q=\id-P$. We introduce the following non-negative operators 
\begin{equation}\label{def:K}
K = 4  (-\Delta_\Omega)^{-1/2} \psi^{\rm P} \frac Q{ H_{\pp} -e(\pp)  } \psi^{\rm P}  (-\Delta_\Omega)^{-1/2} 
\end{equation}
and
\begin{equation}
L =  4(-\Delta_\Omega)^{-1/2} \psi^{\rm P} ( -\Delta_\Omega)^{-1} \psi^{\rm P} (-\Delta_\Omega)^{-1/2}
\end{equation}
where $\psi^{\rm P}$ acts as a multiplication operator. 
We shall see that $K = \id - H^{\rm P}$, where $H^{\rm P}$ denotes the Hessian of $\F^{\rm P}(\varphi)$ at $\varphi=\varphi^{\rm P}$, introduced in \eqref{def:hessian} above. 

It is easy to see that $L$ is trace class, since $(-\Delta_\Omega)^{-1/2} \psi ( -\Delta_\Omega)^{-1/2}$ is Hilbert--Schmidt for any multiplication operator $\psi \in L^2(\Omega)$. In fact, since 
$(-\Delta_\Omega)^{-1/2} \leq \sqrt{2} (-\Delta_\Omega+ e_1)^{-1/2}$ (with $e_1 = \infspec (-\Delta_\Omega)>0$) and 
$(-\Delta_\Omega+ e_1)^{-1/2}(x,y) \leq (-\Delta_{\R^3}+e_1)^{-1/2}(x,y)$ for any $x,y\in\R^3$ by \eqref{eq:b2},  the Cauchy--Schwarz inequality implies that 
\begin{equation}\label{argu:hs}
\tr \, [ (-\Delta_\Omega)^{-1/2} \psi ( -\Delta_\Omega)^{-1/2}]^2 \leq  \frac 1{(2\pi)^3} \int_{\R^3} \left( \frac 2 {k^2 + e_1} \right)^2 dk \, \int_\Omega |\psi(x)|^2\, dx \,.
\end{equation}
To show that also  $K$ is trace class, we shall first prove the following lemma, which implies, in particular, that $V_\varphi$ is operator-bounded relative to $-\Delta_\Omega$ if $\varphi\in L^2(\Omega)$. 

\begin{lemma} \label{lem:rb}
With $V_\varphi(x) = -2 (-\Delta_\Omega)^{-1/2} \varphi (x)$, 
we have
\begin{equation}
\left\| V_\varphi (-\Delta_\Omega)^{-1} \right\|^2  
\lesssim \langle \varphi| (-\Delta_\Omega)^{-1} |\varphi \rangle \,.
\end{equation}
\end{lemma}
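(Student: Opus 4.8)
The plan is to write $V_\varphi(-\Delta_\Omega)^{-1} = -2(-\Delta_\Omega)^{-1/2}\varphi\,(-\Delta_\Omega)^{-1}$ (with $\varphi$ a multiplication operator) and estimate its Hilbert--Schmidt norm, which dominates the operator norm. Thus I would first bound
\begin{equation}
\left\| V_\varphi(-\Delta_\Omega)^{-1}\right\|^2 \leq \left\| V_\varphi(-\Delta_\Omega)^{-1}\right\|_{\rm HS}^2 = 4\,\tr\left[ (-\Delta_\Omega)^{-1}\varphi\,(-\Delta_\Omega)^{-1}\varphi\,(-\Delta_\Omega)^{-1}\right],
\end{equation}
using cyclicity of the trace. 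The point is to express this as an integral against the kernel $G(x,y) := (-\Delta_\Omega)^{-1}(x,y)$, namely
\begin{equation}
\tr\left[ (-\Delta_\Omega)^{-1}\varphi\,(-\Delta_\Omega)^{-2}\varphi\right] = \iint_{\Omega\times\Omega} \varphi(x)\varphi(y)\, G(x,y)\, G^{(2)}(x,y)\, dx\, dy,
\end{equation}
where $G^{(2)}(x,y) = (-\Delta_\Omega)^{-2}(x,y) = \int_\Omega G(x,z)G(z,y)\,dz$ is the kernel of $(-\Delta_\Omega)^{-2}$.

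The main estimate I expect to need is a pointwise bound relating $G(x,y)$ and $G^{(2)}(x,y)$. Since by \eqref{eq:b2} one has $0\le G(x,y)\le (4\pi|x-y|)^{-1}$, and $\Omega$ is bounded, the kernel $G^{(2)}$ is in fact bounded: $G^{(2)}(x,y) \le \int_{\R^3}(4\pi|x-z|)^{-1}(4\pi|z-y|)^{-1}\,dz$ over the bounded set $\Omega$ is a finite constant (the Newtonian potential convolved with itself over a bounded region is uniformly bounded in three dimensions). Hence $G(x,y)G^{(2)}(x,y) \lesssim G(x,y)$, and by Cauchy--Schwarz in the form $|\varphi(x)\varphi(y)| \le \tfrac12(\varphi(x)^2+\varphi(y)^2)$ together with the symmetry of $G$,
\begin{equation}
\iint_{\Omega\times\Omega}\varphi(x)\varphi(y)\,G(x,y)\,G^{(2)}(x,y)\,dx\,dy \lesssim \iint_{\Omega\times\Omega}\varphi(x)^2\,G(x,y)\,dx\,dy .
\end{equation}
But $\int_\Omega G(x,y)\,dy = \left((-\Delta_\Omega)^{-1}\id\right)(x)$ is again bounded uniformly on $\Omega$ (comparison with the torsion function of a ball containing $\Omega$, or directly via the $(4\pi|x-y|)^{-1}$ bound). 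So the right side is $\lesssim \|\varphi\|_2^2$.

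This already gives $\|V_\varphi(-\Delta_\Omega)^{-1}\|^2 \lesssim \|\varphi\|_2^2$, which is slightly weaker than the claimed bound $\lesssim \langle\varphi|(-\Delta_\Omega)^{-1}|\varphi\rangle$; the latter is sharper since $(-\Delta_\Omega)^{-1}\le e_1^{-1}\id$. To get the sharper version I would instead pull one factor of $(-\Delta_\Omega)^{-1/2}$ onto each $\varphi$: write the trace as $4\,\tr\left[ \big((-\Delta_\Omega)^{-1/2}\varphi(-\Delta_\Omega)^{-1/2}\big)\big((-\Delta_\Omega)^{-1}\big)\big((-\Delta_\Omega)^{-1/2}\varphi(-\Delta_\Omega)^{-1/2}\big)\right]$ — but this does not obviously produce $\langle\varphi|(-\Delta_\Omega)^{-1}|\varphi\rangle$ either. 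The cleanest route is probably to keep $V_\varphi(-\Delta_\Omega)^{-1} = V_\varphi(-\Delta_\Omega)^{-1/2}\cdot(-\Delta_\Omega)^{-1/2}$ and note $\|(-\Delta_\Omega)^{-1/2}\| = e_1^{-1/2}$ is harmless, so it suffices to bound $\|V_\varphi(-\Delta_\Omega)^{-1/2}\|_{\rm HS}^2 = 4\,\tr\left[(-\Delta_\Omega)^{-1/2}\varphi(-\Delta_\Omega)^{-1}\varphi(-\Delta_\Omega)^{-1/2}\right] = 4\iint \varphi(x)\varphi(y)\,G(x,y)^2\,dx\,dy$ — wait, that is a different kernel and controlling $\iint\varphi(x)\varphi(y)G(x,y)^2$ by $\langle\varphi|G|\varphi\rangle$ requires $G(x,y)^2 \lesssim $ something like $G(x,y)\delta(x-y)$, which is false. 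So I would finally settle on: bound $\|V_\varphi(-\Delta_\Omega)^{-1}\|_{\rm HS}^2 = 4\iint\varphi(x)\varphi(y)G(x,y)G^{(2)}(x,y)\,dx\,dy$ and use that the operator with kernel $G(x,y)G^{(2)}(x,y)$ is, by Schur's test (rows and columns bounded, using the uniform bounds on $\int G$ and the sup bound on $G^{(2)}$), bounded above in the sense of quadratic forms by $C\,(-\Delta_\Omega)^{-1}$ — the cleanest way to see the latter is that $G(x,y)G^{(2)}(x,y)\le \|G^{(2)}\|_\infty\,G(x,y)$ pointwise, so the corresponding operator is $\le \|G^{(2)}\|_\infty (-\Delta_\Omega)^{-1}$ as operators, hence $\iint\varphi\varphi\,G\,G^{(2)} \le \|G^{(2)}\|_\infty\,\langle\varphi|(-\Delta_\Omega)^{-1}|\varphi\rangle$. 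That gives exactly the claimed inequality.

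\textbf{Main obstacle.} The only real work is establishing that $G^{(2)} = (-\Delta_\Omega)^{-2}(x,y)$ is a bounded kernel on $\Omega\times\Omega$; everything else is Cauchy--Schwarz and cyclicity of the trace. This boundedness follows from the comparison $0\le G(x,y)\le(4\pi|x-y|)^{-1}$ provided by \eqref{eq:b2} together with the elementary fact that $\int_{\R^3}|x-z|^{-1}|z-y|^{-1}\,dz$, restricted to $z$ ranging over the bounded set $\Omega$, is uniformly bounded in $x,y$ — a standard Newtonian-potential estimate in dimension three. One must be slightly careful that $v_x = (-\Delta_\Omega)^{-1/2}(x,\cdot)$ and the multiplication-operator manipulations are justified; since $\varphi\in L^2(\Omega)$ and the Hilbert--Schmidt norms computed above are finite, all the trace identities are legitimate, and a density argument (first $\varphi\in C_c^\infty$, then limit) removes any lingering domain issues.
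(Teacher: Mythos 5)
The proposal goes off track at the very first line: you read $V_\varphi(-\Delta_\Omega)^{-1}$ as the operator composition $-2(-\Delta_\Omega)^{-1/2}\,\varphi\,(-\Delta_\Omega)^{-1}$, with $\varphi$ acting as a multiplication operator sandwiched between two powers of the resolvent. In the paper, however, $V_\varphi$ is itself a multiplication operator: it is multiplication by the \emph{function} $V_\varphi = -2(-\Delta_\Omega)^{-1/2}\varphi \in L^2(\Omega)$, i.e.\ the potential appearing in $H_\varphi=-\Delta_\Omega+V_\varphi(x)$ (the whole point of the lemma is relative boundedness of that potential). Consequently everything downstream --- the cyclicity-of-the-trace identity, the kernel $G(x,y)G^{(2)}(x,y)$, the boundedness of $(-\Delta_\Omega)^{-2}(x,y)$ --- concerns a different operator and does not establish the stated inequality. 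Once $V_\varphi$ is read correctly, the proof is much shorter than what you attempt: the right-hand side $\langle\varphi|(-\Delta_\Omega)^{-1}|\varphi\rangle=\|(-\Delta_\Omega)^{-1/2}\varphi\|_2^2$ is exactly $\tfrac14\|V_\varphi\|_{L^2(\Omega)}^2$, and
\[
\left\|V_\varphi(-\Delta_\Omega)^{-1}\right\|_{\rm HS}^2=\iint_{\Omega\times\Omega}|V_\varphi(x)|^2\left[(-\Delta_\Omega)^{-1}(x,y)\right]^2dx\,dy\le \|V_\varphi\|_2^2\,\sup_{x}\int_\Omega\left[(-\Delta_\Omega)^{-1}(x,y)\right]^2dy\,,
\]
where the supremum is finite by the comparison $(-\Delta_\Omega)^{-1}(x,y)\le(4\pi|x-y|)^{-1}$ from \eqref{eq:b2} together with the boundedness of $\Omega$; this is the ``argument as in \eqref{argu:hs}'' that the paper's one-line proof alludes to.

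A secondary remark, relevant even for the operator you did bound: the step ``$G(x,y)G^{(2)}(x,y)\le\|G^{(2)}\|_\infty G(x,y)$ pointwise, so the corresponding operator is $\le\|G^{(2)}\|_\infty(-\Delta_\Omega)^{-1}$ as operators'' is not valid as justified. Pointwise domination of non-negative integral kernels does not imply domination in the quadratic-form sense for sign-changing $\varphi$: a $2\times2$ matrix with zero diagonal and positive off-diagonal entries is entrywise dominated by a positive multiple of the all-ones matrix, yet their difference is not positive semi-definite. To make that step rigorous you would need either $\varphi\ge0$ or a genuine operator-level argument (a Schur-product-type inequality), neither of which is available or invoked here.
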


\begin{proof}
Note that the right side is simply the square of the $L^2$-norm of $V_\varphi$. By arguing as in \eqref{argu:hs} one readily checks that the desired bound even holds with the Hilbert--Schmidt norm on the left side.
%
\end{proof}

A straightforward modification of the proof shows that $V_\varphi$ is actually infinitesimally operator-bounded relative to $\Delta_\Omega$, i.e., $\lim_{\kappa\to \infty} \| V_\varphi (-\Delta_\Omega + \kappa)^{-1} \| = 0$. This readily implies that $(-\Delta_\Omega)^{1/2}  \frac Q{ H_{\pp} -e(\pp)  }  (-\Delta_\Omega)^{1/2}$ is bounded, hence the trace class property of $K$ follows from the one of $L$.

Our main result in this section is the following.

\begin{proposition}\label{hessian}
Assume that $\varphi\in L^2_\R(\Omega)$ is such that
\begin{equation}\label{eq:cc}
\|(-\Delta_\Omega)^{-1/2}(\varphi - \pp)\|_2 \leq  \eps
\end{equation}
for  $\epsilon>0$ small enough. Then
\begin{equation}\label{lbh}
\left|  \F^{\rm P}(\varphi)  - \F^{\rm P}(\pp)  -    \langle \varphi- \pp |  \id - K  | \varphi - \pp \rangle\right | 
\lesssim\eps\,  \langle \varphi- \pp |   L | \varphi - \pp \rangle\,.
\end{equation}
\end{proposition}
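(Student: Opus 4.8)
The strategy is to expand $\F^{\rm P}(\varphi) = e(\varphi) + \|\varphi\|_2^2$ around $\pp$ using second-order (Rayleigh--Schr\"odinger-type) perturbation theory for the eigenvalue $e(\varphi) = \infspec H_\varphi$, with the perturbation parameter being the (smallness of the) difference $w \coloneqq \varphi - \pp$ measured in the norm $\|(-\Delta_\Omega)^{-1/2} w\|_2$ appearing in \eqref{eq:cc}. Writing $H_\varphi = H_{\pp} + V_w$ with $V_w = -2(-\Delta_\Omega)^{-1/2} w$, the key point is that by Lemma~\ref{lem:rb} the perturbation $V_w$ is relatively bounded with respect to $-\Delta_\Omega$ (hence relatively bounded with respect to $H_{\pp}$, its form domain being the same), with relative bound controlled by $\|(-\Delta_\Omega)^{-1/2} w\|_2 = \tfrac12\|V_w\|_2 \lesssim \eps$. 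Since $\psi^{\rm P}$ is a simple, isolated ground state of $H_{\pp}$ (simplicity from Assumption~\ref{ass:unique} via uniqueness of the nonnegative minimizer; the spectral gap $H_{\pp} - e(\pp) \geq (\text{gap}) \, Q$ holds because $-\Delta_\Omega$ has compact resolvent so the spectrum is discrete), analytic perturbation theory gives that $e(\varphi)$ is real-analytic in $w$ for $\eps$ small, with
\[
e(\varphi) = e(\pp) + \langle \psi^{\rm P} | V_w | \psi^{\rm P}\rangle - \langle \psi^{\rm P} | V_w \frac{Q}{H_{\pp} - e(\pp)} V_w | \psi^{\rm P}\rangle + R,
\]
where $R$ collects the cubic and higher terms. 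A short computation using \eqref{eq:rvp}, i.e. $\langle\psi^{\rm P}|V_w|\psi^{\rm P}\rangle = -2\langle w | (-\Delta_\Omega)^{-1/2}|\psi^{\rm P}|^2\rangle = -2\langle w|\pp\rangle$, shows the first-order term combines with the quadratic part of $\|\varphi\|_2^2 = \|\pp\|_2^2 + 2\langle \pp | w\rangle + \|w\|_2^2$ so that the linear-in-$w$ contributions cancel (consistent with $\pp$ being a critical point), leaving the $\|w\|_2^2$ term from the norm and $-\langle\psi^{\rm P}|V_w \frac{Q}{H_{\pp}-e(\pp)} V_w|\psi^{\rm P}\rangle = -\langle w | K | w\rangle$ (unfolding the definition \eqref{def:K} of $K$ and the identity $V_w = -2(-\Delta_\Omega)^{-1/2}w$, with the $\psi^{\rm P}$'s becoming multiplication operators). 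This identifies the quadratic form as $\langle w | \id - K | w\rangle$ and simultaneously shows $K = \id - H^{\rm P}$ by comparison with \eqref{def:hp}.

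\textbf{Controlling the remainder.} The substance of the proof is the estimate $|R| \lesssim \eps \, \langle w | L | w\rangle$ on the higher-order terms. The natural route is the Feshbach/Schur-complement representation: $e(\varphi)$ is the unique solution near $e(\pp)$ of $z = e(\pp) + \langle \psi^{\rm P} | V_w | \psi^{\rm P}\rangle + F(z)$ with $F(z) = -\langle \psi^{\rm P}| V_w \frac{Q}{H_{\pp} + Q V_w Q - z} Q V_w | \psi^{\rm P}\rangle$ (suitably interpreted via the Birman--Schwinger-type kernel), and then $R = F(e(\varphi)) - F(e(\pp))|_{\text{quadratic part}}$ can be written as a geometric series in the operator $(-\Delta_\Omega)^{1/2}\frac{Q}{H_{\pp}-e(\pp)}(-\Delta_\Omega)^{1/2}$ (bounded, by the remark following Lemma~\ref{lem:rb}) sandwiching factors of $(-\Delta_\Omega)^{-1/2} V_w (-\Delta_\Omega)^{-1/2}$, of which there are at least three, plus a term from the shift $e(\varphi) - e(\pp) = O(\eps^2)$ in the resolvent. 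Each extra factor of $(-\Delta_\Omega)^{-1/2}|w|(-\Delta_\Omega)^{-1/2}$ beyond the two already present in $K$ contributes in operator norm a factor $\lesssim \|(-\Delta_\Omega)^{-1/2}w\|_2 \lesssim \eps$ (again by the Hilbert--Schmidt bound in the style of \eqref{argu:hs}), while the ``anchoring'' structure $\psi^{\rm P}(-\Delta_\Omega)^{-1}\psi^{\rm P}$ left over after pulling out one such factor at each end is exactly $L/4$; this yields the bound $\lesssim \eps\langle w|L|w\rangle$. One must check that $\langle w | L | w\rangle$ is comparable to (and in the relevant regime not much smaller than) $\|(-\Delta_\Omega)^{-1/2}\psi^{\rm P}(-\Delta_\Omega)^{-1/2}w\|^2$-type quantities so that the trilinear estimate genuinely closes.

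\textbf{Main obstacle.} The delicate point is the bookkeeping of the perturbative remainder in a way that produces the \emph{specific} right-hand side $\eps\langle w | L|w\rangle$ rather than a cruder bound like $\eps\,\|(-\Delta_\Omega)^{-1/2}w\|_2^2$ — the latter would be too weak for the application in Sections~\ref{sec:upper} and~\ref{sec:lower}, where $L$ being trace class (with small tail) is what makes the error term $o(\alpha^{-2})$ after summing over modes. Concretely, one needs to extract one factor of $(-\Delta_\Omega)^{-1/2}$ acting on $w$ from \emph{each} end of every higher-order term and bound the operator-norm of everything sandwiched in between uniformly; the term arising from the resolvent shift $H_{\pp}-e(\varphi)$ versus $H_{\pp}-e(\pp)$ requires writing $e(\varphi)-e(\pp)$ itself as a quadratic form in $w$ dominated by $\langle w | L | w\rangle$ (up to the bounded factor) so that it too fits the pattern. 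Handling the non-self-adjointness of the Feshbach operator $H_{\pp} + QV_wQ - z$ (it is still sectorial/accretive for $\eps$ small, so its resolvent is bounded on the relevant region) and justifying convergence of the geometric series in a fixed operator topology is routine but must be done with the $(-\Delta_\Omega)$-relative bounds, not with the ill-defined $\|V_w\|$.
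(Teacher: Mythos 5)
Your plan is correct and rests on exactly the same two pillars as the paper's proof: (i) Lemma~\ref{lem:rb} turns the hypothesis \eqref{eq:cc} into a relative operator bound $\|V_{\varphi-\pp}(-\Delta_\Omega)^{-1}\|\lesssim\eps$, which makes the perturbation series for the isolated simple ground-state eigenvalue converge and identifies the quadratic form $\langle\varphi-\pp|\id-K|\varphi-\pp\rangle$; and (ii) the remainder, being at least cubic in $V_{\varphi-\pp}$, is bounded by Cauchy--Schwarz, pulling the two outermost factors $(-\Delta_\Omega)^{-1/2}V_{\varphi-\pp}\psi^{\rm P}$ into the form $\langle\varphi-\pp|L|\varphi-\pp\rangle$ exactly and estimating everything in between in the $(-\Delta_\Omega)^{1/2}\,\cdot\,(-\Delta_\Omega)^{1/2}$-weighted operator norm by $O(\eps)$ (this is the paper's \eqref{cl1}; note your worry that the trilinear estimate might not ``close'' is unfounded, since $\|(-\Delta_\Omega)^{-1/2}V_{\varphi-\pp}\psi^{\rm P}\|_2^2=\langle\varphi-\pp|L|\varphi-\pp\rangle$ identically). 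The only real difference is organizational: you set up the expansion via the Feshbach map and the implicit equation $z=e(\pp)+\langle\psi^{\rm P}|V_w|\psi^{\rm P}\rangle+F(z)$, which forces you to control the non-self-adjoint Schur complement and the shift $e(\varphi)-e(\pp)$ inside the resolvent, whereas the paper writes $e(\varphi)=\Tr\int_C z(z-H_\varphi)^{-1}\,\tfrac{dz}{2\pi i}$ over a fixed contour, applies the resolvent identity to reduce to a rank-one term, and expands $(\id-A)^{-1}(\id-B)^{-1}$ algebraically so that all $z$-dependence is uniformly controlled on $C$ and no implicit equation ever appears. Both routes work; the contour-integral version simply converts the steps you flag as ``routine but delicate'' into norm bounds that are manifestly uniform in $z\in C$.
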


This result implies, in particular, that $0\leq K\leq \id$. It identifies $H^P= \id -K$ as the Hessian of $\F^{\rm P}(\varphi) = e(\varphi) + \|\varphi\|_2^2$ at the minimizer $\pp$. Our assumption on the strict positivity of the  Hessian thus translates, in view of the compactness of $K$, to the statement $\|K\|<1$.

\begin{proof}
By choosing $\eps>0$ small enough and arguing as in the proof of Lemma \ref{lem:rb} we can ensure that the family of operators $-\Delta_\Omega + V_\varphi(x)$ has a unique eigenvalue close to $e(\pp)$ and this eigenvalue is $e(\varphi)$. The rest of the spectrum of $H_\varphi$ is uniformly bounded away from $e(\pp)$. Hence we can write
\begin{equation}\label{eq:ci}
e(\varphi) =  \Tr \int_C \frac z{z- H_\varphi} \frac{dz}{2\pi i}
\end{equation}
for a fixed (i.e., $\varphi$-independent) contour $C$ that encircles $e(\pp)$. 

We claim   that the operator $\Delta_\Omega(z-H_\pp)^{-1}$ is uniformly bounded for $z\in C$. This follows from the fact that the multiplication operator $V_{\varphi^{\rm P}}$ is infinitesimally operator-bounded relative to $-\Delta_\Omega$, as already argued after the proof of  Lemma~\ref{lem:rb} above. Consequently,  
\begin{equation}\label{3.12}
\sup_{z\in C} \left\| V_{\varphi-\pp} (z-H_\pp)^{-1} \right\| < 1
\end{equation}
for small $\epsilon$, by Lemma~\ref{lem:rb} and our assumption \eqref{eq:cc}. We can thus use the resolvent identity in the form
\begin{align}\nonumber
\frac 1{z- H_\varphi} & = \left( \id - \frac Q{z-H_\pp} V_{\varphi-\pp} \right)^{-1} \frac Q{z- H_\pp}  \\ & \quad + \left( \id - \frac Q{z-H_\pp} V_{\varphi-\pp} \right)^{-1}  \frac P{z- e(\pp)} \left( \id -  V_{\varphi-\pp} \frac 1{z-H_\pp} \right)^{-1} \,.
\end{align}
The first term on the right side is analytic in $z$ for all $z$ inside the contour $C$, and hence gives zero after integration when inserted in \eqref{eq:ci}. The second term is rank one, and Fubini's theorem implies that we can interchange the trace and the integral after inserting this term in \eqref{eq:ci}. We thus obtain
\begin{align}\nonumber 
& e(\varphi) \\ &  =   \int_C \frac z{z- e(\pp)} \left\langle \psi^{\rm P} \left|  \left( \id -  V_{\varphi-\pp} \frac 1{z-H_\pp} \right)^{-1} \left( \id - \frac Q{z-H_\pp} V_{\varphi-\pp} \right)^{-1}   \right| \psi^{\rm P} \right\rangle \frac{dz}{2\pi i} \,. \label{eq:ci2}
\end{align}

For simplicity, let us introduce the notation
\begin{equation}
A = V_{\varphi-\pp} \frac 1{z-H_\pp}  \quad , \quad B =  \frac Q{z-H_\pp} V_{\varphi-\pp} \,.
\end{equation}
Because of \eqref{3.12}  these operators are smaller than $1$ in norm, uniformly in $z\in C$. 
We shall use the identity 
\begin{align}\nonumber
\frac 1{\id-A} \frac 1{\id-B} & = \id + A  + A (A+B) +  \frac{B}{\id-B} \\ & \quad + \frac {A^3}{\id-A} +  \frac {A^2}{\id-A}  B +  \frac A{\id-A} \frac {B^2}{\id-B}\,.
\end{align}
We insert the various terms into \eqref{eq:ci2} and do the contour integration. The term $\id$ then yields $e(\pp)$. The term $A$ yields
\begin{equation}
 \langle \psi^{\rm P} | V_{\varphi - \pp} | \psi^{\rm P} \rangle = 2 \int_{\Omega} \pp(x) \left(\pp(x)  - \varphi(x) \right) dx
\end{equation}
using \eqref{eq:rvp}. A standard calculation shows that the term $A(A+B)$ leads to 
\begin{equation}
\langle \psi^{\rm P}  |  V_{\varphi- \pp}  \frac Q{ e(\pp) - H_{\pp}}  V_{\varphi - \pp}  | \psi^{\rm P}\rangle = - \langle \varphi- \pp |   K  | \varphi - \pp \rangle \,.
\end{equation}
Furthermore, since $Q| \psi^{\rm P}\rangle  = 0$, the term $B(\id-B)^{-1}$ yields zero. We conclude that 
\begin{align}\nonumber
 &\F^{\rm P}(\varphi)  - \F^{\rm P}(\pp)  -    \langle \varphi- \pp |  \id - K  | \varphi - \pp \rangle 
\\ & =   \int_C \frac z{z- e(\pp)} \left\langle \psi^{\rm P} \left|  \frac {A^3}{\id-A} +  A \left(  \frac {A}{\id-A}  +  \frac 1{\id-A} \frac {B}{\id-B}\right) B 
 \right| \psi^{\rm P} \right\rangle \frac{dz}{2\pi i}\,. \label{3.18}
\end{align}

To bound the first term on the right side of \eqref{3.18}, note that 
\begin{equation}\label{et1}
\left\langle \psi^{\rm P} \left|  \frac {A^3}{\id-A}  \right| \psi^{\rm P} \right\rangle = \frac 1{z-e(\pp)} \left\langle \psi^{\rm P} \left|  V_{\varphi-\pp} \frac 1{z-H_\pp} \frac {A}{\id-A} V_{\varphi-\pp}  \right| \psi^{\rm P} \right\rangle\,.
\end{equation}
We claim that 
\begin{equation}\label{cl1}
\sup_{z\in C} \left\| (-\Delta_\Omega)^{1/2}  \frac 1{z-H_\pp} \frac {A}{\id-A} (-\Delta_\Omega)^{1/2}  \right\| 
\lesssim \eps
\end{equation}
which implies that \eqref{et1} is bounded, in absolute value,  as 
\begin{equation}\label{eq:des}
| \eqref{et1} | \lesssim  \eps 
\, \left\langle \psi^{\rm P} \left|  V_{\varphi-\pp} (-\Delta_\Omega)^{-1}  V_{\varphi-\pp}  \right| \psi^{\rm P} \right\rangle =  \eps\, \langle \varphi- \pp |   L  | \varphi - \pp \rangle \,,
\end{equation}
as desired.  To prove \eqref{cl1} we use the fact that$\| (-\Delta_\Omega)^{1/2} (z-H_\pp)^{-1} (-\Delta_\Omega)^{1/2} \| $  is uniformly bounded   to reduce the problem to showing  $\| (-\Delta_\Omega)^{-1/2} A (\id - A)^{-1} (-\Delta_\Omega)^{1/2} \| \lesssim  \eps$. Since $S^{-1} A (\id-A)^{-1} S = S^{-1} A S (\id - S^{-1} A S)^{-1}$ with $S= (-\Delta_\Omega)^{1/2}$, it suffices to show that $\| (-\Delta_\Omega)^{-1/2} A  (-\Delta_\Omega)^{1/2} \| \lesssim  \eps$, which follows from 
  $\| (-\Delta_\Omega)^{-1/2} V_\varphi  (-\Delta_\Omega)^{-1/2} \| \leq  \|  V_\varphi  (-\Delta_\Omega)^{-1} \| $ and Lemma~\ref{lem:rb}.

For the last term in \eqref{3.18}, we simply bound 
\begin{align}\nonumber
& \left| 
\left\langle \psi^{\rm P} \left|  A \left(  \frac {A}{\id-A}  +  \frac 1{\id-A} \frac {B}{\id-B}\right) B 
 \right| \psi^{\rm P} \right\rangle \right| \\ & \leq \left\| \frac {A}{\id-A}  +  \frac 1{\id-A} \frac {B}{\id-B}  \right\|  \langle \psi^{\rm P} | A A^\dagger | \psi^{\rm P} \rangle ^{1/2} \langle \psi^{\rm P} | B^\dagger B | \psi^{\rm P} \rangle ^{1/2} \,.
\end{align}
The same bounds as above easily lead to the conclusion that also this term is bounded by the right side of \eqref{eq:des}. This concludes the proof of  Proposition \ref{hessian}.
\end{proof}


\subsection{A Uniform Quadratic Lower Bound}\label{ss:hessianglobal}

Inequality \eqref{lbh} gives a bound on $\F^{\rm P}$ for $\varphi$ near the minimizer $\pp$. We shall also need the following rougher global bound. 

\begin{proposition}\label{hessianglobal}
There is a constant $\kappa'>0$ such that for all $\varphi\in L_\R^2(\Omega)$,
\begin{equation}\label{lb2}
\F^{\rm P}(\varphi) \geq e^{\rm P} + \left\langle \varphi-\pp  \left| \id - \left(\id+\kappa' (-\Delta_\Omega)^{1/2} \right)^{-1} \right|  \varphi - \pp \right\rangle\,.
\end{equation}
\end{proposition}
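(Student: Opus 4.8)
The plan is to establish the bound in two regimes, depending on the size of $\|(-\Delta_\Omega)^{-1/2}(\varphi-\pp)\|_2$, and then patch them together. For $\varphi$ close to the minimizer, Proposition~\ref{hessian} already gives $\F^{\rm P}(\varphi)-e^{\rm P} \geq \langle \varphi-\pp|\id - K|\varphi-\pp\rangle - C\eps\langle\varphi-\pp|L|\varphi-\pp\rangle$; since $\|K\|<1$ (the strict positivity of the Hessian following from Assumption~\ref{ass:hessian}) and since $L$ is a bounded operator controlled by $(-\Delta_\Omega)^{-1}$, for $\eps$ small enough the right side is bounded below by $c\,\|(-\Delta_\Omega)^{-1/2}(\varphi-\pp)\|_2^2$ for some $c>0$, and in fact by $\langle\varphi-\pp|\id-(\id+\kappa'(-\Delta_\Omega)^{1/2})^{-1}|\varphi-\pp\rangle$ once $\kappa'$ is chosen large enough, because the latter quadratic form is dominated by $\min\{1,\kappa'^{-1}\}$ times $\langle\varphi-\pp|(-\Delta_\Omega)^{-1/2}|\varphi-\pp\rangle$, which can be made as small as we like relative to $c\|(-\Delta_\Omega)^{-1/2}(\varphi-\pp)\|_2^2$. (Note the operator $\id-(\id+\kappa'(-\Delta_\Omega)^{1/2})^{-1}$ is bounded, with norm $<1$, and behaves like $\kappa'(-\Delta_\Omega)^{1/2}$ for small momenta and like $\id$ for large momenta, so the right-hand side of \eqref{lb2} is always finite.)

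The substance is the complementary regime, where $\|(-\Delta_\Omega)^{-1/2}(\varphi-\pp)\|_2 \geq \eps$. Here I would argue by a compactness/continuity argument using Assumption~\ref{ass:unique}. First I would record the basic lower bound $\F^{\rm P}(\varphi)\geq e^{\rm P}$ with equality only at $\varphi=\pp$. The key is to upgrade this to a quantitative statement. Writing $\F^{\rm P}(\varphi)=\|\varphi\|_2^2+e(\varphi)$ with $e(\varphi)=\infspec(-\Delta_\Omega+V_\varphi)$, and recalling $e(\varphi)$ is concave in $\varphi$ (an infimum of linear functionals $\langle\psi|-\Delta_\Omega+V_\varphi|\psi\rangle$), the map $\varphi\mapsto \F^{\rm P}(\varphi)-\langle\varphi-\pp|\id-(\id+\kappa'(-\Delta_\Omega)^{1/2})^{-1}|\varphi-\pp\rangle$ has a useful structure: the subtracted quadratic form is bounded and strictly less than $\|\varphi-\pp\|_2^2$, so for $\kappa'$ large it is a small perturbation of $\F^{\rm P}(\varphi)-\|\varphi-\pp\|_2^2$. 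Expanding $\|\varphi\|_2^2 = \|\pp\|_2^2 + 2\langle\pp|\varphi-\pp\rangle + \|\varphi-\pp\|_2^2$ and using the variational inequality $e(\varphi)\geq e(\pp)+\langle\psi^{\rm P}|V_{\varphi-\pp}|\psi^{\rm P}\rangle = e(\pp) - 2\langle\pp|\varphi-\pp\rangle$ (from \eqref{eq:rvp}), one gets the clean identity-level bound $\F^{\rm P}(\varphi)\geq e^{\rm P}+\|\varphi-\pp\|_2^2 + \big(e(\varphi)-e(\pp)+2\langle\pp|\varphi-\pp\rangle\big)$, where the parenthesized term is $\geq 0$. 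Thus $\F^{\rm P}(\varphi)-e^{\rm P}\geq\|\varphi-\pp\|_2^2$ for \emph{all} $\varphi$, and since $\id-(\id+\kappa'(-\Delta_\Omega)^{1/2})^{-1}\leq\id$ this already gives \eqref{lb2} with room to spare in the $L^2$-direction; the point of the $(-\Delta_\Omega)^{1/2}$-weight on the right is precisely that it is \emph{weaker} than $\id$, so the inequality should in fact follow directly from $\F^{\rm P}(\varphi)-e^{\rm P}\geq\|\varphi-\pp\|_2^2\geq\langle\varphi-\pp|\id-(\id+\kappa'(-\Delta_\Omega)^{1/2})^{-1}|\varphi-\pp\rangle$.

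If the elementary bound above truly suffices then the proposition is immediate and the only role of $\kappa'$ is cosmetic; I suspect, however, that the subtlety is that one wants $\kappa'$ to be chosen compatibly with the constant appearing in Proposition~\ref{hessian}'s neighborhood, so that \eqref{lb2} meshes with \eqref{lbh} in the application, and the genuine content is verifying $e(\varphi)-e(\pp)\geq -2\langle\pp|\varphi-\pp\rangle$ together with the boundedness and norm bound of $\id-(\id+\kappa'(-\Delta_\Omega)^{1/2})^{-1}$. The main obstacle I anticipate is not the near-minimizer regime (handled by Proposition~\ref{hessian}) but making sure the global variational inequality for $e(\varphi)$ is applied correctly — in particular that $\psi^{\rm P}$ is an admissible trial state for $H_\varphi$ and that $\langle\psi^{\rm P}|V_{\varphi-\pp}|\psi^{\rm P}\rangle$ is finite, which is guaranteed by Lemma~\ref{lem:rb} and $\psi^{\rm P}\in L^\infty$ (Lemma~\ref{psibounded}) — and then choosing $\kappa'$ uniformly. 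I would structure the writeup as: (i) state the elementary two-term lower bound $\F^{\rm P}(\varphi)\geq e^{\rm P}+\|\varphi-\pp\|_2^2$ via the variational inequality for $e(\varphi)$; (ii) observe $\id-(\id+\kappa'(-\Delta_\Omega)^{1/2})^{-1}\leq\id$ in the form sense; (iii) conclude, noting any $\kappa'>0$ works for \eqref{lb2} as stated, with the specific value of $\kappa'$ fixed later when matching with Proposition~\ref{hessian}.
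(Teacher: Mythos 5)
Your argument for the ``complementary regime'' contains a sign error that is fatal. You claim the variational inequality $e(\varphi) \ge e(\pp) + \langle\psi^{\rm P}|V_{\varphi-\pp}|\psi^{\rm P}\rangle$. But inserting $\psi^{\rm P}$ as a trial state into $e(\varphi) = \infspec(-\Delta_\Omega + V_\varphi) = \inf_\psi\langle\psi|H_{\pp} + V_{\varphi - \pp}|\psi\rangle$ yields an \emph{upper} bound, $e(\varphi) \le e(\pp) + \langle\psi^{\rm P}|V_{\varphi-\pp}|\psi^{\rm P}\rangle$; equivalently, the concavity of $\varphi\mapsto e(\varphi)$ that you yourself invoke forces the tangent functional at $\pp$ to lie \emph{above} $e$, not below. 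Consequently the parenthesized term in your ``clean identity-level bound'' is $\le 0$ rather than $\ge 0$, and the bound $\F^{\rm P}(\varphi)\ge e^{\rm P}+\|\varphi-\pp\|_2^2$ does not follow. In fact that bound is false: by Proposition~\ref{hessian} the Hessian of $\F^{\rm P}$ at $\pp$ is $\id-K$ with $K\ge 0$ and $K\neq 0$, so along any direction $\chi$ with $\langle\chi|K|\chi\rangle>0$ one has $\F^{\rm P}(\pp+t\chi)-e^{\rm P} = t^2\langle\chi|\id-K|\chi\rangle + o(t^2) < t^2\|\chi\|_2^2$ for small $t$. The entire point of Proposition~\ref{hessianglobal} is that only a lower bound \emph{weaker} than $\|\varphi-\pp\|_2^2$ can hold globally.

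A further warning sign is that your argument never uses Assumption~\ref{ass:hessian}; without some nondegeneracy input no quadratic lower bound of the form \eqref{lb2} can be expected. The paper's proof goes through the electron functional: Assumption~\ref{ass:hessian} gives $\E^{\rm P}(\psi)\ge e^{\rm P} + \kappa\|\nabla(|\psi|-\psi^{\rm P})\|_2^2$; Lemma~\ref{hessiangloballemma} (a Cauchy--Schwarz plus Hardy-inequality estimate) converts this into $\E^{\rm P}(\psi)\ge e^{\rm P}+\kappa'\,\langle\, |\psi|^2-|\psi^{\rm P}|^2\,|\,(-\Delta_\Omega)^{-1/2}\,|\, |\psi|^2-|\psi^{\rm P}|^2\,\rangle$; one then writes $\E(\psi,\varphi)=\E^{\rm P}(\psi)+\|\varphi-(-\Delta_\Omega)^{-1/2}|\psi|^2\|_2^2$ and minimizes the resulting quadratic expression in $\sigma=(-\Delta_\Omega)^{-1/2}|\psi|^2$ explicitly; the minimization of $\kappa'\langle\sigma-\pp|(-\Delta_\Omega)^{1/2}|\sigma-\pp\rangle+\|\varphi-\sigma\|_2^2$ over $\sigma$ is precisely what produces the operator $\id-(\id+\kappa'(-\Delta_\Omega)^{1/2})^{-1}$. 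Your near-minimizer discussion via Proposition~\ref{hessian} is fine as far as it goes, but it cannot be patched to cover the far regime without an input of this kind.
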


We start with the following lemma.

\begin{lemma}\label{hessiangloballemma}
For $\psi \in H^1_0(\Omega)$ with $\|\psi\|_2=1$, 
\begin{equation}
\left\langle |\psi|^2 -|\psi^{\rm P}|^2 \right| (-\Delta_\Omega)^{-1/2} \left| |\psi|^2 -|\psi^{\rm P}|^2 \right \rangle \leq \frac 8{\pi^2}  \int_\Omega \left| \nabla \left( |\psi| - |\psi^{\rm P}| \right) \right|^2 \,.
\end{equation}
\end{lemma}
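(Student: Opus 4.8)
The plan is to bound the quadratic form $\langle f | (-\Delta_\Omega)^{-1/2} | f\rangle$ with $f = |\psi|^2 - |\psi^{\rm P}|^2$ by comparing the Dirichlet resolvent with the full-space one and then exploiting that $f$ factors as a product of a sum and a difference. First I would use the pointwise kernel bound $(-\Delta_\Omega)^{-1/2}(x,y) \le (-\Delta_{\R^3})^{-1/2}(x,y)$ (this is \eqref{eq:b2}, the same input used repeatedly above, e.g.\ in Lemma~\ref{psibounded} and \eqref{argu:hs}), extending $f$ by zero outside $\Omega$, so that
\begin{equation*}
\left\langle f \left| (-\Delta_\Omega)^{-1/2} \right| f \right\rangle \le \left\langle f \left| (-\Delta_{\R^3})^{-1/2} \right| f \right\rangle .
\end{equation*}
Here one should be slightly careful that $f$ is a difference of two \emph{nonnegative} integrable functions so the off-diagonal kernel inequality genuinely gives an upper bound on the double integral; since $(-\Delta_{\R^3})^{-1/2}$ has a positive kernel this is fine. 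Now write $f = (|\psi| + |\psi^{\rm P}|)(|\psi| - |\psi^{\rm P}|) =: g\,h$ with $g = |\psi| + |\psi^{\rm P}| \ge 0$ and $h = |\psi| - |\psi^{\rm P}|$.

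The key analytic step is a Hardy--Littlewood--Sobolev / Cauchy--Schwarz manipulation on $\R^3$. Since $(-\Delta_{\R^3})^{-1/2}$ has Fourier multiplier $|k|^{-1}$, I would factor it as $(-\Delta_{\R^3})^{-1/2} = (-\Delta_{\R^3})^{-1/4}(-\Delta_{\R^3})^{-1/4}$ and estimate
\begin{equation*}
\left\langle gh \left| (-\Delta_{\R^3})^{-1/2} \right| gh \right\rangle = \left\| (-\Delta_{\R^3})^{-1/4}(gh) \right\|_2^2 \le \left\| g \right\|_\infty^2 \,\text{(something in }h\text{)} ,
\end{equation*}
but the clean route is instead: by the HLS inequality $\langle u|(-\Delta_{\R^3})^{-1/2}|u\rangle \lesssim \|u\|_{6/5}^2$, and then $\|gh\|_{6/5} \le \|g\|_2 \|h\|_3$ by H\"older; finally $\|g\|_2 \le \|\,|\psi|\,\|_2 + \|\,|\psi^{\rm P}|\,\|_2 = 2$ and $\|h\|_3 = \|\,|\psi|-|\psi^{\rm P}|\,\|_3 \lesssim \|\nabla(|\psi|-|\psi^{\rm P}|)\|_2$ by Sobolev. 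To get the sharp constant $8/\pi^2$ advertised in the statement, one should not use the generic HLS constant but rather the sharp Sobolev constant on $\R^3$ for $\|\nabla u\|_2 \ge S\|u\|_6$ together with the exact evaluation $\langle u | (-\Delta_{\R^3})^{-1/2}|u\rangle = c\,\| \,|\nabla|^{-1/2}... $ — more precisely, note $\langle gh|(-\Delta_{\R^3})^{-1/2}|gh\rangle = \|(-\Delta_{\R^3})^{-1/4}(gh)\|_2^2$ and interpolate, or use that $(-\Delta_{\R^3})^{-1/2}(x,y) = (2\pi^2)^{-1}|x-y|^{-2}$ is exactly the Yukawa-type kernel whose sharp HLS constant into $L^{6/5}$ combined with the sharp $H^1 \hookrightarrow L^6$ constant (both extremized by the same Talenti/Aubin profiles) produces $8/\pi^2$.

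Concretely I expect the chain to be: $\langle f|(-\Delta_\Omega)^{-1/2}|f\rangle \le \langle f|(-\Delta_{\R^3})^{-1/2}|f\rangle = \frac{1}{2\pi^2}\iint \frac{f(x)f(y)}{|x-y|^2}dx\,dy \le \frac{1}{2\pi^2} C_{\rm HLS}\|f\|_{6/5}^2 \le \frac{C_{\rm HLS}}{2\pi^2}\|g\|_2^2\|h\|_3^2 \le \frac{4 C_{\rm HLS}}{2\pi^2} S^{-2}\|\nabla h\|_2^2$, and one checks $\frac{4C_{\rm HLS}}{2\pi^2 S^2} = \frac{8}{\pi^2}$ with the sharp constants $C_{\rm HLS}$ (for the $|x-y|^{-2}$ kernel into $L^{6/5}$) and $S$ (sharp $3$-dimensional Sobolev). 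The main obstacle is precisely this constant bookkeeping: getting exactly $8/\pi^2$ rather than merely $\lesssim$ requires that the extremizers of the HLS inequality for the kernel $|x-y|^{-2}$ in $\R^3$ and of the Sobolev inequality coincide (they are both the inverse-power "bubbles"), so that the two sharp inequalities chain without loss; I would verify this by recalling Lieb's sharp HLS computation. Everything else — the kernel domination, the extension by zero, H\"older, replacing $\psi$ by $|\psi|$ which only lowers the gradient term — is routine. If the sharp constant turned out to require a genuinely two-step optimization that does not factor, a safe fallback giving the stated inequality is to bound $\|g\|_2 \le 2$ and use the sharp Sobolev inequality $\|h\|_6^2 \le \frac{3}{4}(2\pi^2)^{-2/3}... $ directly on $\|gh\|_{6/5}\le\|g\|_{3/2}^{?}...$; but I believe the clean H\"older split $\|g\|_2\|h\|_3$ with the sharp constants is what yields $8/\pi^2$.
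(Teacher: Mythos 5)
Your overall strategy (dominate the Dirichlet kernel by $(2\pi^2)^{-1}|x-y|^{-2}$, factor $|\psi|^2-|\psi^{\rm P}|^2=gh$ with $g=|\psi|+|\psi^{\rm P}|$, $h=|\psi|-|\psi^{\rm P}|$, then use functional inequalities) starts out parallel to the paper, but the central analytic step as you wrote it is wrong, and it is not just a matter of constant bookkeeping. The HLS inequality $\langle u|(-\Delta_{\R^3})^{-1/2}|u\rangle\lesssim\|u\|_{6/5}^2$ is false: the exponent $6/5$ pairs with the Coulomb kernel $|x-y|^{-1}$ (i.e.\ with $(-\Delta)^{-1}$, which is the Coulomb norm mentioned before Lemma~\ref{lem:FS}), whereas $(-\Delta_{\R^3})^{-1/2}$ has kernel $(2\pi^2)^{-1}|x-y|^{-2}$, for which the scale-invariant HLS exponent is $L^{3/2}$. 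Correspondingly, your $\|h\|_3\lesssim\|\nabla h\|_2$ also fails by scaling (Sobolev in $\R^3$ gives $\|h\|_6$, not $\|h\|_3$). A dimensional check kills the whole chain: under $u\mapsto u(\lambda\cdot)$ the left side scales like $\lambda^{-4}$ and $\|u\|_{6/5}^2$ like $\lambda^{-5}$. The chain can be repaired --- use $\|gh\|_{3/2}\le\|g\|_2\|h\|_6$, $\|g\|_2\le 2$, sharp HLS for $|x-y|^{-2}$ at exponent $3/2$, and sharp Sobolev; the constants then multiply out to $8/(3\pi^2)$, which is even better than claimed --- but this requires Lieb's sharp HLS constant and is considerably heavier than needed.

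The paper's proof avoids HLS and Sobolev entirely. Writing the quadratic form as $\iint f(x)g(x)K(x,y)f(y)g(y)$ with $K=(-\Delta_\Omega)^{-1/2}(\cdot,\cdot)$ (in the paper's notation $f=|\psi|+|\psi^{\rm P}|$ and $g=|\psi|-|\psi^{\rm P}|$), one uses the symmetry and positivity of $K$ together with Cauchy--Schwarz to bound this by $\iint f(x)^2K(x,y)g(y)^2$; then for each fixed $x$ the kernel bound $K(x,y)\le(2\pi^2)^{-1}|x-y|^{-2}$ combined with \emph{Hardy's inequality} gives $\int K(x,y)g(y)^2\,dy\le\frac{2}{\pi^2}\int|\nabla g|^2$, and $\int f^2\le 4$ finishes the proof with the constant $\frac{8}{\pi^2}$. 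The missing idea in your proposal is precisely this pair: the Cauchy--Schwarz decoupling on the positive kernel, which replaces the product $fg$ by $f^2$ and $g^2$ separately, and Hardy's inequality, which handles the $|x-y|^{-2}$ singularity with an explicit constant and no recourse to sharp HLS.
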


\begin{proof}
With $f(x) = |\psi(x)| + |\psi^{\rm P}(x)|$ and $g(x) = |\psi(x)| - |\psi^{\rm P}(x)|$, the Schwarz inequality and the symmetry and positivity of the integral kernel of $(-\Delta_\Omega)^{-1/2}$ imply that 
\begin{align}\nonumber
&\left\langle |\psi|^2 -|\psi^{\rm P}|^2 \right| (-\Delta_\Omega)^{-1/2} \left|  |\psi|^2 -|\psi^{\rm P}|^2  \right \rangle \\ \nonumber 
&= \int_\Omega \int_\Omega f(x) g(x) (-\Delta_\Omega)^{-1/2}(x,y) f(y) g(y) \, dx\, dy  \\ &\leq  \int_\Omega \int_\Omega f(x)^2 (-\Delta_\Omega)^{-1/2}(x,y)  g(y)^2 \, dx\, dy \,.
\end{align}
For fixed $x$, we can use the Hardy inequality and the fact that $(-\Delta_\Omega)^{-1/2}(x,y)\leq (-\Delta_{\R^3})^{-1/2}(x,y)= (2\pi^2)^{-1} |x-y|^{-2}$ from \eqref{eq:b2}
to obtain the bound
\begin{equation}
  \int_\Omega  (-\Delta_\Omega)^{-1/2}(x,y)  g(y)^2 \, dy \leq \frac2{\pi^2}  \int_\Omega \left| \nabla g \right|^2 \,.
\end{equation}
Since $\int_\Omega f^2 \leq 4$, the result follows.   
\end{proof}

\begin{proof}[Proof of Proposition \ref{hessianglobal}]
From our assumption \eqref{ass:hp} on the Hessian of the Pekar functional $\E^{\rm P}$ and Lemma~\ref{hessiangloballemma}, it follows that 
\begin{equation}
\E^{\rm P}(\psi) \geq \E^{\rm P}(|\psi|)\geq \E^{\rm P}(\psi^{\rm P}) +  \kappa' \left\langle |\psi|^2 -|\psi^{\rm P}|^2 \right| (-\Delta_\Omega)^{-1/2} \left|  |\psi|^2 -|\psi^{\rm P}|^2   \right \rangle
\end{equation}
for  $\kappa' = \kappa \pi^2/8$. In particular,
\begin{align}\nonumber
\E(\psi,\varphi) & = \E^{\rm P}(\psi) + \left\| \varphi - (-\Delta_\Omega)^{-1/2} |\psi|^2 \right\|_2^2 \\
& \geq e^{\rm P} +  \kappa' \left\langle |\psi|^2 -|\psi^{\rm P}|^2 \right| (-\Delta_\Omega)^{-1/2} \left|  |\psi|^2 -|\psi^{\rm P}|^2   \right \rangle +  \left\| \varphi - (-\Delta_\Omega)^{-1/2} |\psi|^2 \right\|_2^2\,.
\end{align}
Minimizing with respect to $\psi$ and using \eqref{eq:rvp} leads to the desired lower bound. 
\end{proof}


\section{Proof of Theorem~\ref{main:thm}: Upper bound}\label{sec:upper}

In this section we construct a trial state to derive an upper bound on the polaron ground state energy. 
Our trial state $\Psi$ will depend only on finitely many phonon variables. More precisely, for $\Pi$ a finite rank projection on $L^2_\R(\Omega)$, we can write the Fock space $\F(L^2(\Omega))$ as a tensor product $\F(\Pi L^2(\Omega)) \otimes \F((\id-\Pi)L^2(\Omega))$, and our trial state corresponds to the vacuum vector in the second factor $\F((\id-\Pi)L^2(\Omega))$. The first factor $\F(\Pi L^2(\Omega))$ can naturally be identified with $L^2(\R^N)$ corresponding to $N$ simple harmonic oscillators, where $N = \dim \ran \Pi$. 

We find it convenient to identify a point $(\lambda_1,\dots,\lambda_N) \in \R^N$ with a function $\varphi \in \ran \Pi$ via the identification 
\begin{equation}\label{eq:ident_up}
 \varphi = \Pi \varphi = \sum_{n=1}^N \lambda_n \varphi_n 
\end{equation}
for some orthonormal basis $\{\varphi_n\}$ of $\ran \Pi$. With this identification, we can think of a wave function $\Psi \in L^2(\Omega) \otimes L^2(\R^N)$ as a function $\Psi(x,\varphi)$ with $x\in\Omega, \varphi\in \ran\Pi$. 
The function we choose is as follows:
\begin{equation}
\Psi(x,\varphi) = e^{ - \alpha^2  \left\langle \varphi- \pp\left|  (\id - \Pi K \Pi )^{1/2} \right| \varphi- \pp  \right\rangle} 
\chi\left( \epsilon^{-1} \| (-\Delta_\Omega)^{-1/2} ( \varphi - \pp) \|_2 \right)  \psi_{\varphi}(x)
\end{equation}
where
\begin{itemize}
\item $\epsilon>0$ is a small parameter that will be chosen to go to zero as $\alpha\to\infty$.
\item $0\leq \chi\leq 1$ is a smooth cut-off function with $\chi(t) = 1$ for $t\leq 1/2$ and $\chi(t)=0$ for $t\geq 1$ 
\item $\Pi$ is a finite rank projection on $L^2_\R(\Omega)$, with range containing $\pp$.
\item $\psi_\varphi$ is the unique non-negative, normalized ground state of $H_\varphi=-\Delta_\Omega+V_\varphi$ 
\item $K = \id-H^{\rm P}$, explicitly given in \eqref{def:K}.
\end{itemize}

On states of the type described above (corresponding to the vacuum for all modes outside the range of $\Pi$),   the Hamiltonian \eqref{def:ham} simply acts as $H_\varphi + \N$, with
\begin{equation}
 \N  =   \sum_{n=1}^N \left(  - \frac {1}{4 \alpha^4} \partial_{\lambda_n}^2  + \lambda_n^2 - \frac {1}{2 \alpha^2} \right)\,. 
\end{equation}
Using the eigenvalue equation $H_\varphi \psi_\varphi = e(\varphi)\psi_\varphi$, the energy of our trial state $\Psi$ is thus given as 
\begin{equation}
\left\langle \Psi\left| \Hh \right| \Psi \right\rangle = \left\langle \Psi \left|  e(\varphi) + \N \right| \Psi \right\rangle\,.
\end{equation}
Since $\Psi$ is supported on the set $\{  \| (-\Delta_\Omega)^{-1/2} ( \varphi - \pp) \|_2\leq \epsilon\}$, we can use Proposition~\ref{hessian}
for an upper bound on $e(\varphi)$. This leads to 
\begin{equation} \label{frrso}
\left\langle \Psi \left| \Hh \right| \Psi \right\rangle  \leq e^{\rm P}   \langle \Psi |  \Psi \rangle + 
\left\langle \Psi \left|  \N  - \|\varphi\|_2^2 +  \langle \varphi- \pp |  \id - K + \epsilon C L | \varphi - \pp \rangle
   \right| \Psi \right\rangle
\end{equation}
for a suitable constant $C>0$. 

 Utilizing the fact that the Gaussian factor in $\Psi$ satisfies 
\begin{align}\nonumber
&\left( - \frac 1{4\alpha^{4}} \sum_{n=1}^N  \partial_{\lambda_n}^2 +  \langle \varphi- \pp |  \id - K|  \varphi - \pp \rangle \right)  e^{-{ \alpha^2}  \left\langle \varphi- \pp\left| (\id - \Pi K \Pi )^{1/2} \right| \varphi- \pp  \right\rangle} \\ & =  \frac 1{2 \alpha^{2}} \tr  (\id -  \Pi K \Pi )^{1/2}\, e^{-{ \alpha^2}  \left\langle \varphi- \pp\left| (\id - \Pi K \Pi )^{1/2} \right| \varphi- \pp  \right\rangle} 
\end{align}
we can integrate by parts and rewrite the right side of \eqref{frrso} as 
\begin{equation}
\left(  e^{\rm P} - \frac 1 {2 \alpha^{2}} \tr   \left[ \id - (\id -  \Pi K \Pi )^{1/2} \right]  \right)  \langle \Psi |  \Psi \rangle + A + B
\end{equation}
with 
\begin{equation}
A =  \epsilon  C \left\langle \Psi \left|   \left\langle \varphi- \pp \left|   L \right| \varphi - \pp \right\rangle \right| \Psi \right\rangle 
\end{equation}
and
\begin{align}\nonumber
B & = \frac 1{4 \alpha^{4}}  \sum_{n=1}^N \int_{\Omega} dx \int_{\R^N} \prod_{m=1}^N d \lambda_m \, e^{- 2{\alpha^2} \left\langle \varphi- \pp\left| (\id -  \Pi K\Pi  )^{1/2} \right| \varphi- \pp \right \rangle}  \\ & \qquad\qquad \qquad\qquad\qquad\times \left|    \partial_{\lambda_n} \left( \chi\left( \epsilon^{-1} \| (-\Delta_\Omega)^{-1/2}( \varphi - \pp) \|_2 \right)  \psi_{\varphi}(x) \right) \right|^2\,.
\end{align}

We claim that $L$ is bounded by $(-\Delta_\Omega)^{-1}$. This follows immediately from the boundedness of $\psi^{\rm P}$ shown in Lemma \ref{psibounded}. Alternatively, one can use that $\psi (-\Delta_\Omega)^{-1} \psi$ is a bounded operator for $\psi \in L^3(\Omega)$ by Sobolev's inequality.  Hence we  can use the rough bound
\begin{equation}
A \lesssim  \epsilon^3   \langle \Psi|  \Psi \rangle \,.
\end{equation}
Moreover, by a simple Cauchy--Schwarz inequality, $B \leq 2 (B_1+B_2)$ with
\begin{align}\nonumber
B_1 & = \frac 1 {4\alpha^{4}} \int_{\R^N} \prod_{m=1}^N d\lambda_m  \, e^{- 2{\alpha^2}  \left\langle  \varphi- \pp\left| (\id -  \Pi K \Pi )^{1/2}\right|  \varphi- \pp \right \rangle} \\ & \qquad \qquad\qquad\qquad  \times \chi\left( \epsilon^{-1} \| (-\Delta_\Omega)^{-1/2} (  \varphi - \pp )\|_2 \right)^2 \sum_{n=1}^N  \left\| \partial_{\lambda_n}  \psi_{\varphi} \right\|_2^2 
\end{align}
and
\begin{align}\nonumber 
B_2 & = \frac 1{4\alpha^{4}} \int_{\R^N} \prod_{m=1}^N d \lambda_m \,  e^{- 2 {\alpha^2}  \left\langle  \varphi- \pp\left| (\id -  \Pi K \Pi )^{1/2} \right|  \varphi- \pp \right \rangle}  \\
& \qquad \qquad \qquad\qquad\times \sum_{n=1}^N \left| \partial_{\lambda_n} \chi\left( \epsilon^{-1} \| (-\Delta_\Omega)^{-1/2} ( \varphi - \pp) \|_2 \right)   \right|^2\,.
\end{align}

To bound $B_1$, we  use standard first order perturbation theory for eigenvectors to compute
\begin{equation}
\partial_{\lambda_n} \psi_{\Pi\varphi} = - \frac {Q_{\varphi}}{H_{\varphi} - e(\varphi)}   V_{\varphi_n}  \psi_{\varphi}
\end{equation}
where $Q_{\varphi}= \id - |\psi_{\varphi}\rangle\langle\psi_{\varphi}|$.  In particular, 
\begin{equation}\label{4.14}
\sum_{n=1}^N \left\|  \partial_{\lambda_n} \psi_{\varphi} \right\|_2^2 = 4 \Tr \Pi  (-\Delta_\Omega)^{-1/2} \psi_{\varphi} \left(  \frac {Q_{\varphi}}{H_{\varphi} - e(\varphi)}   \right)^2 \psi_{\varphi} (-\Delta_\Omega)^{-1/2} \Pi
\end{equation}
where we again interpret $\psi_{\varphi}$ as multiplication operator on the right side. It is not difficult to see that $(-\Delta_\Omega)^{1/2} \frac {Q_{\varphi}}{H_{\varphi} - e(\varphi)} (-\Delta_\Omega)^{1/2} $ is uniformly bounded on the support of $\chi$ (compare with the proof of Proposition~\ref{hessian}). Using this fact and 
\eqref{argu:hs}, we 
 see that  \eqref{4.14}  is uniformly bounded, independently of $N$. Hence $B_1 \lesssim   \alpha^{-4} \langle \Psi | \Psi \rangle$.

For $B_2$, we have 
\begin{align}\nonumber 
B_2 & \lesssim \frac 1{\alpha^{4} \epsilon^{2}}   \int_{\R^N} \prod_{m=1}^N d \lambda_m  \,  e^{- 2 \alpha^2  \left\langle \varphi- \pp\left| (\id -  \Pi K\Pi )^{1/2}\right| \varphi- \pp  \right\rangle} \\ & =  \frac 1{\alpha^{4} \epsilon^{2}}   \left( \alpha \sqrt{  2/ \pi} \right)^{-N}\det( \id-\Pi K \Pi )^{-1/4}
\end{align}
where we have used the fact that $\varphi^{\rm P}$ is in the range of $\Pi$. 
We have to compare this with the norm of $\Psi$, which is bounded from below by 
\begin{align}\nonumber
\langle \Psi| \Psi\rangle  & \geq \int_{\R^N\setminus S_\epsilon} \prod_{m=1}^N  d \lambda_m  \,  e^{- \alpha^2  \left\langle \varphi- \pp\left| (\id -  \Pi K \Pi )^{1/2} \right|  \varphi- \pp \right \rangle}   \\& = \left( \alpha \sqrt{  2/ \pi} \right)^{-N}  \det( \id-\Pi K\Pi )^{-1/4} -  \int_{S_\epsilon} \prod_{m=1}^N d \lambda_m  \,  e^{- 2\alpha^2  \left\langle \varphi- \pp\left| (\id -  \Pi K\Pi )^{1/2}\right|  \varphi- \pp \right \rangle}  
\end{align}
where 
\begin{equation}
S_\epsilon = \left\{ \vec \lambda \in \R^N\, : \,   \| (-\Delta_\Omega)^{-1/2} ( \varphi - \pp)\|_2 \geq  \epsilon/2  \right\}\,.
\end{equation}
Since $\|K\|<1$ by assumption,  $(-\Delta_\Omega)^{-1} \leq \nu (1-\Pi K\Pi)^{1/2}$ for some constant $\nu>0$ independent of $N$. Hence  
we can bound the characteristic function of $S_\epsilon$ from above by $\exp(-\frac1{4\nu} \alpha^2 \epsilon^2) \times \exp( \alpha^2   \left\langle \varphi- \pp\left| (\id -  \Pi K \Pi )^{1/2} \right| \varphi- \pp \right \rangle )$. Therefore,
\begin{equation}
\langle \Psi| \Psi\rangle  \geq \left( \alpha \sqrt{  2/ \pi} \right)^{-N}  \det( \id-\Pi K \Pi )^{-1/4} \left( 1- 2^{N/2} e^{-\frac1{4\nu} \alpha^2 \epsilon^2} \right)\,.
\end{equation}
In particular, as long as $\alpha\epsilon \geq \const \sqrt N$ with a sufficiently large constant, we have  $ \langle \Psi| \Psi\rangle  \gtrsim \left( \alpha\sqrt{2/\pi} \right)^{-N}    \det( \id-\Pi K \Pi )^{-1/4}$, and hence  
\begin{equation}
B_2 \lesssim  \alpha^{-4} \epsilon^{-2}   \langle \Psi| \Psi\rangle\,.
\end{equation}

In summary, we have shown that 
\begin{equation}
\frac {\langle \Psi | \Hh | \Psi \rangle} { \langle \Psi | \Psi\rangle}  \leq  e^{\rm P} -  \frac 1{2\alpha^{2}} \tr   \left[ \id - (\id -  \Pi K \Pi )^{1/2} \right]  + \const \left( \epsilon^3 + \alpha^{-4}\epsilon^{-2} \right)
\end{equation}
as long as $\alpha\epsilon \geq \const \sqrt N$ and $\epsilon$ is small enough. We shall choose $\Pi$ to be the projection onto the span of $g_1,\dots,g_{N-1},\varphi^{\rm P}$, where we denote by $\{g_j\}_j$ an orthonormal basis of eigenfunctions of $K$, ordered in a way that the corresponding eigenvalues $k_j = \langle g_j| K g_j\rangle$ form a decreasing sequence$^1$.\footnote{$^1$In case $\varphi^{\rm P}$ is in the span of $\{g_j\}_{j=1}^{N-1}$, we take $\Pi$ to be the projection onto the span of $\{g_j\}_{j=1}^N$ instead.} 
Then
\begin{equation}
\tr   \left[ \id - (\id -  \Pi K \Pi )^{1/2} \right] \geq \sum_{j=1}^{N-1} \left( 1 - \left( 1 - k_j\right)^{1/2} \right)
\end{equation}
and hence 
\begin{equation}
\tr   \left[ \id - (\id -  \Pi K \Pi )^{1/2} \right] \geq \tr   \left[ \id - (\id -   K  )^{1/2} \right]  - \sum_{j=N}^{\infty} \left( 1 - \left( 1 - k_j\right)^{1/2} \right)\,.
\end{equation}
Since $K \lesssim [ (-\Delta_\Omega)^{-1/2} \psi^{\rm P} (-\Delta_\Omega)^{-1/2}]^2$ and $\psi^{\rm P}$ is bounded by Lemma~\ref{psibounded}, we have $k_j \leq \const e_j^{-2}$, where $e_j$ denotes the (ordered) eigenvalues of $-\Delta_\Omega$.  Since $\Omega$ is assumed to be a smooth and bounded domain, we have the Weyl asymptotics $e_j \sim j^{2/3}$ for $j\gg 1$ (see, e.g., \cite[Sec.~XIII.15]{RS4}), which implies that 
\begin{equation}
\sum_{j=N}^{\infty} \left( 1 - \left( 1 - k_j\right)^{1/2} \right) \lesssim N^{-1/3}\,.
\end{equation}

In order to minimize the error term, we shall choose  $\epsilon \sim \alpha^{-8/11}$ and $N\sim \alpha^2 \epsilon^2 \sim \alpha^{6/11}$, which leads to the bound 
\begin{equation}
\frac {\langle \Psi | \Hh | \Psi \rangle} { \langle \Psi | \Psi\rangle}  \leq  e^{\rm P} - \frac 1{2 \alpha^{2}} \tr   \left[ \id - (\id - K  )^{1/2} \right]  + \const \alpha^{-24/11} 
\end{equation}
for large enough $\alpha$. This concludes the proof of the upper bound in Theorem~\ref{main:thm}. 
\hfill\qed


\section{Multiple Lieb--Yamazaki Bound}\label{sec:LY}

In \cite{LY} Lieb and Yamazaki used the fact that the interaction between the particle and the field can be written as a commutator,  together with a Cauchy--Schwarz inequality, to get a uniform lower bound on the ground state energy of $\Hh$ (for $\Omega=\R^3$) for large $\alpha$. Their method shows that the introduction of an ultraviolet cutoff $\Lambda$ in the interaction affects the ground state energy at most by $O(\Lambda^{-1/2})$. We shall apply their method multiple (in fact, three) times, which will allow us to conclude that the effect of the cutoff is  at most $O(\Lambda^{-5/2})$ (up to logarithmic corrections). It will be essential to use the Gross transformation explained in the next section, however, since we need relative operator boundedness of the kinetic energy with respect to the full Hamiltonian, which only holds for the transformed kinetic energy, as we shall see.

Before stating the main result of this section,  we shall prove the following useful lemma. Its proof proceeds similarly to the one of Lemma~10 in \cite{frankschlein}. For its statement, we introduce the Coulomb norm
\begin{equation}
\| f\|_{\rm C} =  \left( \frac 1{ 4\pi} \int_{\R^6} \frac {\overline{f(x)} f(y)} {|x-y|} \, dx\, dy \right)^{1/2}\,.
\end{equation}
By the Hardy--Littlewood--Sobolev inequality (see, e.g., \cite[Thm. 4.3]{LiLo}), this norm is dominated by the $L^{6/5}(\R^3)$-norm.

Let us introduce the notation $p = -i\nabla_x = (p_1,p_2,p_3)$ for the momentum operator. We shall also use $p^2$ for the Dirichlet Laplacian $-\Delta_\Omega$ on $\Omega$.

\begin{lemma}\label{lem:FS}
Consider a function $h_x(\,\cdot\,)$ such that $ k(x) = \sup_{y\in\R^3} |h_{x+y}(y)|$ has finite Coulomb norm. Then
\begin{equation}\label{eq:fschl}
 a^\dagger(h_x)   a(h_x)  \leq   \| k \|_{\rm C}^2 \, p^2 \N 
\end{equation}
holds on $L^2(\Omega)\otimes \mathcal{F}$. 
\end{lemma}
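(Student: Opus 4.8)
The plan is to reduce everything to the elementary one-body estimate $\|a(f)\xi\|^{2}\le\|f\|_{2}^{2}\,\langle\xi|\N|\xi\rangle$ (a consequence of the commutation relations) by trading one power of the --- generically non-$L^{2}$ --- coupling function $h_x$ for a gradient in the electron variable. Concretely, I would solve the Poisson equation $-\Delta_x\Phi_x(y)=h_x(y)$ in the electron variable, i.e.\ set
\begin{equation*}
\Phi_x(y)=\frac1{4\pi}\int_{\R^3}\frac{h_{x'}(y)}{|x-x'|}\,dx',\qquad G_x^{(j)}(y)=-\partial_{x_j}\Phi_x(y)=\frac1{4\pi}\int_{\R^3}\frac{(x-x')_j}{|x-x'|^3}\,h_{x'}(y)\,dx',
\end{equation*}
so that $h_x(y)=\sum_{j}\partial_{x_j}G_x^{(j)}(y)$. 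Since only one derivative of the Newtonian kernel enters, $G_x^{(j)}$ is one order less singular than $h_x$, and in particular $G_x^{(j)}\in L^2(\R^3)$ for each $x$. The quantitative heart of the argument is the uniform bound $\sup_{x}\sum_{j}\|G_x^{(j)}\|_{2}^{2}\le\|k\|_{\rm C}^{2}$, which I would obtain from the pointwise estimate $|G_x^{(j)}(y)|\le\tfrac1{4\pi}\int_{\R^3}|x-x'|^{-2}k(x'-y)\,dx'$, Young's inequality, and Plancherel's theorem --- the Coulomb norm being precisely the $\dot H^{-1/2}$-type quantity $\langle k|(-\Delta)^{-1}|k\rangle^{1/2}$ on the Fourier side, which is what matches the convolution operator $\tfrac1{4\pi}|\cdot|^{-2}\ast(\,\cdot\,)$.

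With this decomposition, the fact that the annihilation operators commute with multiplication by functions of $x$ yields the operator identity $a(h_x)=i\sum_{j}[p_j,a(G_x^{(j)})]$ on $L^2(\Omega)\otimes\mathcal F$. I would then test the quadratic form: for $\Psi$ in a suitable core and $\Phi\in L^2(\Omega)\otimes\mathcal F$,
\begin{equation*}
\langle\Phi|a(h_x)|\Psi\rangle=i\sum_{j}\Big(\langle p_j\Phi|a(G_x^{(j)})\Psi\rangle-\langle a^\dagger(G_x^{(j)})\Phi|p_j\Psi\rangle\Big),
\end{equation*}
the integration by parts in $x$ producing the first term being legitimate because $\Psi\in H^1_0(\Omega)\otimes\mathcal F$, so $a(G_x^{(j)})\Psi(x)\to 0$ as $x\to\partial\Omega$. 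Applying Cauchy--Schwarz in $\mathcal F$, then in $x\in\Omega$, and finally in the index $j$, together with $\|a(G_x^{(j)})\Psi(x)\|^{2}\le\|G_x^{(j)}\|_{2}^{2}\,\langle\Psi(x)|\N|\Psi(x)\rangle$ and the corresponding bound for $a^\dagger$ (with $\N$ replaced by $\N+\alpha^{-2}$), the uniform bound $\sum_j\|G_x^{(j)}\|_2^2\le\|k\|_{\rm C}^2$, and the identity $\sum_j\int_\Omega\langle(p_j\Psi)(x)|\N|(p_j\Psi)(x)\rangle\,dx=\langle\Psi|p^2\N|\Psi\rangle$, I would arrive at the asserted inequality $a^\dagger(h_x)a(h_x)\le\|k\|_{\rm C}^{2}\,p^2\N$; the number-shift identity $(\N+\alpha^{-2})^{1/2}a(f)=a(f)\,\N^{1/2}$ is what lets one organize the estimate (e.g.\ testing against $\Phi=a(h_x)\Psi$) so that it closes, and a density argument then extends the bound from the core to the full form domain.

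The essential obstacle is the ultraviolet singularity of $h_x$: since $h_x\notin L^2$ in general (as for the polaron kernel $|x-y|^{-2}$), the naive bound $a^\dagger(h_x)a(h_x)\le\|h_x\|_2^{2}\,\N$ is vacuous, which is exactly why one must spend an electron derivative via the Poisson decomposition. The delicate point in carrying this out is that $G_x^{(j)}$ is still mildly singular, so the commutator identity must be used in such a way that, after the integration by parts in $x$, every surviving electron derivative falls on $\Psi$ and never on $G_x^{(j)}$ --- a derivative on $G_x^{(j)}$ is morally a Riesz transform of $h_x$, hence no more regular than $h_x$ itself, and would reinstate the very singularity one is trying to eliminate. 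The remainder is bookkeeping: the choice of core (justified by the relative-boundedness facts already recorded around Lemma~\ref{lem:rb}) and the control of the Dirichlet boundary terms.
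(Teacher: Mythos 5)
There is a genuine gap here, and it is structural rather than technical. Your Poisson decomposition $h_x=\sum_j\partial_{x_j}G^{(j)}_x$ with $\sup_x\sum_j\|G^{(j)}_x\|_2^2\lesssim\|k\|_{\rm C}^2$ is legitimate (up to the value of the constant, which does not come out as exactly $1$ by the Young/Plancherel route), but the commutator identity it feeds into cannot produce the asserted \emph{multiplicative} bound $a^\dagger(h_x)a(h_x)\le\|k\|_{\rm C}^2\,p^2\N$. After Cauchy--Schwarz, your weak formulation yields
\begin{equation*}
|\langle\Phi|a(h_x)\Psi\rangle|\lesssim\|k\|_{\rm C}\left(\|\,|p|\Phi\|\,\|\N^{1/2}\Psi\|+\|(\N+\alpha^{-2})^{1/2}\Phi\|\,\|\,|p|\Psi\|\right),
\end{equation*}
in which the electron derivative sits on $\Phi$ in the first term and the number operator on $\Phi$ in the second; this is exactly the Lieb--Yamazaki-type \emph{additive} bound $\pm\left(a(h_x)+a^\dagger(h_x)\right)\lesssim\|k\|_{\rm C}\left(\delta p^2+\delta^{-1}(\N+\alpha^{-2})\right)$ and nothing more. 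Your proposed closure, testing with $\Phi=a(h_x)\Psi$, produces on the right the uncontrolled quantity $\|p_j a(h_x)\Psi\|$ (a derivative falling on $a(G^{(j)}_x)\Psi$ regenerates $a(\partial_jG^{(j)}_x)$, i.e.\ essentially $a(h_x)$ again) and the self-referential quantity $\|a(h_x)\N^{1/2}\Psi\|$; neither closes. The distinction matters: the lemma is invoked precisely where the product $p^2\N$ is essential --- to bound $C_{jk}^2$ via $a^\dagger(u)a(u)\le\|u\|_{\rm C}^2\,p^2\N$ inside the proof of Lemma~\ref{lem:3LY}, and to bound $\|a(h_x^{(2)})\Psi\|$ by a multiple of $\|\sqrt{\N}\sqrt{p^2}\Psi\|$ in Proposition~\ref{lem:GT} --- so the additive form would not suffice.

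The paper's proof uses no commutator at all: on each $n$-phonon sector it dominates $|h_x(y)|\le k(x-y)$, passes to the Fourier transform $\Phi(p,q)$ of $|\Psi(x,y)|$ (the modulus being handled by the diamagnetic inequality, which also disposes of $\Omega\ne\R^3$), and applies a weighted Cauchy--Schwarz with weight $|p-q|$ to $\int\Phi(p-q,q)\hat k(q)\,dq$. This places the electron momentum on the \emph{same} vector $\Psi$, in the sector where the phonon has already been absorbed, and the Coulomb norm enters as $\sup_p\int|\hat k(q)|^2|p-q|^{-2}\,dq\le\|k\|_{\rm C}^2$. Any repair of your argument must arrange for the spent derivative to act on $\Psi$ \emph{after} the annihilation operator, which the integration by parts onto $\Phi$ cannot achieve.
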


Note that the bound holds trivially with the right side replaced by $\|h_x\|_2^2 \N$. The point of Lemma~\ref{lem:FS} is that functions that are more singular (in the $x-y$ variable) can be handled, at the expense of the kinetic energy term $p^2$. 

\begin{proof}
For convenience of notation, let $\Psi$ be a one-phonon vector; the general case works in the same way. We need to bound
\begin{equation}
\int_{\Omega} \left| \int_{\Omega} \Psi(x,y) h_x(y) dy \right|^2 dx \leq \int_{\Omega} \left| \int_{\Omega} | \Psi(x,y)|  k(x-y ) dy \right|^2 dx\,.
\end{equation}
With $\Phi(p,q)$ denoting the Fourier transform of $|\Psi(x,y)|$ (regarded as a function on $\R^3\times \R^3$), we have
\begin{align}\nonumber
 & \int_{\Omega} \left| \int_{\Omega} | \Psi(x,y)|  k(x-y)\, dy \right|^2  dx
 \\ \nonumber &  =  \int_{\R^3} \left|   \int_{\R^3} \Phi(p-q  ,q ) \hat k(q)  \, dq \right|^2 dp 
 \\ \nonumber & \leq \int_{\R^3} \left( \int_{\R^3} |\Phi(p-q ,q)|^2 (p-q)^2 \, dq \right) \left( \int_{\R^3} | \hat k(q)|^2  |p-q|^{-2} \, dq \right) dp 
 \\ & \leq \sup_p  \left( \int_{\R^3} | \hat k(q)|^2  |p-q|^{-2} \, dq \right)  \,  \int_{\R^3}  \int_{\R^3} |\Phi(p ,q)|^2 p^2 \, dq \, dp \,. 
\end{align}
The last factor is smaller than  $\| \sqrt{\N}\sqrt{p^2} \Psi \|^2$ (by the diamagnetic inequality). By writing the integral in $x$-space, one easily checks that 
\begin{equation}
\sup_p  \int_{\R^3} | \hat k(q)|^2  |p-q|^{-2} \, dq \leq \|k\|_{\rm C}^2\,,
\end{equation}
hence our claim \eqref{eq:fschl}  is  proven.
\end{proof}

The main result of this section is the following.

\begin{lemma}\label{lem:3LY}
Assume that $w_x(\,\cdot\,)$ is such that 
\begin{equation}\label{def:A1}
A_1 \coloneqq \max_{j,k,l} \sup_{x\in\Omega} \| p_j p_k p_l |p|^{-6} w_x\|_2 < \infty
\end{equation}
\begin{equation}
A_2 \coloneqq   \max_{j,k} \sup_{x\in\Omega} \| p_j p_k |p|^{-4} w_x\|_2  < \infty
\end{equation}
and 
\begin{equation}\label{def:A3}
A_3\coloneqq \max_{j,k}  \|u_{jk}\|_{\rm C} < \infty
\end{equation}
where  $u_{jk}(x) = \sup_{y\in \R^3} | p_j p_k |p|^{-4} w_{x+y}(y)|$. 
Then 
 \begin{align}\nonumber
a(w_x) + a^\dagger( w_x) & \leq 12  A_1  \left(    |p|^4  + 3 p^2  \left(  \N +  1/(2\alpha^{2}) \right)  \right)  \\ & \quad + 6 \left( \alpha^{-1}  A_2  + A_3 \right)  \left( |p|^4 + p^2 \N +\frac 12  \right)  \label{lemC}
\end{align}
holds on $L^2(\Omega)\otimes \mathcal{F}$. 
\end{lemma}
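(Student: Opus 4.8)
The strategy is to write the linear coupling $a(w_x)+a^\dagger(w_x)$ as an iterated commutator with the position operator $x$, and peel off three factors of $|p|^2$ one at a time by the Lieb--Yamazaki trick, each time paying a Cauchy--Schwarz price. Write $a^\dagger(w_x) = \sum_j [x_j, a^\dagger(f^{(1)}_{j,x})]$, where $f^{(1)}_{j,x}$ is chosen so that $-i p_j f^{(1)}_{j,x}$ reproduces $w_x$ up to the commutator term; concretely one takes $f^{(1)}_{j,x} = i p_j |p|^{-2} w_x$, so that $w_x = \sum_j p_j \cdot (i^{-1} f^{(1)}_{j,x})$ acting in the $y$-variable, while the commutator with $x_j$ picks out exactly the derivative falling on $w_x$ as a function of $x$. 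Iterating, after three steps $w_x$ is expressed through the triply-regularized kernel $p_jp_kp_l|p|^{-6}w_x$, whose $L^2$-norm is the quantity $A_1$; the "leftover" commutator terms at each stage involve the less-regularized kernels controlled by $A_2$ (one step done) and effectively by $A_3$ via Lemma~\ref{lem:FS} (because these terms are of the form $a^\dagger(h_x)$ with $h_x$ a derivative-of-kernel that is singular but has finite Coulomb norm of its sup).

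\textbf{Key steps in order.} (1) Fix the algebraic identity: for any sufficiently regular $g_x$, $a(g_x)+a^\dagger(g_x) = \sum_j \big( [x_j,a^\dagger(\tilde g_{j,x})] - [x_j, a(\overline{\tilde g_{j,x}})]\big) + (\text{terms } a^\# \text{ of } \partial_x\text{-derivatives of } g_x)$, where $\tilde g_{j,x} = i p_j|p|^{-2} g_x$ in the phonon variable. (2) Apply this with $g_x = w_x$, then re-apply it to the kernel $p_j|p|^{-2}w_x$ appearing inside, and once more, tracking carefully that at each stage the commutator $[x_j,\,\cdot\,]$ produces (a) a term where the derivative lands on the phonon argument — to be iterated further — and (b) a term where $\partial_{x_j}$ lands on the explicit $x$-dependence of $w_x$, which after $m$ total peelings is a creation/annihilation operator of $p_{j_1}\cdots p_{j_m}|p|^{-2m}(\partial_x w)_x$-type and is estimated either directly in $L^2$ (giving $A_1$, $A_2$) or via Lemma~\ref{lem:FS} (giving $A_3$). (3) Bound the fully-peeled term: $\pm(a+a^\dagger)$ of a kernel $f_x$ satisfying $\|f_x\|_2\le A$ is bounded by $\delta^{-1}\|f\|^2\N$-type estimates; but here each of the three commutators $[x_j,\cdot]$ is estimated by $\|x_j\Psi\|\cdot\|a^\#(\cdot)\Psi\|\le \frac12\epsilon\|p_j\cdots\|^2 + \ldots$ — actually one uses $|\langle\Psi, [x_j,T]\Psi\rangle| \le 2\|p_j\Psi\|\,\|T\Psi\|$ after integrating the commutator against $\Psi$ and using $\|a^\#(h)\Psi\|\le \|h\|_2\|(\N+\tfrac1{2\alpha^2})^{1/2}\Psi\|$. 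Assemble: each of the three peelings contributes a factor of $p^2$ (three of them combine with the $A_1$ term to give $|p|^6$-type control, which is then split via Young into $|p|^4 + p^2(\N+\tfrac1{2\alpha^2})$ pieces as in the stated inequality), and the leftover commutator terms give the $\alpha^{-1}A_2$ and $A_3$ contributions, with the $\alpha^{-1}$ arising from the commutation constant $[a,a^\dagger]=\alpha^{-2}\langle\cdot|\cdot\rangle$ whenever an $a$ and $a^\dagger$ from the same peeling are reordered. (4) Collect constants: the numerical factors $12$, $3$, $6$ come from the number of index combinations ($j,k,l$ ranging over $1,2,3$), the Cauchy--Schwarz factor of $2$, and the three Young splittings; one is generous with constants.

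\textbf{Main obstacle.} The bookkeeping in step (2)--(3) is where the real difficulty lies: one must carefully track, through three nested applications, exactly which intermediate kernels appear and verify each is controlled by one of $A_1,A_2,A_3$ — in particular that the genuinely singular intermediate objects (where too many derivatives have landed to keep an $L^2$ bound) are precisely the ones to which Lemma~\ref{lem:FS} applies, i.e.\ that their pointwise sup has finite Coulomb norm. One also has to be careful that the operators $x_j$ are used only against states in $L^2(\Omega)\otimes\F$ where $\Omega$ bounded makes $x_j$ bounded (so all commutator manipulations are legitimate), and that the final rearrangement into the precise form $12A_1(|p|^4 + 3p^2(\N+\tfrac1{2\alpha^2})) + 6(\alpha^{-1}A_2+A_3)(|p|^4+p^2\N+\tfrac12)$ respects operator ordering — all the operators on the right commute with $p^2=-\Delta_\Omega$ in the relevant sense only after symmetrization, so one should phrase the estimate as an inequality between quadratic forms and use $2\re\langle\Psi, a^\dagger(f_x)\Psi\rangle$ bounds throughout rather than operator identities. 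The Young-type splitting $|p|^6 \le$ (const)$(|p|^4 + p^2 \N + p^2/(2\alpha^2))$ is not literally true pointwise; rather, after the three Cauchy--Schwarz steps one naturally obtains $\||p|^3\Psi\|^2$ paired against $\N+\tfrac1{2\alpha^2}$, and then $\||p|^3\Psi\|\,\|(\N+\cdots)^{1/2}\Psi\| \le \tfrac12\||p|^3\Psi\|^2/\|\cdots\| + \ldots$ — the cleanest route is $ab \le \tfrac12(a^2/t + tb^2)$ with $a=\||p|^3\Psi\|$, $b = \|(\N+\tfrac1{2\alpha^2})^{1/2}\Psi\|$, and then further $\||p|^3\Psi\|^2 = \langle\Psi, p^2|p|^4\Psi\rangle$, bounding $|p|^4 \le$ (const) on... no: one keeps $|p|^4$ as is and uses $p^2|p|^4 = |p|^6$, accepting a $|p|^6$ term — but the statement has only $|p|^4$, so in fact the intended reading is that after extracting three commutators one has $|p|^2$ three times but distributes them as $|p|^2\cdot|p|^2 = |p|^4$ from two of them (hitting the $A_1$ kernel which already has $|p|^{-6}$, net $|p|^{-2}w_x$, still fine) and the third $|p|^2$ pairs with $\N$; the precise combinatorial allocation that produces exactly the stated right-hand side is the delicate point and is best verified by expanding one term in full and arguing the rest are analogous.
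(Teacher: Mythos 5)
Your overall strategy---three iterations of the Lieb--Yamazaki commutator trick, $A_1$ for the fully regularized kernel, $A_2$ for the $[a,a^\dagger]$-contractions of the intermediate kernel, and $A_3$ via Lemma~\ref{lem:FS} for its normal-ordered part---is the same as the paper's. But the central algebraic identity in your step (1) is false as written: you peel off derivatives by commuting with the \emph{position} operator, $a^\dagger(w_x)=\sum_j[x_j,a^\dagger(f^{(1)}_{j,x})]$. Since $a^\dagger(h_x)$ acts fiberwise over $x$ (it contains no $x$-derivatives), it commutes with multiplication by $x_j$; every such commutator vanishes and your identity would give $a^\dagger(w_x)=0$. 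The correct identity commutes with the \emph{momentum} $p_j=-i\partial_{x_j}$: one has $[p_j,a(h_x)]=-i\,a(\partial_{x_j}h_x)$, and the symmetry of the kernel of $w$ matches the $x$-derivative with the $p_j$ acting on the phonon variable, so that $\sum_j[p_j,a(p_j|p|^{-2}w_x)]=-a(w_x)$ and, after iterating, $a(w_x)+a^\dagger(w_x)=\sum_{j,k,l}[p_j,[p_k,[p_l,B_{jkl}]]]$ with $B_{jkl}=a^\dagger(p_jp_kp_l|p|^{-6}w_x)-a(p_jp_kp_l|p|^{-6}w_x)$. Your step (3) betrays the correct version (you estimate the commutator by $\|p_j\Psi\|$, not $\|x_j\Psi\|$), so the slip is repairable, but as stated the starting point is wrong, and the boundedness of $x_j$ on $\Omega$, which you invoke to legitimize the manipulations, plays no role.

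The second gap is the final assembly, which you explicitly leave unverified and where your discussion of a ``Young splitting of $|p|^6$'' goes astray: no $|p|^6$ ever appears. The clean route is to expand, using the symmetry of $B_{jkl}$ in its indices,
\begin{equation*}
\sum_{j,k,l}[p_j,[p_k,[p_l,B_{jkl}]]]=\sum_{j,k,l}\Bigl(p_jp_k[p_l,B_{jkl}]+[B_{jkl}^\dagger,p_l]\,p_jp_k\Bigr)-2\sum_{j,k,l}\Bigl(p_jp_kB_{jkl}p_l+p_lB_{jkl}^\dagger p_jp_k\Bigr).
\end{equation*}
The second sum is the $A_1$ term: Cauchy--Schwarz gives $-p_jp_kB_{jkl}p_l-p_lB_{jkl}^\dagger p_jp_k\le\lambda p_j^2p_k^2+\lambda^{-1}p_lB_{jkl}^\dagger B_{jkl}p_l$ with $B_{jkl}^\dagger B_{jkl}\le A_1^2(4\N+2\alpha^{-2})$; summing over $j,k,l$ and taking $\lambda=2A_1$ yields exactly $12A_1(|p|^4+3p^2(\N+1/(2\alpha^2)))$. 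In the first sum one has $\sum_l[p_l,B_{jkl}]=C_{jk}=a^\dagger(p_jp_k|p|^{-4}w_x)+a(p_jp_k|p|^{-4}w_x)$, and $C_{jk}^2\le 4a^\dagger(\cdot)a(\cdot)+2\alpha^{-2}A_2^2$, the normal-ordered part being $\le 4A_3^2\,p^2\N$ by Lemma~\ref{lem:FS}; Cauchy--Schwarz with $\lambda=6(A_3+\alpha^{-1}A_2)$ then gives the remaining term in \eqref{lemC}. Thus the three momenta are never multiplied together: two land outside as $p_jp_k$ (producing $|p|^4$ after summation) and the third as $p_l(\cdots)p_l$ sandwiching the number operator. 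Your proposal gestures at this allocation but does not establish it, and establishing it is most of the content of the lemma.
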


\begin{proof}
For any $w_x$, we have
\begin{equation}
\sum_j\, [ p_j, a(p_j |p|^{-2} w_x)] = - a(w_x) \,.
\end{equation}
Applying this three times, we also get
\begin{equation}
\sum_{j,k,l}\, [ p_j , [ p_k, [p_l , a(p_j p_k p_l |p|^{-6} w_x) ] ]]  = - a (  w_x)    \,.
\end{equation}
In particular, we conclude that 
\begin{equation}\label{el1}
a(w_x) + a^\dagger(w_x) =  \sum_{j,k,l}\, [ p_j , [ p_k, [p_l , a^\dagger(p_j p_k p_l |p|^{-6} w_x) - a(p_j p_k p_l |p|^{-6}w_x) ] ]]\,.
\end{equation}
We introduce the notation $B_{jkl} = a^\dagger(p_j p_k p_l |p|^{-6} w_x) - a(p_j p_k p_l |p|^{-6} w_x)$,  and rewrite the triple commutator as
\begin{align}\nonumber
\sum_{j,k,l}\, [ p_j , [ p_k, [p_l , B_{jkl}  ] ]] & = \sum_{j,k,l}\, \left(  p_j p_k [p_l, B_{jkl}] + [B_{jkl}^\dagger, p_l] p_j p_k  \right) \\ &  \quad  - 2 \sum_{j,k,l}\, \left(  p_j p_k  B_{jkl} p_l + p_l B_{jkl}^\dagger p_j p_k  \right)\label{el2}
\end{align}
using the invariance of $B_{jkl}$ under exchange of indices. 

The Cauchy--Schwarz inequality implies that 
\begin{equation}
 - p_j p_k  B_{jkl} p_l -  p_l B_{jkl}^\dagger p_j p_k \leq  \lambda p_j^2 p_k^2 + \lambda^{-1} p_l B_{jkl}^\dagger B_{jkl} p_l
\end{equation}
for any $\lambda>0$. 
Moreover,
\begin{equation}
B_{jkl}^\dagger B_{jkl}  \leq  \left( 4 \N +2  \alpha^{-2} \right) \| p_j p_k p_l |p|^{-6} w_x\|_2^2 \leq A_1^2 \left( 4 \N +2  \alpha^{-2} \right)\,.
\end{equation}
In particular, by choosing $\lambda= 2 A_1$ and summing over $j,k,l$, we obtain the bound 
\begin{equation}\label{el3}
  - 2 \sum_{j,k,l}\, \left(  p_j p_k  B_{jkl} p_l + p_l B_{jkl}^\dagger p_j p_k  \right) \leq 
 12  A_1 \left(    |p|^4  + 3 p^2  \left(  \N +  1/(2 \alpha^{2}) \right)  \right)\,.
\end{equation}

We also have
\begin{equation}
C_{jk} = \sum_l  [p_l, B_{jkl}]  = a^\dagger(p_j p_k |p|^{-4} w_x) + a(p_j p_k |p|^{-4} w_x)
\end{equation}
and 
\begin{equation}
p_j p_k C_{jk}  + C_{jk} p_j p_k   \leq  \lambda p_j^2 p_k^2 + \lambda^{-1} C_{jk}^2
\end{equation}
for any $\lambda>0$. 
Furthermore, we can bound
\begin{equation}
C_{jk}^2  \leq  4 a^\dagger(p_j p_k |p|^{-4} w_x)  a(p_j p_k |p|^{-4} w_x) + \frac 2{\alpha^2} \| p_j p_k |p|^{-4} w_x\|_2^2\,.
\end{equation}
By Lemma~\ref{lem:FS}, the first term on the right side is bounded by  $ 4 \| u_{jk} \|_{\rm C}^2\, p^2  \N$, and hence $C_{jk}^2 \leq 4 A_3^2 p^2 \N+ 2 A_2^2 \alpha^{-2}$.  The choice $\lambda = 6  ( A_3 + \alpha^{-1} A_2)$ then leads to the bound
\begin{equation}
 \sum_{j,k}\, \left(  p_j p_k C_{jk} + C_{jk} p_j p_k  \right) \leq 
6  \left( A_3 + \alpha^{-1} A_2\right) \left( |p|^4 + p^2 \N +\frac 12  \right) \,.
\end{equation}
In combination with \eqref{el1}, \eqref{el2} and \eqref{el3}, this concludes the proof of the lemma.
\end{proof}

In the following, we shall apply this bound to the large momentum part of the interaction, in order to quantify the effect of an ultraviolet cutoff on the ground state energy. Because the Coulomb norm in \eqref{def:A3} estimates the off-diagonal decay, we cannot use a sharp cutoff, however, and need to work with a smooth one instead. In fact, we shall apply Lemma~\ref{lem:3LY}  with 
\begin{equation}\label{def:w}
w_x(y) = z(-\Delta_\Omega)(x,y) \quad \text{for} \quad z(t) = t^{-1/2} \left( 1 - e^{-t/\Lambda^2} \right)^2
\end{equation}
for some $\Lambda>0$. The function $z$ is non-negative, and behaves like $t^{3/2} \Lambda^{-4} $ for $t\ll \Lambda^2$. Moreover, $z(t) - t^{-1/2}$ falls off like $t^{-1/2} e^{-t/\Lambda^2}$ for $t \gg \Lambda^2$. 

We shall show in Appendix~\ref{sec:appb} that the various norms appearing in \eqref{def:A1}--\eqref{def:A3} can be bounded, up to a multiplicative constant, by the equivalent expressions for $\Omega=\R^3$, which can easily be estimated using Fourier transforms. We have 
\begin{equation}
 \| p_j p_k |p|^{-4} w_x\|_2^2 = 
 \sum_n e_n^{-5} \left( 1 - e^{-e_n/\Lambda^2} \right)^4 |\partial_j \partial_k \varphi_n(x)|^2
 \end{equation}
where $e_n$ and $\varphi_n$ denote the eigenvalues and eigenfunctions of $-\Delta_\Omega$. In particular, from \eqref{eq:b13} we deduce that
\begin{equation}\label{bound:a1}
 \sup_{x\in \Omega}  \max_{j,k}  \| p_j p_k |p|^{-4} w_x\|_2 \lesssim \left( \int_{\R^3} |k|^{-6} \left( 1 - e^{-k^2/\Lambda^2} \right)^4 dk \right)^{1/2} = \const \Lambda^{-3/2}\,.
\end{equation}
In the same way, we obtain the bound
\begin{equation}\label{claim1}
 \sup_{x\in \Omega}  \max_{j,k,l} \| p_j p_k p_l |p|^{-6} w_x\|_2 \lesssim \Lambda^{-5/2}\,.
\end{equation}
Moreover, in Sect.~\ref{ss:b3} we shall show that
\begin{equation}\label{bound:a3}
\max_{j,k} \|u_{jk}\|_{\rm C}\lesssim \Lambda^{-5/2}\,.
\end{equation}
We collect these results in the following corollary.

\begin{corollary} \label{cor:LY}
For $\Lambda>0$ let $w_x(\, \cdot\,)$ be the function defined in (\ref{def:w}). 
Then
\begin{equation}
a(w_x) + a^\dagger(w_x) \lesssim    \left( p^2 + \N + 1  \right)^2 \left( \Lambda^{-5/2}   +  \alpha^{-1}  \Lambda^{-3/2}  \right) 
\end{equation}
for $\alpha\gtrsim 1$. 
\end{corollary}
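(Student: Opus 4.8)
The plan is to apply Lemma~\ref{lem:3LY} directly to the function $w_x$ from \eqref{def:w}, and to feed it the three norm bounds \eqref{claim1}, \eqref{bound:a1}, \eqref{bound:a3}. That is, with $A_1 = \sup_x \max_{j,k,l}\|p_jp_kp_l|p|^{-6}w_x\|_2 \lesssim \Lambda^{-5/2}$, $A_2 = \sup_x\max_{j,k}\|p_jp_k|p|^{-4}w_x\|_2 \lesssim \Lambda^{-3/2}$, and $A_3 = \max_{j,k}\|u_{jk}\|_{\rm C}\lesssim \Lambda^{-5/2}$, Lemma~\ref{lem:3LY} gives
\begin{equation}
a(w_x) + a^\dagger(w_x) \leq 12 A_1 \left( |p|^4 + 3p^2\left(\N + \tfrac{1}{2\alpha^2}\right)\right) + 6\left(\alpha^{-1}A_2 + A_3\right)\left(|p|^4 + p^2\N + \tfrac12\right).
\end{equation}

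First I would note that since we work in the regime $\alpha \gtrsim 1$, the factor $1/(2\alpha^2)$ is bounded, so the first bracket is $\lesssim |p|^4 + p^2\N + p^2 \lesssim (p^2 + \N + 1)^2$; here I use that $p^2 \leq (p^2+\N+1)^2$, $|p|^4 \leq (p^2+\N+1)^2$, and $p^2\N \leq \tfrac12(p^4 + \N^2) \leq (p^2+\N+1)^2$ as operator inequalities (all these operators commute in the relevant sense, or one uses that $p^2$ and $\N$ act on different tensor factors so they commute and everything is a genuine operator inequality). Thus the first term contributes $\lesssim A_1 (p^2+\N+1)^2 \lesssim \Lambda^{-5/2}(p^2+\N+1)^2$. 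Similarly the second bracket is $\lesssim (p^2+\N+1)^2$, and the prefactor is $\alpha^{-1}A_2 + A_3 \lesssim \alpha^{-1}\Lambda^{-3/2} + \Lambda^{-5/2}$, giving a contribution $\lesssim (\alpha^{-1}\Lambda^{-3/2} + \Lambda^{-5/2})(p^2+\N+1)^2$. Adding the two contributions and absorbing constants yields exactly
\begin{equation}
a(w_x) + a^\dagger(w_x) \lesssim \left(p^2 + \N + 1\right)^2\left(\Lambda^{-5/2} + \alpha^{-1}\Lambda^{-3/2}\right),
\end{equation}
as claimed.

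The only genuine content beyond bookkeeping is verifying that $w_x$ as defined in \eqref{def:w} actually satisfies the hypotheses of Lemma~\ref{lem:3LY} with the stated rates — i.e.\ the bounds \eqref{claim1}, \eqref{bound:a1}, \eqref{bound:a3} — and these are asserted in the text to follow from the kernel estimates of Appendix~\ref{sec:appb} (in particular \eqref{eq:b13} for $A_1, A_2$ and Section~\ref{ss:b3} for $A_3$), which I am entitled to assume. So the corollary is essentially immediate once those appendix bounds are in hand; the main (mild) obstacle is simply to be careful that the dominant term is $\Lambda^{-5/2}$ from $A_1$ and $A_3$, while the $A_2$ term is genuinely worse in $\Lambda$ but carries the compensating $\alpha^{-1}$, and that no term in Lemma~\ref{lem:3LY} is worse than $(p^2+\N+1)^2$ once $\alpha \gtrsim 1$ lets us bound $1/\alpha^2$ by a constant. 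I would also remark that the logarithmic corrections alluded to in the section introduction do not appear here — they arise later when $\Lambda$ is chosen $\alpha$-dependently and this bound is combined with others — so the corollary as stated is clean.
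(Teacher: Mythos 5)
Your proposal is correct and is exactly the paper's argument: the corollary is stated as a mere collection of Lemma~\ref{lem:3LY} together with the norm bounds \eqref{claim1}, \eqref{bound:a1}, \eqref{bound:a3}, followed by the elementary operator inequalities (using that $p^2$ and $\N$ commute as they act on different tensor factors) that dominate each bracket by $(p^2+\N+1)^2$ for $\alpha\gtrsim 1$. Nothing further is needed.
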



\section{Gross Transformation}\label{section:GT}

In this section we shall investigate the effect of a unitary Gross transformation \cite{grossT,nelson} on the Hamiltonian~\eqref{def:ham}. 
Let $\{f_x\}_{x\in\Omega}\subset L^2(\Omega)$ be a family of functions, parametrized by $x\in\Omega$, such that $\nabla_x f_x \in L^2(\Omega)$ for all $x\in\Omega$. We consider a unitary transformation in $L^2(\Omega)\otimes\mathcal F$ of the form 
\begin{equation}\label{def:U}
U = e^{ a(\alpha^2 f_x) - a^\dagger(\alpha^2 f_x)}\,.
\end{equation}
(This operator acts by `multiplication' with respect to the $x$ variable.) For $g\in L^2(\Omega)$ we have
\begin{equation}
U a(g) U^\dagger = a(g) + \langle g| f_x\rangle
\qquad\text{and}\qquad
U a^\dagger(g) U^\dagger = a^\dagger(g) + \langle f_x| g\rangle
\end{equation}
and hence
\begin{equation}
U \N U^\dagger = \N + a^\dagger(f_x) + a(f_x) + \|f_x\|_2^2\,.
\end{equation}
Moreover, for $p=-i\nabla_x$,
\begin{equation}
\label{eq:unabla}
U p U^\dagger = p +  \alpha^2 \left( a^\dagger( p f_x) + a(p f_x)  +  {\rm Re}\, \langle  f_x | p f_x \rangle \right)\,.
\end{equation}
We shall choose $f_x$ real-valued, hence the last term vanishes. Then
\begin{align}\nonumber
Up^2U^\dagger &= p^2 + \alpha^4 \left( a^\dagger( p f_x) + a(p f_x) \right)^2 \\ & \quad + 2 \alpha^2 p\cdot  a(pf_x) + 2 \alpha^2  a^\dagger(p f_x)\cdot p + \alpha^2 a ( p^2 f_x) + \alpha^2 a^\dagger(p^2 f_x)\,.
\end{align}

For the Hamiltonian \eqref{def:ham}, we thus have
\begin{align}\nonumber
U \Hh U^\dagger &= p^2 + \alpha^4 \left( a^\dagger( p f_x) + a(p f_x) \right)^2 + 2 \alpha^2 p\cdot a(p f_x) + 2 \alpha^2 a^\dagger(p f_x)\cdot p \\ & \quad 
 + a ( \alpha^2 p^2 f_x +f_x - v_x) + a^\dagger( \alpha^2 p^2 f_x + f_x - v_x)  
+ \N  + \|f_x\|_2^2- 2 {\rm Re}\, \langle v_x|f_x\rangle  \,. \label{uhu}
\end{align}
We shall choose $f_x$ such that
$
\alpha^2 p^2 f_x + f_x - v_x = g_x$, i.e., 
\begin{equation}
  f_{\,\cdot\,}(y)  = \left( -\alpha^2 \Delta_\Omega + 1\right)^{-1} \left(  g_{\, \cdot\, }(y)  + v_{\, \cdot\, } (y) \right) \quad \forall y\in\Omega 
\end{equation}
for some $g_x\in L^2(\Omega)$ with $\sup_{x\in\Omega} \|g_x\|_2 < \infty$. The choice $g_x\equiv 0$ would be possible, but it will be more convenient to choose 
\begin{equation}
g_x(y) = \xi(-\Delta_\Omega)(x,y) \quad \text{for}\quad \xi(t) = -t^{-1/2} \theta(K^2 - t)
\end{equation}
for some $K>0$, where 
\begin{equation}
\theta(t) = \left\{ \begin{array}{ll} 0 & \text{for $t<0$} \\ $1/2$ & \text{for $t=0$} \\ 1 & \text{for $t>0$.}\end{array} \right.
\end{equation}
Then
\begin{equation}
\|g_x\|_2^2 = \xi^2(-\Delta_\Omega)(x,x)
\end{equation}
and, since  $
\xi(t)^2 \leq  t^{-1} e^{1-t/K^2}
$, 
the fact that the heat kernel of $\Delta_\Omega$ is dominated by the one of $\Delta_{\R^3}$ implies as in \eqref{eq:b2} that 
\begin{equation}\label{bound:g}
\sup_x \|g_x\|_2^2  \leq  \frac 1{(2\pi)^3}  \int_{\R^3}   \frac {e^{1- k^2/K^2}} {k^2 }  dk =  \frac e{ 4 \pi^{3/2}} K\,.
\end{equation}

For the corresponding $f_x$, we  have 
\begin{equation}\label{def:f}
f_x(y) = \eta(-\Delta_\Omega)(x,y) \quad \text{for}\quad \eta(t) = -t^{-1/2} \frac{ \theta(t- K^2)}{\alpha^2 t + 1}\,.
\end{equation}
Using the fact that 
\begin{equation}\label{eq:613}
\eta(t)^2 \leq \alpha^{-4} t^{-3} \theta(t-K^2) \leq \alpha^{-4} \left( \frac 2 { t + K^2} \right)^3
\end{equation}
one obtains in a similar way as above 
\begin{equation}\label{bound:f}
\sup_x \| f_x \|_2^2 = \sup_x \eta^2(-\Delta_\Omega)(x,x) \leq \frac 1{\alpha^4 (2\pi)^3}  \int_{\R^3} \left(  \frac {2}{k^2 + K^2} \right)^3 dk  =  \frac 1{4\pi } \alpha^{-4} K^{-3}\,.
\end{equation}
Moreover,
\begin{equation}\label{bound:vf}
\sup_x |\langle v_x|f_x\rangle|  \leq \frac 1{\alpha^2 (2\pi)^3}  \int_{\R^3} \left(  \frac {2}{k^2 + K^2} \right)^2 dk   = \frac 1{2\pi}  \alpha^{-2} K^{-1}
\end{equation}
and, using \eqref{eq:613} and \eqref{eq:b13},
\begin{equation}\label{bound:pf}
\sup_x \| p f_x\|_2^2 \lesssim  \frac {1}{\alpha^4}  \int_{\R^3} k^2 \left(  \frac {2}{k^2 + K^2} \right)^3 dk  = 6 \pi^2 \alpha^{-4} K^{-1}\,.
\end{equation}

With the above choice of the function $f_x$ (depending on $\alpha$ and  the parameter $K$) we denote $U$  by $U_{K,\alpha}$ from now on.
With the aid of the previous estimates, we can now prove the following proposition. 
Its proof follows along similar lines as the corresponding argument for $\Omega=\R^3$ in \cite{griesemer}. 

\begin{proposition}\label{lem:GT}
For any $\eps>0$ there are $K>0$ and $C>0$ such that for all  $\alpha\gtrsim 1$ and any $\Psi \in L^2(\Omega)\otimes \F$ in the domain of $p^2 + \N$ 
\begin{equation}\label{eq:GT}
(1+ \eps) \| (p^2 + \N) \Psi\| + C \|\Psi\| \geq \| U_{K,\alpha} \Hh U_{K,\alpha}^\dagger \Psi\| \geq (1-\eps) \| (p^2 + \N) \Psi\| - C \|\Psi\|\,.
\end{equation}
\end{proposition}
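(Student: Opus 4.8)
The plan is to use the explicit form \eqref{uhu} of $U_{K,\alpha}\Hh U_{K,\alpha}^\dagger$ and show that all terms other than $p^2+\N$ are small relative to $p^2+\N$ (in the sense of operator norms applied to $\Psi$), provided $K$ is chosen large enough. Write
\begin{equation}
U_{K,\alpha}\Hh U_{K,\alpha}^\dagger = p^2 + \N + R_{K,\alpha} + c_{K,\alpha}
\end{equation}
where $c_{K,\alpha} = \|f_x\|_2^2 - 2\re\langle v_x|f_x\rangle$ is a bounded multiplication operator by \eqref{bound:f}, \eqref{bound:vf}, hence absorbed into the $C\|\Psi\|$ term, and
\begin{align}\nonumber
R_{K,\alpha} &= \alpha^4\left(a^\dagger(pf_x)+a(pf_x)\right)^2 + 2\alpha^2 p\cdot a(pf_x) + 2\alpha^2 a^\dagger(pf_x)\cdot p \\ & \quad + a(g_x) + a^\dagger(g_x)
\end{align}
with $g_x = \alpha^2 p^2 f_x + f_x - v_x = \xi(-\Delta_\Omega)(x,\cdot)$. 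It then suffices to show $\|R_{K,\alpha}\Psi\| \leq \eps\|(p^2+\N)\Psi\| + C\|\Psi\|$, since the triangle inequality gives both inequalities in \eqref{eq:GT} at once.

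\textbf{Key steps.} I would bound each group of terms in $R_{K,\alpha}$ separately, using the standard estimates $\|a(h)\Psi\| \leq \|h\|_2 \|\N^{1/2}\Psi\|$ and $\|a^\dagger(h)\Psi\| \leq \|h\|_2\|(\N+\alpha^{-2})^{1/2}\Psi\|$ together with the decay bounds in $K$ derived in Section~\ref{section:GT}. (i) For the linear terms $a(g_x)+a^\dagger(g_x)$: using \eqref{bound:g}, $\|g_x\|_2^2 \lesssim K$, so these are bounded by $\const K^{1/2}(\N+1)^{1/2}$, which is infinitesimally small relative to $\N$; a Cauchy--Schwarz / Young inequality turns $K^{1/2}\|(\N+1)^{1/2}\Psi\|$ into $\eps\|\N\Psi\| + C_\eps\|\Psi\|$. (Note this is the one place where a \emph{large} $K$ helps nothing — but $g_x$ is the low-momentum part, so its $L^2$ norm is harmless regardless; what matters is that it is form-bounded by $\N$ with small constant after Young.) Actually here the relevant smallness is that $\|g_x\|_2$ is finite and $\N^{1/2}$ is controlled by $\N+1$; $K$ need not be large for this term. (ii) For the cross terms $2\alpha^2 p\cdot a(pf_x) + \text{h.c.}$: estimate $\|\alpha^2 p\cdot a(pf_x)\Psi\| \leq \alpha^2 \sup_x\|pf_x\|_2 \, \||p|\,\N^{1/2}\Psi\|$, and by \eqref{bound:pf}, $\alpha^2\sup_x\|pf_x\|_2 \lesssim K^{-1/2}$. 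Since $\||p|\N^{1/2}\Psi\|^2 = \langle\Psi|p^2\N|\Psi\rangle \leq \frac12\|(p^2+\N)\Psi\|^2$ (or $\leq \|p^2\Psi\|\|\N\Psi\| \leq \frac12(\|p^2\Psi\|^2 + \|\N\Psi\|^2)$), this term is $\lesssim K^{-1/2}\|(p^2+\N)\Psi\|$, small for large $K$. The adjoint term is handled by writing $a^\dagger(pf_x)\cdot p = p\cdot a^\dagger(pf_x) + [\text{commutator}]$, where the commutator $\sum_j[p_j, a^\dagger(p_jf_x)] = a^\dagger(p^2 f_x)$ contributes $\alpha^2\|a^\dagger(p^2f_x)\Psi\| \leq \alpha^2\sup_x\|p^2 f_x\|_2\|(\N+1)^{1/2}\Psi\|$; from \eqref{def:f}, $\eta(t)t \sim t^{-1/2}\alpha^{-2}$ for $t>K^2$, giving $\alpha^2\|p^2f_x\|_2 \lesssim \|t^{-1/2}\theta(t-K^2)\|$ which needs care but is bounded by a $K$-independent constant times $\N^{1/2}$, again Young-absorbable — or, more cleanly, one notes $\alpha^2 p^2 f_x + f_x - v_x = g_x$ is already accounted for and reshuffles accordingly. (iii) For the quadratic term $\alpha^4(a^\dagger(pf_x)+a(pf_x))^2$: expand into $a^\dagger a^\dagger$, $a^\dagger a$, $aa$ pieces plus the commutator $\alpha^4[a(pf_x),a^\dagger(pf_x)] = \alpha^2\|pf_x\|_2^2$ (a bounded multiplication operator, $\lesssim K^{-1}$ by \eqref{bound:pf}). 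Each operator piece is bounded in norm by $\const\,\alpha^4\sup_x\|pf_x\|_2^2\,\|(\N+1)\Psi\| \lesssim K^{-1}\|(\N+1)\Psi\|$, small for large $K$.

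\textbf{Main obstacle.} The delicate point is the cross term $2\alpha^2 a^\dagger(pf_x)\cdot p$: because $a^\dagger$ appears to the left of $p$, one cannot directly pair $p$ with $\N^{1/2}$ acting on $\Psi$; one must either commute $p$ past $a^\dagger(pf_x)$ (producing the extra term $\alpha^2 a^\dagger(p^2 f_x)$, which must itself be controlled — this is where the choice $g_x\neq 0$, i.e.\ cutting off $v_x$ at momentum $K$, is used to keep $p^2 f_x$ from being too singular) or work with the quadratic form and integrate by parts. Keeping track of which combinations of $p^2$, $\N$, and $1$ appear, and verifying that the $\N^{1/2}(p^2)$-type cross terms that arise are genuinely bounded by $\frac12\|(p^2+\N)\Psi\|^2$ rather than requiring control of $p^2\N^{1/2}$ (which, as the paper emphasizes, is \emph{not} controllable by $\Hh^2$ before the transformation but becomes so after), is the core of the argument. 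Once all pieces are collected, choosing $K = K(\eps)$ large makes every coefficient except that of $p^2+\N$ smaller than $\eps$ (modulo bounded operators absorbed into $C\|\Psi\|$), and the proposition follows. I would remark that the argument is essentially the one of \cite{griesemer} adapted to the Dirichlet Laplacian on $\Omega$, the only new input being that the heat-kernel domination $e^{t\Delta_\Omega}(x,y)\leq e^{t\Delta_{\R^3}}(x,y)$ lets one replace all the norms $\|f_x\|_2$, $\|pf_x\|_2$, $\|g_x\|_2$ by their explicit $\R^3$ Fourier-space counterparts, as already done in \eqref{bound:g}--\eqref{bound:pf}.
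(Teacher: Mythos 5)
Your overall architecture is the same as the paper's: start from \eqref{uhu}, estimate every term other than $p^2+\N$ in operator norm against $\|(p^2+\N)\Psi\|$ and $\|\Psi\|$, and gain smallness from large $K$ via \eqref{bound:g}--\eqref{bound:pf}. The quadratic term, the $g_x$ term, and the cross term $a^\dagger(pf_x)\cdot p$ (where $p$ acts on $\Psi$ first and the direct bound applies) are handled exactly as in the paper. But there is a genuine gap in your treatment of the remaining cross term. A first, fixable, slip: the direct bound $\|\alpha^2 p\cdot a(pf_x)\Psi\|\le \alpha^2\sup_x\|pf_x\|_2\,\||p|\N^{1/2}\Psi\|$ is not valid as written, since $p$ stands to the \emph{left} of $a(pf_x)$ and also differentiates the $x$-dependence of the coupling function; it is $p\cdot a(pf_x)$, not $a^\dagger(pf_x)\cdot p$, that must be commuted, producing the extra term $a(\alpha^2 p^2 f_x)$. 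The real problem is your claim that $\alpha^2\sup_x\|p^2 f_x\|_2$ is bounded by a $K$-independent constant: it is infinite. Indeed $\alpha^2 p^2 f_x = g_x + v_x - f_x$, and the kernel of $v_x=(-\Delta_\Omega)^{-1/2}(x,\cdot)$ behaves like $|x-y|^{-2}$ near the diagonal, which is not square integrable in $d=3$; equivalently, the spectral representation gives $\sup_x\|\alpha^2 p^2f_x\|_2^2\gtrsim \sum_{e_n>K^2}e_n^{-1}|\varphi_n(x)|^2$, which diverges. Your proposed ``reshuffle'' via $\alpha^2p^2f_x+f_x-v_x=g_x$ does not help, because $v_x$ itself carries the divergence; the Gross transformation moves the ultraviolet singularity from the interaction term into this commutator, it does not remove it.

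This is exactly the point where the paper invokes Lemma~\ref{lem:FS}. One splits $\alpha^2 p^2 f_x = h^{(1)}_x+h^{(2)}_x$ with $h^{(2)}_x=(K^2-\Delta_\Omega)^{-1/2}(x,\cdot)$, which is not in $L^2$ but whose envelope has Coulomb norm of order $K^{-1/2}$, and $h^{(1)}_x=g_x-f_x+[(-\Delta_\Omega)^{-1/2}-(K^2-\Delta_\Omega)^{-1/2}](x,\cdot)\in L^2$ uniformly in $x$. Lemma~\ref{lem:FS} then yields $\|a(h^{(2)}_x)\Psi\|\lesssim K^{-1/2}\|\sqrt{\N}\sqrt{p^2}\Psi\|\le \tfrac12 K^{-1/2}\|(p^2+\N)\Psi\|$, which is small for large $K$. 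This Coulomb-norm estimate, trading the $L^2$-divergence of the coupling function for a factor of $|p|$, is the indispensable ingredient: without it the proof does not close, and indeed the proposition is essentially the statement that the transformed Hamiltonian has operator domain equal to that of $p^2+\N$, which fails for $\Hh$ itself. Once you insert this splitting and the application of Lemma~\ref{lem:FS}, the rest of your outline goes through.
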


We remark that due to the singular nature of $v_x$ in the interaction term, it is essential to apply the unitary transformation $U_{K,\alpha}$. In its absence, the bound \eqref{eq:GT} fails to hold. In other words, the domain of $\Hh$ does not coincide with the domain of $p^2 + \N$, but the one of $U_{K,\alpha} \Hh U_{K,\alpha}^\dagger$ does for a suitable choice of $K$. 

\begin{proof}
From \eqref{uhu} we see that the terms to estimate are the following:
\begin{align}\nonumber
\alpha^4 \| \left( a^\dagger( p f_x) + a(p f_x) \right)^2 \Psi\| & \lesssim \alpha^4 \sup_x \| p f_x \|_2^2 \| (\N + \alpha^{-2}) \Psi \| \\& \lesssim K^{-1}  \| (\N + \alpha^{-2}) \Psi \|
\end{align}
where we used \eqref{bound:pf}, 
\begin{align}\nonumber
\| \left( a ( g_x) + a^\dagger(g_x)  \right)  \Psi\| & \lesssim \sup_x  \| g_x \|_2 \| (\sqrt{\N + \alpha^{-2}} \Psi \| \\& \lesssim \delta  \| (\N + \alpha^{-2}) \Psi \| +  \delta^{-1} K \|\Psi\|
\end{align}
for any $\delta>0$, using \eqref{bound:g},  
\begin{align}\nonumber
\alpha^2 \| a^\dagger(p f_x) \cdot p \Psi\| &\leq \alpha^2 \sup_x \|p f_x\|_2 \| \sqrt{\N+\alpha^{-2}} \sqrt{p^2} \Psi\| \\ & \lesssim K^{-1/2} \| (p^2 + \N+\alpha^{-2})\Psi\|  \label{t3}
\end{align}
and finally, the term 
\begin{equation}\label{tt1}
\alpha^2  p\cdot a(p f_x)=  \alpha^2   a(p f_x)  \cdot p  + a(\alpha^2 p^2 f_x)\,.
\end{equation}
The first term on the right side of \eqref{tt1} can be estimated as in \eqref{t3} above. For the second term, we write 
\begin{equation}
\alpha^2 (p^2 f_x)(y)  = h^{(1)}_x(y) + h^{(2)}_x(y) 
\end{equation}
where 
\begin{equation}
h^{(1)}_x(y) = g_x(y) - f_x(y) + \left[  (-\Delta_\Omega)^{-1/2} - (K^2 -\Delta_\Omega)^{-1/2}\right](x,y)
\end{equation}
and
\begin{equation}
 h^{(2)}_x(y) = (K^2 - \Delta_\Omega)^{-1/2}(x,y)\,.
\end{equation}
The $L^2$-norms of $g_x$ and $f_x$ have already been bounded above, in \eqref{bound:g} and \eqref{bound:f}, respectively.  To bound  the third function in $h^{(1)}_x$, we use $0\leq t^{-1/2} - (K^2+t)^{-1/2} \leq K t^{-1/2} (K^2+t)^{-1/2}$, and find that the square of its  $L^2$-norm is bounded by 
\begin{equation}
 \frac 1{(2\pi)^3}  \int_{\R^3}   \frac {K^2} {k^2 \left( k^2 + K^2 \right) }  dk  = \frac 1{4\pi}  K\,.
\end{equation}
By using the Schwarz inequality we conclude that 
\begin{equation}
\left\| a(h_x^{(1)}) \Psi \right\| \lesssim \delta  \left\| \N \Psi \right\| +  \delta^{-1} K \left( 1+ (K\alpha)^{-4} \right) \| \Psi\|
\end{equation}
for any $\delta>0$. 

The last term to estimate is $a(h_x^{(2)})\Psi$. Since $| h_x^{(2)}(y) | \leq (K^2 - \Delta_{\R^3})^{-1/2}(x,y)$,  Lemma~\ref{lem:FS} implies that 
\begin{equation}
 \|  a(h_x^{(2)}) \Psi\| \leq (2\pi)^{-3/2}  \left(  \int_{\R^3} (K^2 + q^2)^{-1} |q|^{-2}  \, dq \right)^{1/2} \| \sqrt{\N}\sqrt{p^2} \Psi \|\,.
\end{equation}
The prefactor on the right side is equal to a constant times $K^{-1/2}$. Moreover, we can bound $\| \sqrt{\N}\sqrt{p^2} \Psi \| \leq \frac 12 \|(p^2+\N) \Psi \|$. In combination with \eqref{bound:f} and \eqref{bound:vf}, we hence  arrive at the desired result, with $K\sim \epsilon^{-2}$ and $C\sim \epsilon^{-3}$. 
\end{proof}

From Proposition~\ref{lem:GT} we draw two important conclusions. First, the ground state energy of $\Hh$ is uniformly bounded in $\alpha$, for large $\alpha$. Second, in any state of bounded energy, in the sense that $\| \Hh \Psi\| \leq {\rm const.}$, both $\|U_{K,\alpha}^\dagger p^2 U_{K,\alpha} \Psi\|$ and $\|U_{K,\alpha}^\dagger \N U_{K,\alpha} \Psi\|$ are uniformly bounded (for suitable $K$ independent of $\alpha$). In particular, we conclude that in order to compute the ground state energy, 
it  suffices to consider wave functions $\Psi$ having this property.

We have, by a similar computation as in \eqref{eq:unabla},
\begin{equation}\label{def:A}
U_{K,\alpha}^\dagger p^2 U_{K,\alpha} = (p - A_{K,\alpha})^2 \quad \text{with} \quad A_{K,\alpha} = \alpha^2 \left( a^\dagger( p f_x) + a(p f_x) \right)
\end{equation}
and 
\begin{equation}
U_{K,\alpha}^\dagger \N U_{K,\alpha} = \N - a(f_x) - a^\dagger(f_x) + \|f_x\|_2^2\,.
\end{equation}
Since $\|f_x\|_2$ is uniformly bounded, as shown in \eqref{bound:f} above, it easily follows that uniform boundedness of $\|U_{K,\alpha}^\dagger \N U_{K,\alpha} \Psi\|$ is equivalent to the one of $\| \N \Psi\|$.


\section{Proof of Theorem~\ref{main:thm}: Lower Bound}\label{sec:lower}

\subsection{Ultraviolet Cutoff}
The first step in the lower bound is to  introduce an ultraviolet cutoff in the interaction. Corollary~\ref{cor:LY} together with 
 Proposition~\ref{lem:GT} will allow us to quantify its effect on the ground state energy. 
 
 \begin{proposition}\label{prop:uvc}
For $\Lambda>0$,  let 
\begin{equation}\label{def:hamL}
\Hh^\Lambda = -\Delta_\Omega - a(v^\Lambda_x) - a^\dagger(v^\Lambda_x) + \N
\end{equation}
where
\begin{equation}\label{def:vL}
v_x^{\Lambda}(y) =  \frac{ \theta(\Lambda^2 + \Delta_\Omega)}{ (-\Delta_\Omega)^{1/2}}(x,y)\,.
\end{equation}
Then 
\begin{align}\nonumber
& \infspec \Hh - \infspec \Hh^\Lambda \\ & \quad  \gtrsim -   \Lambda^{-5/2} (\ln \Lambda)^{5/4} - \alpha^{-1} \Lambda^{-3/2} (\ln \Lambda)^{3/4} - \alpha^{-2} \Lambda^{-1} (\ln \Lambda)^{1/2}   \label{lb:co}
\end{align}
for $\alpha\gtrsim 1$ and $\Lambda\gtrsim 1$. 
 \end{proposition}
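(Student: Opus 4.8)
The plan is to reduce the comparison of ground-state energies to an operator bound on the interaction difference $a(v_x-v_x^\Lambda)+a^\dagger(v_x-v_x^\Lambda)$, and to control that difference by the multiple Lieb--Yamazaki estimate of Corollary~\ref{cor:LY}, working in the Gross-transformed picture of Section~\ref{section:GT}. Since $(-\Delta_\Omega)^{-1}$ is compact, $\Hh$ has a normalized ground state $\Psi$; because $\Hh^\Lambda=\Hh+a(v_x-v_x^\Lambda)+a^\dagger(v_x-v_x^\Lambda)$, testing $\Hh^\Lambda$ against $\Psi$ gives $\infspec\Hh^\Lambda\le\infspec\Hh+\langle\Psi|a(v_x-v_x^\Lambda)+a^\dagger(v_x-v_x^\Lambda)|\Psi\rangle$, so it suffices to bound $\langle\Psi|a(v_x-v_x^\Lambda)+a^\dagger(v_x-v_x^\Lambda)|\Psi\rangle$ from above by the right side of \eqref{lb:co}.

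The difficulty in doing this directly is that Corollary~\ref{cor:LY} bounds the interaction in terms of $(p^2+\N+1)^2$, whose form domain is strictly larger than that of $\Hh$, so its expectation in $\Psi$ need not be finite. I would therefore first pass to $\Phi=U_{K,\alpha}\Psi$, with $K$ of the order of $\Lambda$ up to a logarithmic factor (to be optimized at the end); by Proposition~\ref{lem:GT} and the remarks following it, $\Phi$ lies in the domain of $p^2+\N$ and $\|(p^2+\N)\Phi\|\lesssim1$, since $\Psi$ has bounded energy. Using the shift identity $U_{K,\alpha}\bigl(a(g)+a^\dagger(g)\bigr)U_{K,\alpha}^\dagger=a(g)+a^\dagger(g)+2\re\langle f_x|g\rangle$ with $g=v_x-v_x^\Lambda$, and estimating the multiplication operator $2\re\langle f_x|v_x-v_x^\Lambda\rangle$ by dominating the relevant kernel of $-\Delta_\Omega$ by its $\R^3$-counterpart (as in Section~\ref{section:GT}), one finds $\sup_x|\langle f_x|v_x-v_x^\Lambda\rangle|\lesssim\alpha^{-2}K^{-1}$, which for $K\gtrsim\Lambda$ produces the term $\alpha^{-2}\Lambda^{-1}(\ln\Lambda)^{1/2}$ in \eqref{lb:co}. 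It then remains to bound $\langle\Phi|a(v_x-v_x^\Lambda)+a^\dagger(v_x-v_x^\Lambda)|\Phi\rangle$.

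The main work, and the main obstacle, is to obtain an operator bound of the form $a(v_x-v_x^\Lambda)+a^\dagger(v_x-v_x^\Lambda)\lesssim(p^2+\N+1)^2\bigl(\tilde\Lambda^{-5/2}+\alpha^{-1}\tilde\Lambda^{-3/2}\bigr)$ with $\tilde\Lambda\sim\Lambda/\sqrt{\ln\Lambda}$, since this together with $\|(p^2+\N)\Phi\|\lesssim1$ gives exactly the first two terms of \eqref{lb:co}. A sharp ultraviolet cutoff cannot be inserted directly into Corollary~\ref{cor:LY}: the function $p_jp_k|p|^{-4}(v_x-v_x^\Lambda)$ has a momentum-space jump at $|k|=\Lambda$, hence a slowly decaying oscillatory kernel, so the Coulomb-norm condition \eqref{def:A3} fails. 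The plan is therefore to write $v_x-v_x^\Lambda=w_x+r_x$, where $w_x$ is the smooth ultraviolet tail of \eqref{def:w} but with cutoff scale $\tilde\Lambda$ in place of $\Lambda$ — this choice is what converts the bare powers $\Lambda^{-5/2},\Lambda^{-3/2}$ of Corollary~\ref{cor:LY} into $\tilde\Lambda^{-5/2},\tilde\Lambda^{-3/2}$ and hence produces the factors $(\ln\Lambda)^{5/4},(\ln\Lambda)^{3/4}$ — so that $a(w_x)+a^\dagger(w_x)$ is controlled by Corollary~\ref{cor:LY} at scale $\tilde\Lambda$, while the remainder $r_x$ splits into an exponentially small high-momentum piece (its $L^2$-norm is $\lesssim\Lambda^{-1/2}(\ln\Lambda)^{-1/2}$, harmless even via the crude relative bound with respect to $\N^{1/2}$) and a piece supported in the momentum shell between $\tilde\Lambda$ and $\Lambda$ that still carries the sharp cutoff. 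I expect the treatment of this last piece to be the delicate point: one must run the commutator argument of Section~\ref{sec:LY} on it while carefully accounting for the sharp cutoff — separating the small, genuinely non-smooth part near $|k|=\Lambda$ (handled in $L^2$) from a part for which a Coulomb-norm (or Hilbert--Schmidt) estimate at scale $\tilde\Lambda$ is available — with the kernel bounds of Appendices~\ref{sec:appa}--\ref{sec:appb} used throughout to reduce to the explicit $\R^3$ computations.

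Finally I would collect the estimates: from $\|(p^2+\N)\Phi\|\lesssim1$ one gets $\langle\Phi|a(w_x)+a^\dagger(w_x)|\Phi\rangle\lesssim\tilde\Lambda^{-5/2}+\alpha^{-1}\tilde\Lambda^{-3/2}$ and the analogous bound for the $r_x$-piece, and adding the contribution $\alpha^{-2}\Lambda^{-1}(\ln\Lambda)^{1/2}$ of the Gross shift and converting $\tilde\Lambda$ back to $\Lambda$ yields \eqref{lb:co}. That none of these small errors is amplified in passing back to $\infspec\Hh^\Lambda$ follows from the fact that $\infspec\Hh^\Lambda$ is bounded uniformly in $\Lambda$ and in large $\alpha$, which one obtains by applying Proposition~\ref{lem:GT} to $\Hh^\Lambda$ as well.
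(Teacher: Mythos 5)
Your overall architecture (decompose $v_x-v_x^\Lambda$, control the smooth ultraviolet tail by Corollary~\ref{cor:LY} at scale $\tilde\Lambda\sim\Lambda(\ln\Lambda)^{-1/2}$ after a Gross transformation, collect the shift term $\langle f_x|v_x-v_x^\Lambda\rangle$) matches the paper's Steps 1--2. But the piece you defer as ``the delicate point'' is a genuine gap, and the strategy you sketch for it cannot work. After subtracting the smooth tail $w_x$ at scale $\tilde\Lambda$ and the exponentially small part above $\Lambda$, what remains is $-\theta(\Lambda^2+\Delta_\Omega)w_x$, whose coefficients are $\approx -e_n^{-1/2}$ throughout the range $\tilde\Lambda^2\lesssim e_n\le\Lambda^2$. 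This object has $\|\cdot\|_2^2\sim\Lambda$, so no ``small non-smooth part handled in $L^2$'' can absorb the sharp cutoff: any piece carrying the jump at $|k|=\Lambda$ with non-negligible amplitude is either too large in $L^2$ (contributing $O(\Lambda^{1/2})$ via the crude $\N^{1/2}$ bound) or, if smoothed over a window narrow enough to make the $L^2$ remainder acceptable, has a kernel with essentially no off-diagonal decay inside $\Omega$, so the Coulomb norm $A_3$ in \eqref{def:A3} is of order $\Lambda^{-1}$ at best, far from the required $\Lambda^{-5/2}$. The paper resolves this by \emph{not estimating this piece at all}: a Perron--Frobenius-type lemma shows that the ground state energy is monotone decreasing in each coefficient $\lambda_j$ of a rank-$N$ interaction, and since the coefficients of the smoothly-cutoff interaction are dominated by those of $v_x^\Lambda$, restoring the sharp cutoff can only \emph{lower} the energy -- exactly the sign needed for the lower bound. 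This monotonicity idea is absent from your proposal, and note that it is incompatible with your framework of testing against a fixed ground state $\Psi$ of $\Hh$: it compares infima of spectra, not expectations in one state. You would need to restructure as: first $\infspec\Hh\ge\infspec\tilde\Hh^\Lambda-\mathrm{err}$ (operator inequality on a spectral subspace), then $\infspec\tilde\Hh^\Lambda\ge\infspec\Hh^\Lambda$ (monotonicity).

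Two smaller problems. First, you propose taking the Gross cutoff $K$ of order $\Lambda$; but the constant $C$ in Proposition~\ref{lem:GT} grows with $K$ (the proof gives $C\sim\delta^{-1}K$ from the $a(g_x)$ term), so with $K\sim\Lambda$ bounded-energy states only satisfy $\|(p^2+\N)\Phi\|\lesssim\Lambda$, which ruins the estimate $\langle\Phi|(p^2+\N+1)^2|\Phi\rangle\lesssim1$. One must keep $K=O(1)$; the term $\alpha^{-2}\Lambda^{-1}(\ln\Lambda)^{1/2}$ then still comes out because the relevant symbol $z(t)\eta(t)$ decays like $\alpha^{-2}t^{-2}$, with the $\tilde\Lambda^{-1}$ coming from the smooth tail $w_x$, not from $K$. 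Second, with $\tilde\Lambda=\Lambda(\ln\Lambda)^{-1/2}$ exactly, the exponential factor in the high-momentum remainder is only $e^{-(\Lambda/\tilde\Lambda)^2/2}=\Lambda^{-1/2}$ and its contribution is $O(\Lambda^{-1/2})$, not harmless; the paper takes $\tilde\Lambda=\Lambda(6\ln\Lambda)^{-1/2}$ precisely so that this term becomes $\Lambda^{1/2}e^{-3\ln\Lambda}=\Lambda^{-5/2}$.
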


In order for the error introduced in \eqref{lb:co} to be negligible compared to $\alpha^{-2}$, it is sufficient to choose  $\Lambda \sim  \alpha^{\kappa}$ with $\kappa > 4/5$.

\begin{proof}
{\em Step 1.} Recall that  $v_x(y) = (-\Delta_\Omega)^{-1/2}(x,y)$. We pick some $0 < \Lambda' < \Lambda$ and  decompose $v_x$ as $v_x(y) = u_x^{\Lambda'}(y) + w_x(y)$ where $w_x$ is  defined as in \eqref{def:w} above, but with $\Lambda$ replaced by $\Lambda'$. I.e., 
\begin{equation}\label{def:w'}
w_x(y) = z(-\Delta_\Omega)(x,y) \quad \text{for} \quad z(t) = t^{-1/2} \left( 1 - e^{-t/\Lambda'^2} \right)^2\,.
\end{equation}
Corollary~\ref{cor:LY} states that 
\begin{equation}
a(w_x) + a^\dagger(w_x) \lesssim    \left( p^2 + \N + 1  \right)^2 \left( \Lambda'^{-5/2}   +  \alpha^{-1}  \Lambda'^{-3/2}  \right) 
\end{equation}
for $\alpha\gtrsim 1$. 
We now apply the unitary Gross transformation \eqref{def:U}, with $f_x$ given in \eqref{def:f}, and $K$ chosen such that Proposition~\ref{lem:GT} holds for some fixed $0<\epsilon<1$, say $\epsilon=1/2$. We have  
\begin{equation}
U^\dagger_{K,\alpha} a(w_x) U_{K,\alpha} = a(w_x) + \langle w_x | f_x \rangle
\end{equation}
and
\begin{equation}
\sup_{x\in\Omega} | \langle w_x|f_x \rangle | \lesssim \alpha^{-2} \Lambda'^{-1}
\end{equation}
which can easily be seen by noting that  $\langle w_x|f_x \rangle = (z \eta)(-\Delta_\Omega)(x,x)$ (with $z$ and $\eta$ defined in \eqref{def:w'} and \eqref{def:f}, respectively) and using that $|z(t) \eta(t)| \lesssim \alpha^{-2} (t+ \Lambda^2)^{-2}$, proceeding as in \eqref{eq:b2} to bound the expression in terms of the one for $\Omega=\R^3$. 
Proposition~\ref{lem:GT} thus implies that 
\begin{equation}
a(w_x) + a^\dagger(w_x) \lesssim  \left( \Hh + C  \right)^2 \left( \Lambda'^{-5/2}   + \alpha^{-1} \Lambda'^{-3/2} +  \alpha^{-2}  \Lambda'^{-1}  \right) 
\end{equation}
for a suitable constant $C>0$ (independent of $\alpha$ for $\alpha\gtrsim 1$). 

For computing the ground state energy, it is clearly sufficient to consider wave functions in the spectral subspace of $\Hh$ corresponding to $|\Hh| \leq C$ for a suitable constant $C$. We thus conclude that   
\begin{equation}\label{lb:s1}
\infspec \Hh \geq \infspec \tilde \Hh^{\Lambda'} - \const \left( \Lambda'^{-5/2}   + \alpha^{-1} \Lambda'^{-3/2} +  \alpha^{-2}  \Lambda'^{-1}  \right) 
\end{equation}
where $\tilde\Hh^{\Lambda'}$ is obtained from $\Hh$ by replacing $v_x$ with $u_x^{\Lambda'} = v_x - w_x$, i.e., 
\begin{equation}
u_x^{\Lambda'}(y) =  (-\Delta_\Omega)^{-1/2} \left( 1- \left ( 1 - e^{\Delta_\Omega/\Lambda'^2} \right)^2   \right)(x,y) \,.
\end{equation}

{\em Step 2.} 
We shall now  further truncate $u_x^{\Lambda'}$, and replace it by 
\begin{equation}
\tilde v_x^{\Lambda}(y) =  \frac{ \theta(\Lambda^2 + \Delta_\Omega)}{ (-\Delta_\Omega)^{1/2}}
 \left( 1- \left ( 1 - e^{\Delta_\Omega/\Lambda'^2} \right)^2   \right)(x,y) \,.
\end{equation} 
With the aid of \eqref{eq:b5}, one checks that 
\begin{equation}
\sup_{x\in\Omega} \| u_x^{\Lambda'} - \tilde v_x^\Lambda\|_2^2 \lesssim  \Lambda\, e^{-(\Lambda/\Lambda')^2}
\end{equation}
and hence, using the fact that $\sqrt{\N}$ is uniformly bounded for states with bounded energy, the  error for introducing this additional cutoff is at most of order $ \Lambda^{1/2} \, e^{-(\Lambda/\Lambda')^2/2}$.

{\em Step 3.} 
Finally, we want to further simplify $\tilde v_x^\Lambda$ and replace it by $v_x^\Lambda$ in \eqref{def:vL}. 
We claim that the ground state energy can only decrease under this replacement. This is the content of the following lemma.

\begin{lemma}
Let $\{\varphi_j\}_{j=1}^N$ be a set of orthonormal functions in $L^2(\Omega)$, and let 
\begin{equation}
u_x(y)  = \sum_{j=1}^N \lambda_j \overline{\varphi_j(x)} {\varphi_j(y)}  \quad \text {for $\lambda_j\geq 0$, $1\leq j\leq N$.} 
\end{equation}
Then
\begin{equation}
e(\lambda_1,\dots,\lambda_N) = \infspec \left[ -\Delta_\Omega - a(u_x) - a^\dagger(u_x) + \N\right]
\end{equation}
is decreasing in each $\lambda_j$.
\end{lemma}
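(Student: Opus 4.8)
The plan is to exploit two structural features of the function $e$, extended in the obvious way to all $\lambda=(\lambda_1,\dots,\lambda_N)\in\R^N$: it is \emph{even} in each variable $\lambda_j$, and it is \emph{concave} in $\lambda$. A concave function of one real variable that is even is non-increasing on $[0,\infty)$, and since both properties hold variable by variable, the claim follows. The virtue of this route is that it avoids any analysis of the ground state; a direct Feynman--Hellmann computation is of little use here, since $\partial_{\lambda_j}e=-\langle\Psi,\Phi_j\Psi\rangle$ (with $\Phi_j$ the operator coupled to $\lambda_j$) carries no manifest sign.

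For the concavity I would write the operator as $H(\lambda)=-\Delta_\Omega+\N-\sum_{j=1}^N\lambda_j\Phi_j$, where $\Phi_j=a(\overline{\varphi_j(x)}\varphi_j)+a^\dagger(\overline{\varphi_j(x)}\varphi_j)$ is self-adjoint and independent of $\lambda$, and $\sum_j\lambda_j\Phi_j=a(u_x)+a^\dagger(u_x)$. Since $\sup_{x\in\Omega}\|\overline{\varphi_j(x)}\varphi_j\|_2=\|\varphi_j\|_\infty<\infty$ (the finitely many $\varphi_j$ are bounded; in our application they are Dirichlet eigenfunctions of $-\Delta_\Omega$, cf.\ Section~\ref{section:GT}), each $\Phi_j$ is infinitesimally form-bounded relative to $\N$. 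Hence, for every $\lambda\in\R^N$, $H(\lambda)$ is self-adjoint and bounded below with form domain equal to that of $-\Delta_\Omega+\N$, independently of $\lambda$, and
\begin{equation*}
e(\lambda)=\inf_{\|\Psi\|=1}\Big(\big\langle\Psi,(-\Delta_\Omega+\N)\Psi\big\rangle-\sum_{j=1}^N\lambda_j\,\langle\Psi,\Phi_j\Psi\rangle\Big)
\end{equation*}
is an infimum over a fixed family of functions that are affine in $\lambda$, hence concave.

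For the evenness I would use the number-parity unitary of the $j$-th mode: with $a_j=a(\varphi_j)$, the operator $R_j=\exp(i\pi\alpha^2 a_j^\dagger a_j)$ is unitary, acts as the identity on the electron and on all modes $k\neq j$, and satisfies $R_j a_j R_j^\dagger=-a_j$ (since $[\alpha^2 a_j^\dagger a_j,a_j]=-a_j$), hence also $R_j a_j^\dagger R_j^\dagger=-a_j^\dagger$. Therefore $R_j\N R_j^\dagger=\N$, $R_j\Phi_j R_j^\dagger=-\Phi_j$, and $R_j\Phi_k R_j^\dagger=\Phi_k$ for $k\neq j$, so $R_j H(\dots,\lambda_j,\dots)R_j^\dagger=H(\dots,-\lambda_j,\dots)$ and thus $e(\dots,-\lambda_j,\dots)=e(\dots,\lambda_j,\dots)$.

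To conclude, fix $j$ and the remaining variables and put $g(s)=e(\dots,s,\dots)$, which is concave and even. For $0\le s\le s'$ write $s=\mu(-s')+(1-\mu)s'$ with $\mu=(s'-s)/(2s')\in[0,\tfrac12]$; concavity then gives $g(s)\ge\mu\,g(-s')+(1-\mu)\,g(s')=g(s')$. Hence $e$ is non-increasing in $\lambda_j$ on $[0,\infty)$, which is the assertion. There is no essential obstacle in this argument; the only point demanding some care is the self-adjointness and semiboundedness of $H(\lambda)$ needed to justify the variational formula, which is elementary here since the cut-off coupling function is bounded.
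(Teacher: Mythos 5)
Your proof is correct, and it takes a genuinely different route from the paper's. The paper argues in the spirit of Perron--Frobenius: it expands $\Psi$ in coefficient functions $a^n_{i_1,\dots,i_n}(x)$, observes that the interaction contributes $-2\sum_j\lambda_j$ times a sum of real parts of overlaps, and notes that for the purpose of minimizing one may arrange all these overlaps to be non-negative, after which the energy is manifestly non-increasing in each $\lambda_j$. You instead combine two soft structural facts: $e(\lambda)$ is a pointwise infimum of functions affine in $\lambda$ over a $\lambda$-independent form domain (hence concave), and it is even in each $\lambda_j$ by conjugation with the mode parity operator $R_j=(-1)^{\alpha^2 a_j^\dagger a_j}$; a concave even function of one variable is non-increasing on $[0,\infty)$. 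Your argument avoids the basis expansion and the (somewhat delicate) phase-adjustment step of the paper entirely, at the price of having to justify the variational formula uniformly in $\lambda$ — which you do correctly via infinitesimal form-boundedness of each $\Phi_j$ relative to $\N$, using $\sup_x\|\overline{\varphi_j(x)}\varphi_j\|_2=\|\varphi_j\|_\infty<\infty$. Note that this finiteness is not guaranteed by the lemma's literal hypothesis ($\varphi_j$ merely orthonormal in $L^2$), but as you point out it holds in the only case the lemma is used, where the $\varphi_j$ are Dirichlet eigenfunctions; the same implicit regularity is needed to make sense of the Hamiltonian in the paper's proof as well. Both arguments yield monotone non-increasing, which is what is actually needed.
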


\begin{proof} We shall use a Perron--Frobenius type argument. Let $\Psi\in L^2(\Omega)\otimes {\mathcal F}$ be given by $\{ \psi_0(x),\psi_1(x,y_1), \psi_2(x,y_1,y_2) ,\dots \}$. We extend $\{\varphi_j\}_{j=1}^N$ to an orthonormal basis $\{\varphi_j\}_{j\in \N}$ of $L^2(\Omega)$, and define $a^n_{i_1,\dots,i_n}(x)$ by the expansion $\psi_n(x,y_1,\dots,y_n) = \sum_{i_1,\dots,i_n} a^n_{i_1,\dots,i_n}(x) \varphi_{i_1}(y_1) \cdots \varphi_{i_n}(y_n)$. Then 
\begin{equation}
\langle \Psi | -\Delta_\Omega + \N | \Psi \rangle = \sum_{n\geq 0} \sum_{i_1,\dots,i_n} \left( \int_\Omega | \nabla_x a^n_{i_1,\dots,i_n}(x)|^2 dx + n  \int_\Omega |  a^n_{i_1,\dots,i_n}(x)|^2 dx \right)
\end{equation}
and 
\begin{equation}
\langle \Psi |  a(u_x) + a^\dagger(u_x)  | \Psi \rangle = 2  \sum_{j=1}^N  \lambda_j \sum_{n\geq 0} \sqrt{n+1}\sum_{i_1,\dots,i_n} \Re \int_\Omega \overline{a^n_{i_1,\dots,i_n}(x)}a^{n+1}_{i_1,\dots,i_n,j}(x) \varphi_j(x) \, dx \,.
\end{equation}
By multiplying the functions $a^n_{i_1,\dots,i_n}$ with an appropriate phase factor, we can make sure that 
\begin{equation}
\int_\Omega \overline{a^n_{i_1,\dots,i_n}(x)}a^{n+1}_{i_1,\dots,i_n,j}(x) \varphi_j(x) \, dx \geq 0
\end{equation}
for all $n\geq 0$, $1\leq j\leq N$ and all $i_1,\dots,i_n$, 
and this can  clearly only decrease the energy. When computing the ground state energy, it suffices to consider $\Psi$s with such property, in which case the energy is clearly monotone decreasing in all the $\lambda_j$.
\end{proof}

As a consequence, the ground state energy with interaction $\tilde v_x^{\Lambda}$ is bounded below by the one with interaction $v_x^{\Lambda}$. In particular, we have thus shown that
\begin{equation}\label{lb:s2o}
\infspec \Hh \geq \infspec \Hh^\Lambda - \const \left( \Lambda'^{-5/2} + \alpha^{-1} \Lambda'^{-3/2} +  \alpha^{-2} \Lambda'^{-1}+  \Lambda^{1/2} e^{-(\Lambda/\Lambda')^2 /2 }  \right)
\end{equation}
and this holds for all  $\alpha\gtrsim 1$ and $\Lambda'\gtrsim 1$. 
The choice $\Lambda' = \Lambda (6 \ln \Lambda)^{-1/2}$ yields \eqref{lb:co}. 
\end{proof}

\subsection{Final Lower Bound} 
The starting point of the proof of the lower bound is Proposition~\ref{prop:uvc}, which quantifies the error in replacing $\Hh$ by $\Hh^\Lambda$ in \eqref{def:hamL} for computing the ground state energy. We are thus left with giving a lower bound on $\infspec \Hh^\Lambda$. 

We choose, for simplicity, $\Lambda$ in such a way that $\Lambda^2$ is not an eigenvalue of $-\Delta_\Omega$. Let $\Pi$ denote the projection 
\begin{equation}
\Pi = \theta(\Lambda^2+\Delta_\Omega)
\qquad\text{and}\qquad
N = \dim\ran\Pi\,.
\end{equation}
For later purposes we note that one has the Weyl asymptotics
\begin{equation}\label{eq:weyl}
N \sim (2\pi)^{-3} |\Omega| \Lambda^3
\qquad\text{as}\ \Lambda\to\infty
\end{equation}
(see, e.g., \cite[Sec.~XIII.15]{RS4}). If $e_n$ and $\varphi_n$, respectively, denote the eigenvalues and (real-valued) eigenfunctions of $-\Delta_\Omega$, then
\begin{equation}
v_x^\Lambda(y) =  \sum_{n=1}^N \frac 1{\sqrt{e_n}} \varphi_n(x) \varphi_n(y) 
\end{equation}
has finite rank. The Fock space $\F(L^2(\Omega))$ naturally factors into a tensor product $\F(\Pi L^2(\Omega)) \otimes \F((\id-\Pi)L^2(\Omega))$, and $\Hh^\Lambda$ is of the form $\Aa\otimes \id + \id \otimes \N^>$, where $\Aa$ acts on $L^2(\Omega)\otimes \F(\Pi L^2(\Omega))$ and $\N^> = \sum_{n>N} a^\dagger(\varphi_n)a(\varphi_n)$ is the number operator on $\F((\id-\Pi)L^2(\Omega))$. In particular, $\infspec \Hh^\Lambda = \infspec \Aa$. 

As in Section~\ref{sec:upper} (where a different basis was used, however), we identify $\F(\Pi L^2(\Omega))$ with $L^2(\R^N)$ via the representation 
\begin{equation}\label{eq:ident}
 \varphi = \Pi \varphi = \sum_{n=1}^N \lambda_n \varphi_n \,,
\end{equation}
thus identifying a function $\varphi \in \ran \Pi$ with a point $(\lambda_1,\dots,\lambda_N) \in \R^N$. In this representation, we have
\begin{equation}
\Aa = -\Delta_\Omega + V_\varphi(x)  + \sum_{n=1}^N \left(  - \frac 1{4 \alpha^{4}} \partial_{\lambda_n}^2 +  \lambda_n^2  - \frac{1}{2 \alpha^{2}} \right)  
\end{equation}
on $L^2(\Omega)\otimes L^2(\R^N)$. For a lower bound, we can replace $-\Delta_\Omega + V_\varphi(x)$ by the infimum of its spectrum, for any fixed $\varphi\in\ran\Pi$. In particular,  we have
\begin{equation}\label{hek}
\infspec \Hh^\Lambda \geq \infspec \mathbb K
\end{equation}
where $\mathbb K$ is the operator on $L^2(\R^N)$ 
\begin{equation}\label{fmm}
\mathbb K = - \frac 1{4 \alpha^{4}} \sum_{n=1}^N \partial_{\lambda_n}^2  - \frac{N}{2 \alpha^{2}}  + \F^{\rm P} (\varphi)
\end{equation}
with $ \F^{\rm P}$ defined in \eqref{def:pekarF}. Here $ \F^{\rm P}(\varphi)$ is a function of $(\lambda_1,\dots,\lambda_N)$ via the identification \eqref{eq:ident}.

We now introduce an IMS type localization. Let $\chi:\R_+\to [0,1]$ be a smooth function with $\chi(t) =1$ for $t\leq 1/2$, $\chi(t)=0$ for $t\geq 1$. Let $\eps>0$, and let $j_1$ and $j_2$ denote the multiplication operators in $L^2(\R^N)$
\begin{equation}
j_1 = \chi( \epsilon^{-1} \| (-\Delta_\Omega)^{-1/2}  (\varphi -\pp)\|_2) \,, \  j_2 = \sqrt{ 1 - \chi( \epsilon^{-1} \| (-\Delta_\Omega)^{-1/2}  (\varphi -\pp)\|_2)^2}\,.
\end{equation}
Then clearly $j_1^2+j_2^2=1$ and
\begin{equation}
\mathbb{K} = j_1  \mathbb{K}  j_1 +  j_2 \mathbb{K}  j_2 - \mathbb{E} 
\end{equation}
where $\mathbb{E}$ is the IMS localization error
\begin{equation}
\mathbb{E} = \frac 1{4\alpha^{4}} \sum_{n=1}^N \left( |\partial_{\lambda_n} j_1|^2 + |\partial_{\lambda_n} j_2|^2 \right)\,.
\end{equation}
It is easy to see that  that $\mathbb{E} \lesssim \alpha^{-4} \epsilon^{-2}$, independently of $N$. In particular, the localization error is negligible if $ \eps \gg \alpha^{-1}$. 

On the support of $j_1$, we can use the  bound \eqref{lbh} on $\F^{\rm P}$. This gives 
\begin{equation}
j_1 \mathbb{K} j_1 \geq j_1^2 \infspec \left(  e^{\rm P} - \frac 1{4\alpha^{4}} \sum_{n=1}^N \partial_{\lambda_n}^2  - \frac N {2\alpha^{2}}  +  \langle \varphi- \pp |  \id - K - \epsilon C L | \varphi - \pp \rangle
 \right) 
\end{equation}
for $C$ a positive constant. 
Now $\pp$ will not necessarily be in the range of $\Pi$. However, since $\id - K -  \epsilon C L$ is positive for $\eps$ small enough, we can replace $\pp$ by its closest point (in the norm defined via $\id-K-\eps C L$) in the range of $\Pi$ for a lower bound. That is, 
\begin{equation}
\langle \varphi- \pp |  \id - K - \epsilon C L | \varphi - \pp \rangle \geq \langle \varphi-  y | \Pi( \id - K - \epsilon C L)\Pi | \varphi  -y\rangle
\end{equation}
where $y=  (\Pi (\id-K-\epsilon C L) \Pi)^{-1} \Pi (\id-K-\epsilon C L) \pp$. The shift by $y$ can be removed by a unitary transformation, without affecting the ground state energy. Hence 
\begin{align}\nonumber
j_1 \mathbb{K} j_1 & \geq j_1^2 \infspec \left(  e^{\rm P} - \frac 1{4\alpha^{4}} \sum_{n=1}^N \partial_{\lambda_n}^2  - \frac N {2\alpha^{2}}  +  \langle \varphi|  \Pi(\id - K - \epsilon C L )\Pi | \varphi \rangle
 \right) \\ & = j_1^2  \left[ e^{\rm P} - \frac 1{2 \alpha^{2}} \Tr  \left( \id- \sqrt {\id-\Pi(K+\eps C L)\Pi} \right) \right]\,.
\end{align}
This is of the correct form if $N\to \infty$ and $\eps\to 0$ as $\alpha\to \infty$. 

On the support of $j_2$, we use the bound \eqref{lb2} instead. We have, for any $\eta\geq 0$, 
\begin{align}\nonumber
j_2 \mathbb{K} j_2 & \geq j_2^2 \infspec \left(  e^{\rm P} - \frac 1{4\alpha^{4}} \sum\nolimits_{n=1}^N \partial_{\lambda_n}^2  - \frac N {2\alpha^{2}} +   \frac \eta 4   \eps^2 \right. \\ & \left.  \qquad\qquad\quad \phantom{\frac N2}+ \left\langle \varphi-\pp  \left| \id - \left(\id+\kappa' (-\Delta_\Omega)^{1/2} \right)^{-1} - \eta (-\Delta_\Omega)^{-1} \right|  \varphi - \pp \right\rangle  \right) 
\end{align}
where we have used the fact that $\| (-\Delta_\Omega)^{-1/2}  (\varphi -\pp)\|_2\geq \eps/2$ on the support of $j_2$. We choose $\eta$ independent of $\alpha$ (and hence also independent of $\Lambda$ and $\epsilon$) and small enough such that the operator in the last line is positive. Proceeding as in the case of $j_1$ above, we obtain
\begin{align}
j_2 \mathbb{K} j_2 & \geq j_2^2 \left( e^{\rm P} +   \frac \eta 4 \eps^2 -  \frac 1{2\alpha^{2}} \Tr \Pi  \left[ \id - \sqrt{ \id -  \left(1+\kappa' (-\Delta_\Omega)^{1/2} \right)^{-1} - \eta (-\Delta_\Omega)^{-1} } \right] \right) \,. \label{731}
\end{align}
From the Weyl asymptotics \eqref{eq:weyl} one checks that the trace diverges like $N^{2/3} \sim  \Lambda^2$ for large $\Lambda$. 
Hence if we choose $ \Lambda\alpha^{-1}\leq\const \epsilon$ with a sufficiently small constant, the term in parenthesis in \eqref{731} is actually larger than $e^{\rm P}$. Since we will choose  $\Lambda\sim \alpha^\kappa$ with $\kappa > 4/5$, this is compatible with the condition $\eps \ll 1$ as long as $\kappa <1$.

We thus conclude that if $\Lambda\alpha^{-1}\leq\const \epsilon$ and $\epsilon$ is small enough, we have the  bound 
\begin{equation}
\infspec \mathbb K \geq  e^{\rm P} - \frac 1{ 2\alpha^{2}} \Tr  \left( \id- \sqrt {\id-\Pi(K+\eps C L)\Pi} \right) - \const \alpha^{-4} \epsilon^{-2} \,.
\end{equation}
For a lower bound, we can further drop the $\Pi$'s in the second term on the right side, and replace them by $\id$. Note that $\|K + \epsilon C L \| \leq \nu < 1$ for small enough $\epsilon$, and the function $f(t) = 1 - \sqrt{1-t}$ is Lipschitz continuous and convex on $[0,\nu]$. We utilize the following simple lemma.

\begin{lemma}\label{lem:Lip}
For $\nu>0$, let $f:[0,\nu]\to \R$ be a Lipschitz-continuous and convex function with $f(0)=0$, and let $A,B$ be non-negative trace-class operators with $A+ B \leq \nu$. Then
\begin{equation}
\Tr f(A+B) \leq \Tr f(A) + C_f \Tr B
\end{equation}
where $C_f$ denotes the Lipschitz constant of $f$.
\end{lemma}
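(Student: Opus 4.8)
The plan is to prove $\Tr f(A+B)\le \Tr f(A)+C_f\Tr B$ by combining convexity (which handles the inequality in the ``wrong'' direction in general, so must be used carefully) with the Lipschitz bound, exploiting that $f(0)=0$. First I would write, for each $t\in[0,\nu]$ and each $s\in[0,t]$, the elementary scalar inequality
\begin{equation}
f(t)\le f(s)+C_f\,(t-s),
\end{equation}
which holds because $f$ is $C_f$-Lipschitz. The idea is to turn this into an operator inequality by applying it along the ``path'' from $A$ to $A+B$. Concretely, since $0\le A\le A+B\le \nu$, I would like to conclude
\begin{equation}\label{eq:operatorstep}
f(A+B)\le f(A)+C_f\,B
\end{equation}
\emph{after taking traces}. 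The cleanest route is: by convexity of $f$ on $[0,\nu]$ together with $f(0)=0$, the function $t\mapsto f(t)/t$ is nondecreasing, but more useful is that convexity gives $f(t)-f(s)\le f'(t^-)(t-s)\le C_f(t-s)$ for $s\le t$ (the right derivative is bounded by the Lipschitz constant).

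The key step is a variational/trace argument. I would use the fact that for a Lipschitz function $f$ on $[0,\nu]$ and non-negative trace-class operators $X\le Y\le \nu$, one has $\Tr f(Y)-\Tr f(X)\le C_f\,\Tr(Y-X)$. To prove this, write $f(t)=C_f t-g(t)$ where $g(t)=C_f t-f(t)$ is non-negative (since $f(t)\le f(0)+C_f t=C_f t$), convex, with $g(0)=0$, and non-decreasing (its derivative is $C_f-f'\ge 0$). Then $\Tr f(Y)-\Tr f(X)=C_f\Tr(Y-X)-\bigl(\Tr g(Y)-\Tr g(X)\bigr)$, so it suffices to show $\Tr g(Y)\ge \Tr g(X)$, i.e.\ that $X\mapsto \Tr g(X)$ is operator-monotone in the weak sense that $X\le Y$ implies $\Tr g(X)\le\Tr g(Y)$. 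This last fact holds for any non-decreasing function $g$: indeed, by the min-max principle the eigenvalues satisfy $\mu_k(X)\le\mu_k(Y)$ for all $k$ when $X\le Y$, hence $g(\mu_k(X))\le g(\mu_k(Y))$, and summing gives $\Tr g(X)\le\Tr g(Y)$ (with the usual care that $g(0)=0$ ensures the sums converge and the ordering of eigenvalues is the decreasing one, padding with zeros as needed). Applying this with $X=A$, $Y=A+B$ yields exactly the claim.

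The main obstacle — really the only subtlety — is justifying $\Tr g(A)\le\Tr g(A+B)$ rigorously in the trace-class setting: one must ensure the eigenvalue comparison $\mu_k(A)\le\mu_k(A+B)$ (immediate from $B\ge 0$ and min-max), that $g$ non-decreasing with $g(0)=0$ preserves this termwise, and that both traces are finite (which holds since $0\le g(t)\le C_f t$ on $[0,\nu]$, so $\Tr g(A)\le C_f\Tr A<\infty$ and likewise for $A+B$, using $A+B$ trace class). Once these bookkeeping points are in place, the inequality $\Tr f(A+B)\le\Tr f(A)+C_f\Tr B$ follows immediately by rearranging the decomposition $f=C_f\,\mathrm{id}-g$. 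I would present the argument in this order: record the Lipschitz/convexity consequences for the scalar function, reduce to monotonicity of $\Tr g$, prove the latter via min-max, and assemble.
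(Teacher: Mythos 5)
Your proof is correct, and it takes a genuinely different route from the paper's. The paper works in an orthonormal eigenbasis $\{g_j\}$ of $A+B$: it writes $\Tr f(A+B)=\sum_j f(\langle g_j|A+B|g_j\rangle)$, uses the scalar Lipschitz bound to pass from $\langle g_j|A+B|g_j\rangle$ to $\langle g_j|A|g_j\rangle$ at the cost of $C_f\langle g_j|B|g_j\rangle$, and then invokes the Peierls--Jensen inequality $f(\langle g_j|A|g_j\rangle)\leq\langle g_j|f(A)|g_j\rangle$ — this last step is where convexity is genuinely used. You instead decompose $f(t)=C_f t-g(t)$ with $g$ non-decreasing, non-negative and vanishing at $0$, and reduce the claim to the monotonicity $\Tr g(A)\leq\Tr g(A+B)$, proved via Weyl's eigenvalue comparison $\mu_k(A)\leq\mu_k(A+B)$. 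Notably, your argument never uses convexity of $f$ (only the Lipschitz property is needed to make $g$ non-decreasing), so you have in fact proved a slightly more general statement; the paper's proof is shorter but relies on the convexity hypothesis in an essential way. One small slip in your write-up: $g=C_f\,\mathrm{id}-f$ is \emph{concave}, not convex, when $f$ is convex — but you only use that $g$ is non-decreasing with $g(0)=0$ and $0\leq g(t)\leq C_f t$, so this does not affect the argument. The bookkeeping you flag (decreasing ordering with zero-padding, termwise comparison, finiteness of $\Tr g(A)$ and $\Tr g(A+B)$ via $g\leq C_f\,\mathrm{id}$) is exactly what is needed and is all in order.
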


\begin{proof}
With $\{g_j\}$ a basis of eigenvectors of $A+B$, we have
\begin{equation}
\Tr f(A+B) = \sum_j f ( \langle g_j | A+B| g_j\rangle ) \leq \sum_j f ( \langle g_j | A | g_j\rangle )  + C_f \sum_j \langle g_j | B | g_j\rangle \,.
\end{equation}
The convexity of $f$ implies that $f ( \langle g_j | A | g_j\rangle ) \leq  \langle g_j | f(A) | g_j\rangle$, which yields the desired result.
\end{proof}

Lemma~\ref{lem:Lip} readily implies  that 
\begin{equation}
\Tr  \left( \id- \sqrt {\id-K-\eps C L} \right) \leq \Tr  \left( \id- \sqrt {\id-K} \right) + \const \epsilon\, \Tr L\,.
\end{equation}
We thus have
\begin{equation}\label{lb:e2}
\infspec \mathbb K \geq  e^{\rm P} - \frac 1{2\alpha^{2}} \Tr  \left( \id- \sqrt {\id-K} \right) - \const \left( \alpha^{-4} \epsilon^{-2} + \alpha^{-2} \epsilon\right)\,.
\end{equation}
In combination with \eqref{lb:co} and \eqref{hek}, this is our final lower bound. 

To minimize the error terms in \eqref{lb:co} and \eqref{lb:e2}, we shall choose $\epsilon\sim \alpha^{-1/7}(\ln \alpha)^{5/14}$ and $\Lambda \sim \alpha^{6/7} (\ln \alpha)^{5/14}$. 
This yields
\begin{equation}
\infspec \Hh \geq e^{\rm P} - \frac 1{2\alpha^{2}} \Tr  \left( \id- \sqrt {\id-K} \right) - \const \alpha^{-15/7} (\ln \alpha)^{5/14}
\end{equation}
for  $\alpha\gtrsim 1$, and thus completes the proof of the lower bound in Theorem~\ref{main:thm}.
\hfill\qed


\appendix

\section{Equivalent Formulation of Assumption~\ref{ass:hessian}}\label{sec:compact}

In this appendix we shall explain how Assumption~\ref{ass:hessian} can be verified  via a spectral analysis of the Hessian of $\E^{\rm P}$ at its minimizer $\psi^{\rm P}\geq 0$, which is assumed to be unique. We partly follow ideas in \cite[Sec.~2]{FLS}. 

The Euler--Lagrange equation for the minimizer is
\begin{equation}\label{ELeq}
-\Delta_\Omega \psi^{\rm P} - 2 \left( (-\Delta_\Omega)^{-1}|\psi^{\rm P}|^2\right)\psi^{\rm P} = \mu \psi^{\rm P} \,.
\end{equation}
The relevant Hessian $Z^{\rm P}$ is defined via
\begin{equation}
\langle \psi | Z^{\rm P} | \psi \rangle = \lim_{\epsilon\to 0} \frac 1 {\epsilon^{2}} \left(  \E^{\rm P} \left (  \frac { \psi^{\rm P} + \epsilon \psi }{ \| \psi^{\rm P}  + \epsilon \psi\|_2} \right) - e^{\rm P} \right) 
\end{equation}
for real-valued $\psi \in H_0^1(\Omega)$, and 
 equals 
\begin{align}\nonumber
Z^{\rm P} & = -\Delta_\Omega - 2 (-\Delta_\Omega)^{-1}|\psi^{\rm P}|^2 - 4 X - \mu \\ \nonumber
& \quad - 4 \iint_{\Omega\times\Omega} |\psi^{\rm P}(x)|^2 (-\Delta_\Omega)^{-1}(x,y)|\psi^{\rm P}(y)|^2\,dx\,dy \  |\psi^{\rm P}\rangle\langle\psi^{\rm P}| \\
& \quad + 4 \left( |\psi^{\rm P}\rangle\langle \left( (-\Delta_\Omega)^{-1}|\psi^{\rm P}|^2\right)\psi^{\rm P} | + \text{h.c.} \right)
\end{align}
where $X$ is the operator with integral kernel
$$
X(x,y) = \psi^{\rm P}(x)(-\Delta_\Omega)^{-1}(x,y)\psi^{\rm P}(y)\,.
$$
There is also another Hessian defined for purely imaginary perturbations of $\psi^{\rm P}$, but it is trivially given by the linear operator defined by the equation~\eqref{ELeq} and plays no role here.

Note that $Z^{\rm P} \psi^{\rm P}=0$. We now show that if $\psi^{\rm P}$ spans the kernel of $Z^{\rm P}$, then Assumption~\ref{ass:hessian} holds.

\begin{lemma}
If $\ker Z^{\rm P} = \spa \{\psi^{\rm P}\}$ then 
there exists a $\kappa>0$ such that for all $0\leq\psi\in H^1_0(\Omega)$ with $\|\psi\|_2=1$ we have
\begin{equation}\label{A11}
\mathcal E^{\rm P}(\psi) \geq \mathcal E^{\rm P}(\psi^{\rm P}) + \kappa \|\psi-\psi^{\rm P}\|_{H^1(\Omega)}^2 \,.
\end{equation}
\end{lemma}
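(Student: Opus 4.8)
The statement is a standard "non-degeneracy of the Hessian implies quantitative coercivity" result, and the natural approach is a compactness/contradiction argument. Suppose the conclusion fails: then there is a sequence $\psi_n\in H_0^1(\Omega)$, $\psi_n\geq 0$, $\|\psi_n\|_2=1$, with $\mathcal E^{\rm P}(\psi_n)\leq \mathcal E^{\rm P}(\psi^{\rm P})+ n^{-1}\|\psi_n-\psi^{\rm P}\|_{H^1}^2$. Since $\mathcal E^{\rm P}$ is coercive on the unit sphere in $L^2$ (the Coulomb-type term is controlled by the gradient term via Hardy/Sobolev, as in Lemma~\ref{hessiangloballemma}), the $\psi_n$ are bounded in $H^1_0(\Omega)$; moreover $\mathcal E^{\rm P}(\psi_n)\to e^{\rm P}$, so $\psi_n$ is a minimizing sequence. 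By the Rellich--Kondrachov theorem (compactness of $H^1_0(\Omega)\hookrightarrow L^2(\Omega)$ and into $L^4$, say), a subsequence converges strongly in $L^2$ and $L^4$ and weakly in $H^1$ to some $\psi_\infty\geq 0$ with $\|\psi_\infty\|_2=1$. Lower semicontinuity of the gradient term, together with strong convergence in the Coulomb term, forces $\psi_\infty$ to be a minimizer, hence $\psi_\infty=\psi^{\rm P}$ by Assumption~\ref{ass:unique}; and then $\mathcal E^{\rm P}(\psi_n)\to \mathcal E^{\rm P}(\psi^{\rm P})$ combined with weak $H^1$ convergence upgrades to \emph{strong} $H^1$ convergence, so $\psi_n\to \psi^{\rm P}$ in $H^1_0(\Omega)$.

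Next, set $t_n=\|\psi_n-\psi^{\rm P}\|_{H^1}\to 0$ and $\phi_n=(\psi_n-\psi^{\rm P})/t_n$, so $\|\phi_n\|_{H^1}=1$. Passing to a further subsequence, $\phi_n\rightharpoonup \phi$ weakly in $H^1_0$ and strongly in $L^2,L^4$. The constraint $\|\psi_n\|_2=1$ expands as $2t_n\langle \psi^{\rm P},\phi_n\rangle + t_n^2\|\phi_n\|_2^2=0$, which in the limit gives $\langle \psi^{\rm P},\phi\rangle=0$ (after dividing by $t_n$); thus $\phi$ is a legitimate tangent direction. A Taylor expansion of the (smooth, on the sphere) functional around $\psi^{\rm P}$ gives
\begin{equation}
\mathcal E^{\rm P}(\psi_n) = \mathcal E^{\rm P}(\psi^{\rm P}) + \tfrac12 t_n^2\,\langle \phi_n | Z^{\rm P} | \phi_n\rangle + o(t_n^2),
\end{equation}
the first-order term vanishing because $\psi^{\rm P}$ is a constrained critical point. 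Comparing with the assumed bound $\mathcal E^{\rm P}(\psi_n)\leq \mathcal E^{\rm P}(\psi^{\rm P})+ n^{-1}t_n^2$ and dividing by $t_n^2$ yields $\limsup_n \langle \phi_n | Z^{\rm P} | \phi_n\rangle \leq 0$. On the other hand, writing the quadratic form as $\|\nabla\phi_n\|_2^2$ minus terms that are continuous under strong $L^2,L^4$ convergence, weak lower semicontinuity of the gradient term gives $\langle \phi | Z^{\rm P} | \phi\rangle \leq \liminf_n \langle \phi_n | Z^{\rm P} | \phi_n\rangle \leq 0$. Since $\phi\perp \psi^{\rm P}$ and $Z^{\rm P}\geq 0$ on $\{\psi^{\rm P}\}^\perp$ (the second variation at a constrained minimum is non-negative) with kernel spanned only by $\psi^{\rm P}$, we conclude $\phi=0$.

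Finally I would derive a contradiction from $\phi=0$, i.e.\ from the fact that $\|\phi_n\|_{H^1}=1$ but $\phi_n\rightharpoonup 0$. Here one uses the strong $L^2,L^4$ convergence $\phi_n\to 0$ to kill the lower-order terms in $\langle \phi_n|Z^{\rm P}|\phi_n\rangle$, so that $\limsup_n \|\nabla \phi_n\|_2^2 \leq \limsup_n \langle \phi_n|Z^{\rm P}|\phi_n\rangle + o(1)\leq 0$; hence $\|\nabla\phi_n\|_2\to 0$, and then $\|\phi_n\|_{H^1}^2=\|\nabla\phi_n\|_2^2+\|\phi_n\|_2^2\to 0$, contradicting $\|\phi_n\|_{H^1}=1$. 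This contradiction establishes \eqref{A11}. The main technical point to be careful about is the interplay between the two normalizations (the $L^2$-sphere constraint versus the $H^1$-scaling of $\phi_n$), and verifying that all the Coulomb-type quadratic and quartic terms appearing in $Z^{\rm P}$ and in the Taylor expansion are indeed continuous with respect to strong $L^2$/$L^4$ convergence of a bounded $H^1$ sequence — this follows from the Hardy--Littlewood--Sobolev inequality together with the kernel bound $(-\Delta_\Omega)^{-1}(x,y)\leq (4\pi|x-y|)^{-1}$ from \eqref{eq:b2}, exactly as used in Lemmas~\ref{psibounded} and~\ref{hessiangloballemma}. The argument is otherwise routine, and is the one alluded to in the remark following Assumption~\ref{ass:hessian}.
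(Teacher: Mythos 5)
Your argument is correct, but it is organized differently from the paper's. The paper proceeds in two steps: first it proves a \emph{quantitative local} bound, expanding $\mathcal E^{\rm P}(\psi^{\rm P}+\delta)=\mathcal E^{\rm P}(\psi^{\rm P})+\langle\delta|Z^{\rm P}|\delta\rangle+O(\|\delta\|_{H^1}^3)$ and then combining the spectral gap of $Z^{\rm P}$ above its kernel (which gives coercivity only in $L^2$, off the span of $\psi^{\rm P}$) with the operator inequality $Z^{\rm P}\geq -\tfrac12\Delta_\Omega-C$, interpolating between the two to promote the $L^2$-coercivity of the Hessian to $H^1$-coercivity; second, it uses exactly your compactness argument ($\psi_n$ minimizing, bounded in $H^1_0$, convergent to the unique minimizer, hence eventually in the local neighborhood) only to reduce the global statement to the local one. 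You instead run a single blow-up argument on the normalized differences $\phi_n=(\psi_n-\psi^{\rm P})/t_n$, replacing the explicit spectral-gap-plus-interpolation step by weak lower semicontinuity of the gradient part of $\langle\cdot|Z^{\rm P}|\cdot\rangle$ together with compactness of its lower-order terms (the potential $(-\Delta_\Omega)^{-1}|\psi^{\rm P}|^2$ is bounded and $X$ is Hilbert--Schmidt, so these vanish along $\phi_n\to0$ in $L^2$), which is what lets you conclude both $\phi=0$ and then $\|\nabla\phi_n\|_2\to0$. Both routes are sound; the paper's buys an explicit local constant $\kappa$ expressed through the spectral gap of $Z^{\rm P}$, while yours is purely qualitative but avoids the $\theta$-interpolation computation. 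Two minor points: with the paper's normalization of $Z^{\rm P}$ (defined as the full second-order coefficient, see \eqref{def:hessian}-style definition in Appendix~\ref{sec:compact}) the factor $\tfrac12$ in your Taylor expansion should be absent — this is harmless since only the sign matters; and your appeal to ``$Z^{\rm P}\geq0$ on $\{\psi^{\rm P}\}^\perp$'' is most cleanly justified by noting that $\langle\psi|Z^{\rm P}|\psi\rangle$ is by definition the second-order coefficient of $\mathcal E^{\rm P}$ along the \emph{normalized} path, hence non-negative for every real direction because $e^{\rm P}$ is the global minimum, so $\langle\phi|Z^{\rm P}|\phi\rangle\le0$ already forces $\phi\in\ker Z^{\rm P}$.
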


\begin{proof}
\emph{Step 1.} We first show that there are $c>0$ and $\kappa>0$ such that \eqref{A11} holds  for all $0\leq\psi\in H^1_0(\Omega)$ with $\|\psi\|_2=1$ and $\|\psi-\psi^{\rm P}\|_{H^1(\Omega)}\leq c$. 
We set $\delta = \psi-\psi^{\rm P}$ and expand
\begin{align}\nonumber
\mathcal E^{\rm P}(\psi^{\rm P}+\delta) & = \mathcal E^{\rm P}(\psi^{\rm P}) + 2\mu \int_\Omega \psi^{\rm P}(x)\delta(x)\,dx \\ \nonumber
& \quad + \int_\Omega |\nabla\delta(x)|^2\,dx - 2 \iint_{\Omega\times\Omega} \psi^{\rm P}(x)^2(-\Delta_\Omega)^{-1}(x,y)\delta(y)^2\,dx\,dy
- 4\langle\delta|X|\delta\rangle \\
& \quad + O(\|\delta\|_{H^1}^3) \,.
\end{align}
The assumption $\|\psi\|_2=1$ implies that
\begin{equation}
2 \int_\Omega \psi^{\rm P}(x)\delta(x)\,dx = - \|\delta\|^2_2 \,,
\end{equation}
and therefore, using this identity multiple times,
\begin{equation}
\mathcal E^{\rm P}(\psi^{\rm P}+\delta)   = \mathcal E^{\rm P}(\psi^{\rm P}) + \langle \delta|Z^{\rm P}|\delta\rangle+ O(\|\delta\|_{H^1}^3) \,.
\end{equation}
The operator $Z^{\rm P}$ has discrete spectrum, and hence our assumption on the simplicity of the kernel  implies that for some $\kappa>0$
\begin{equation}
\langle \delta|Z^{\rm P}|\delta\rangle \geq \kappa \| \delta - \langle\psi^{\rm P}|\delta\rangle\psi^{\rm P}\|_2^2 = \kappa \left( \|\delta\|_2^2 - \left( \int_\Omega \psi^{\rm P}\delta \right)^2 \right) = \kappa \|\delta\|_2^2 \left(1- 4^{-1} \|\delta\|_2^2 \right)\,.
\end{equation}
On the other hand, it is easy to see that for some $C>0$
\begin{equation}
Z^{\rm P} \geq -(1/2)\Delta_\Omega - C \,.
\end{equation}
Taking a mean of the previous two inequalities we obtain for any $0\leq\theta\leq 1$,
\begin{equation}
\langle \delta|Z^{\rm P}|\delta\rangle \geq (\theta/2)\|\nabla\delta\|_2^2 + ((1-\theta)\kappa -C\theta)\|\delta\|_2^2 - 4^{-1} \kappa (1-\theta) \|\delta\|_2^4 \,.
\end{equation}
In particular, for $\theta = \kappa/(C+\kappa+1/2)$ we have
\begin{equation}
\langle \delta|Z^{\rm P}|\delta\rangle \geq \frac{\kappa}{2C+2\kappa+1} \|\delta\|_{H^1(\Omega)}^2 - 4^{-1} \kappa \frac{2C+1}{2C+2\kappa+1} \|\delta\|_2^4 \,.
\end{equation}
Inserting this into the above inequality, we obtain
\begin{equation}
\mathcal E^{\rm P}(\psi^{\rm P}+\delta) \geq \mathcal E^{\rm P}(\psi^{\rm P}) +  \frac{\kappa}{2C+2\kappa+1} \|\delta\|_{H^1(\Omega)}^2 + O(\|\delta\|_{H^1(\Omega)}^3) \,,
\end{equation}
which clearly implies the assertion in Step 1.

\emph{Step 2.} We now prove the full statement of the lemma. We argue by contradiction. If there were no such $\kappa$, we could find a sequence $0\leq\psi_n\in H^1_0(\Omega)$ with $\|\psi_n\|_2=1$ such that
\begin{equation}\label{a13}
\mathcal E^{\rm P}(\psi_n) < \mathcal E^{\rm P}(\psi^{\rm P}) + n^{-1} \|\psi_n-\psi^{\rm P}\|_{H^1(\Omega)}^2 \,.
\end{equation}
Using \eqref{eq:b2}, Hardy--Littlewood--Sobolev, H\"older and Sobolev we bound
\begin{align}\nonumber
\iint_{\Omega\times\Omega} \psi(x)^2 (-\Delta_\Omega)^{-1}(x,y)\psi(y)^2\,dx\,dy & \leq \frac{1}{4\pi} \iint_{\Omega\times\Omega} \frac{\psi(x)^2\psi(y)^2}{|x-y|}\,dx\,dy \lesssim \|\psi^2\|_{6/5}^2\\
& \leq \|\psi\|_6 \|\psi\|_2^3 \lesssim \|\nabla\psi\|_2 \|\psi\|_2^3 \,.
\end{align}
This implies $\mathcal E^{\rm P}(\psi) \geq (1/2)\|\nabla\psi\|_2^2 - C\|\psi\|_2^6$ for all $\psi\in H^1_0(\Omega)$. Combining this inequality with the upper bound \eqref{a13} on $\mathcal E^{\rm P}(\psi_n)$ we easily infer that $(\psi_n)$ is bounded in $H^1_0(\Omega)$ and hence that $\|\psi_n-\psi^{\rm P}\|_{H^1(\Omega)}$ is bounded. Thus, \eqref{a13} implies that $(\psi_n)$ is a minimizing sequence for $\mathcal E^{\rm P}$. Therefore, by a simple compactness argument, after passing to a subsequence, $\psi_n$ converges in $H^1$ to a minimizer. Since $\psi_n\geq 0$, our assumed uniqueness of the minimizer implies that $\psi_n\to\psi^{\rm P}$. Thus, for all sufficiently large $n$, $\|\psi_n-\psi^{\rm P}\|_{H^1(\Omega)}\leq c$, where $c$ is the constant from Step 1. Therefore the inequality from Step 1 is applicable, but this bound contradicts \eqref{a13} for large $n$. This completes the proof.
\end{proof}

\section{Bounds on Solutions of Poisson's Equation}\label{sec:appa}

We consider solutions $u$ of the equation $-\Delta u=f$ in an open set $\Omega\subset\R^d$ with boundary conditions $u=0$ on $\partial\Omega$. We are interested in bounds on derivatives of $u$ in terms of derivatives of $f$, uniformly on small balls, possibly intersecting the boundary of $\Omega$. While we use these bounds only for $d=3$, it requires no extra effort to prove them in arbitrary dimension $d\geq 2$. 

\subsection{Statement of the Inequality}

Let $k\in\N$ and $\delta\in (0,1)$. We say that an open set $\Omega\subset\R^d$ is a $C^{k,\delta}$ set if there are constants $r_0>0$ and $M<\infty$ such that for any $x\in\partial\Omega$ there is a function $\Gamma:\{y'\in\R^{d-1}:\ |y'|<r_0\} \to\R$ satisfying $\Gamma(0)=0$, $\nabla\Gamma(0)=0$ and 
\begin{equation}\label{eq:a1}
\sum_{j=0}^k r_0^{j-1} \sup_{|y'|<r_0} |\partial^j\Gamma(y')| + r_0^{k-1+\delta} \sup_{|y'|,|z'|<r_0} \frac{|\partial^k\Gamma(y')-\partial^k\Gamma(z')|}{|y'-z'|^\delta} \leq M
\end{equation}
such that, after a translation and a rotation (which maps $x$ to $0$ and the exterior unit normal at $x$ to $(0,\ldots,0,-1)$, and is denoted by $\mathcal{T}_x$),
\begin{equation}
\mathcal{T}_x \left( \Omega\cap B_{r_0}(x) \right) = \{ (y',y_d)\in\R^{d-1}\times\R :\ |y'|<r_0 \,,\ y_d>\Gamma(y') \}\cap B_{r_0}(0) \,.
\end{equation}
Here and below we use the notation $|\partial^k f(x)|= (\sum_{|\beta|=k}|\partial^\beta f(x)|^2)^{1/2}$ and similarly $|\partial^{k}f(x)-\partial^{k}f(y)| = (\sum_{|\beta|=k}|\partial^\beta f(x) - \partial^\beta f(y)|^2)^{1/2} $, with $\partial^\beta = \partial_1^{\beta_1} \cdots \partial_d^{\beta_d}$ for $\beta\in \N_0^d$, and $|\beta|=\sum_{j=1}^d \beta_j$. The above definition of a $C^{k,\delta}$ set is standard (see, e.g., \cite[Sec.~6.2]{GiTr}), except possibly for the choice of the $r_0$ dependence in \eqref{eq:a1}. Our choice ensures scale invariance in the sense that if $\Omega$ is scaled by a factor $\lambda$, $r_0$ gets multiplied by $\lambda$ while $M$ stays the same.

\begin{theorem}\label{pdebound}
Let $k\in\N$, $0<\delta<1$, $R_0 > 0$ and $\Omega\subset\R^d$ be an open $C^{k,\delta}$ set. Then  we have, for all $a\in\Omega$ and all $R\leq R_0$, if $k=1$
\begin{equation}
\sum_{j=0}^{1} R^{j} \sup_{B_{R}(a)\cap\Omega}|\partial^j u| + R^{1+\delta} \sup_{x,y\in B_{R}(a)\cap\Omega} \frac{|\partial u(x)-\partial u(y)|}{|x-y|^\delta}
\lesssim \sup_{B_{2R}(a)\cap\Omega}|u| + R^2 \sup_{B_{2R}(a)\cap\Omega}|f|
\end{equation}
 and if $k\geq 2$ 
\begin{align}\nonumber
& \sum_{j=0}^{k} R^{j} \sup_{B_{R}(a)\cap\Omega}|\partial^j u| + R^{k+\delta} \sup_{x,y\in B_{R}(a)\cap\Omega} \frac{|\partial^{k}u(x)-\partial^{k}u(y)|}{|x-y|^\delta} \\
& \quad \lesssim \sup_{B_{2R}(a)\cap\Omega}|u| + \sum_{j=0}^{k-2} R^{j+2} \sup_{B_{2R}(a)\cap\Omega}|\partial^j f| + R^{k+\delta} \sup_{x,y\in B_{2R}(a)\cap\Omega} \frac{|\partial^{k-2}f(x)-\partial^{k-2}f(y)|}{|x-y|^\delta} \,.
\end{align}
The constants in these bounds depend only on $d$, $k$, $\delta$, $M$ and $R_0/r_0$.
\end{theorem}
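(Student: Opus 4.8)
The plan is to deduce the theorem from the classical interior and boundary Schauder estimates by a scaling-and-covering argument; the only real content is that all constants can be arranged to depend on $\Omega$ only through $M$ and $R_0/r_0$. The $r_0$-dependence built into \eqref{eq:a1} was chosen precisely so that the family of inequalities in the theorem is invariant under the dilation $x\mapsto a+Rx$. Concretely, if $-\Delta u=f$ on $B_{2R}(a)\cap\Omega$ with $u=0$ on $\partial\Omega\cap B_{2R}(a)$, then $u_R(z):=u(a+Rz)$ solves $-\Delta u_R = R^2 f(a+R\,\cdot\,)$ on $B_2(0)\cap\Omega_R$ with zero boundary values, where $\Omega_R:=R^{-1}(\Omega-a)$; one checks that $R^j\sup_{B_R(a)\cap\Omega}|\partial^j u|=\sup_{B_1(0)\cap\Omega_R}|\partial^j u_R|$, that $R^{k+\delta}$ times the $\delta$-Hölder seminorm of $\partial^k u$ on $B_R(a)\cap\Omega$ equals the corresponding seminorm of $\partial^k u_R$ on $B_1(0)\cap\Omega_R$, and similarly for the terms involving $f$; moreover $\Omega_R$ is a $C^{k,\delta}$ set with the same $M$ and with $r_0$ replaced by $r_0/R\geq r_0/R_0$. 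Since a $C^{k,\delta}$ set with parameter $r_0'$ is also one with any smaller parameter, it suffices to prove the case $R=1$, uniformly over all $C^{k,\delta}$ sets with parameters $(\rho_0,M)$, where $\rho_0:=\min\{1,r_0/R_0\}$, and over all $a$ in such a set; the resulting constant then depends only on $d,k,\delta,M,\rho_0$. One may also assume, by a standard approximation, that $u\in C^\infty(\Omega)\cap C(\overline\Omega)$, so that all quantities below are a priori finite.

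Next I would localize. Fix such an $\Omega$ and $a$, and cover $\overline{B_1(a)\cap\Omega}$ by finitely many balls of radius $\rho_0/16$, of two types: ``interior'' balls $B_{\rho_0/16}(z_i)$ with $\dist(z_i,\partial\Omega)\geq \rho_0/8$, and ``boundary'' balls $B_{\rho_0/16}(\xi_j)$ centered at points $\xi_j\in\partial\Omega$. A volume-packing argument bounds the number of balls needed in terms of $d$ and $\rho_0$ only, and (after shrinking $\rho_0$ if necessary) each concentric ball of radius $\rho_0/8$ is contained in $B_2(a)$, lies inside $\Omega$ in the interior case, and lies inside a single boundary coordinate neighbourhood $\mathcal T_{\xi_j}(\Omega\cap B_{\rho_0}(\xi_j))$ in the boundary case. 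Since $\sup|u|$ and the $f$-norm appearing on the right-hand side of the asserted inequality, both taken over $B_2(a)\cap\Omega$, dominate the corresponding quantities over each of these subregions, it suffices to bound the relevant $C^{k,\delta}$ norm of $u$ on each ball of the cover by $\sup|u|$ and the $f$-norm over the concentric doubled ball. This reduces the theorem to two classical model estimates on a fixed scale, with constants depending only on $d,k,\delta,M,\rho_0$.

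The two model estimates are as follows. On an interior ball $B_\rho(z)\subset\Omega$, the required bound is the classical interior Schauder estimate: for $k\geq 2$ it is \cite[Thm.~6.2]{GiTr} (in the form with only $\sup|u|$ on the right, obtained by combining the basic estimate with interpolation and a covering of $B_{\rho/2}$ by smaller balls), and for $k=1$ it follows from the interior $W^{2,p}$ estimates together with the Sobolev embedding $W^{2,p}\hookrightarrow C^{1,\delta}$ for $p$ large. For a boundary ball centered at $\xi\in\partial\Omega$, I would straighten the boundary: by \eqref{eq:a1} and $\nabla\Gamma(0)=0$ there is a $C^{k,\delta}$-diffeomorphism $\Phi$, a perturbation of the identity on the scale $\rho_0$ with $C^{k,\delta}$-norms controlled by $M$, mapping a half-ball onto a neighbourhood of $\xi$ in $\overline\Omega$ and the flat part onto $\partial\Omega$. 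Setting $v:=u\circ\Phi$, the equation becomes $\tilde L v=\tilde f$ with $\tilde L=-\sum a^{ij}\partial_i\partial_j+\sum b^i\partial_i$ uniformly elliptic (the ellipticity constant being controlled because $\Phi$ is $C^1$-close to the identity), with $a^{ij}\in C^{k-1,\delta}$, $b^i\in C^{k-2,\delta}$, $\tilde f\in C^{k-2,\delta}$, all with norms bounded in terms of $M,\rho_0$, and $v=0$ on the flat part. The required bound is then the classical boundary Schauder estimate for $\tilde L$ in a half-ball with zero Dirichlet data: for $k\geq 2$ the boundary analogue of \cite[Thm.~6.19]{GiTr}, and for $k=1$ one rewrites $-\Delta$ in the flattened coordinates in divergence form $-\partial_i(a^{ij}\partial_j\,\cdot\,)$ with $a^{ij}\in C^{0,\delta}$ and invokes the $C^{1,\delta}$ boundary estimate of \cite[Ch.~8]{GiTr}. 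Undoing $\Phi$ costs only factors controlled by $M,\rho_0$, giving the model bound on the boundary ball.

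The work is entirely in the uniformity, and that is where I expect the main obstacle. One must verify that the straightening map $\Phi$, and hence the coefficients and ellipticity constant of $\tilde L$, are controlled purely by $M$, $\rho_0$ and $d$; here the normalization $\nabla\Gamma(0)=0$ in \eqref{eq:a1} is essential, since it forces the chart to be a small perturbation of the identity and thereby keeps $\tilde L$ uniformly elliptic with small first-order part. One must also check that the covering in the localization step uses a number of balls controlled by $d$ and $\rho_0$. A secondary point is the regularity bookkeeping: in the borderline case $k=1$ the equation cannot be differentiated, forcing the use of $W^{2,p}$/divergence-form $C^{1,\delta}$ theory rather than classical Schauder, and for general $k$ one must note that flattening a $C^{k,\delta}$ boundary produces first-order coefficients only in $C^{k-2,\delta}$ — which is nonetheless exactly enough for a $C^{k,\delta}$ estimate on the solution when $f\in C^{k-2,\delta}$. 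None of this is deep; the theorem is the scale-invariant, localized repackaging of textbook Schauder theory, and the explicit dependence of the constants is the one thing that has to be tracked with care.
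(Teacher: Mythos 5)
Your proposal follows essentially the same route as the paper: exploit the scale invariance built into the definition \eqref{eq:a1}, treat interior balls by classical interior Schauder theory, and treat boundary balls by flattening with $\Phi(y)=(y',y_d-\Gamma(y'))$ and applying boundary Schauder estimates to the resulting variable-coefficient operator, with ellipticity and coefficient norms controlled by $M$. The differences are organizational. You rescale to $R=1$ at the outset and then cover $B_1(a)\cap\Omega$ by $O_{d,\rho_0}(1)$ balls of a fixed small radius, using unweighted H\"older norms on each; the paper instead keeps $R$ general, splits into the two cases $\dist(a,\partial\Omega)\geq 2R$ and $<2R$ (so that in the boundary case $B_{2R}(a)\cap\Omega$ sits in a \emph{single} chart once $R\leq r_1/4$), and encodes the interior/boundary dichotomy and the two-ball structure through the distance-weighted norms $|\cdot|^{(\sigma)}_{k,\delta,\Omega\cup T}$ of Lemmas~\ref{c1alpha} and~\ref{c2alpha}, covering by small balls only to pass from $R_0\leq r_1/4$ to general $R_0$. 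Both organizations work and cost about the same.

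The one place where your write-up glosses over the actual content is the invocation, for $k\geq 2$, of ``the boundary analogue of Theorem 6.19'' of Gilbarg--Trudinger. For $k=2$ the needed local boundary estimate is \cite[Lemma 6.4]{GiTr}; for $k\geq 3$ the local, quantitative, two-ball version with constants depending only on the $C^{k-2,\delta}$ coefficient norms and the ellipticity constant is \emph{not} proved in that reference --- it is only posed as Problem 6.2 there, as the paper explicitly notes. This is exactly the point where the paper does genuine work: Lemma~\ref{c2alpha} is proved by induction on $k$, differentiating the equation in the tangential directions $\partial_j$, $j\leq d-1$ (which preserves the zero Dirichlet condition on the flat portion $T$), and recovering $\partial_d^2 u$ algebraically from the equation, all in the weighted norms and with a shifted weight exponent $\sigma$ to absorb the loss of one power of the distance per derivative. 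Your proposal would be complete if you either supplied this induction or located a reference that states the local higher-order boundary estimate in the required uniform form; as written, the citation does not discharge the obligation. The remaining points you flag (uniform bi-Lipschitz control of $\Phi$ after shrinking the chart radius, the counting of balls in the cover, the $k=1$ case via divergence-form $C^{1,\delta}$ theory) are handled correctly and match the paper's treatment.
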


Dropping the H\"older semi-norm on the left side and estimating it on the right side in terms of one higher derivative, we obtain

\begin{corollary}\label{pdeboundcor}
Let $k\in\N$, $0<\delta<1$, $R_0 >0$ and $\Omega\subset\R^d$ be an open $C^{k,\delta}$ set. Then we have for all $a\in\Omega$ and all $R\leq R_0$,
\begin{equation}\label{eq:cora}
\sum_{j=0}^{k} R^{j} \sup_{B_{R}(a)\cap\Omega}|\partial^j u|  \lesssim \sup_{B_{2R}(a)\cap\Omega}|u| + \sum_{j=0}^{k-1} R^{j+2} \sup_{B_{2R}(a)\cap\Omega}|\partial^j f| \,.
\end{equation}
The constants in these bounds depend only on $d$, $k$, $\delta$, $M$ and $R_0/r_0$.
\end{corollary}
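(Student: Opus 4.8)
The plan is to read off \eqref{eq:cora} from Theorem~\ref{pdebound} by two elementary steps: discard the nonnegative H\"older semi-norm term on the left-hand side of that estimate, and on its right-hand side convert the H\"older semi-norm of $\partial^{k-2}f$ into a supremum of $\partial^{k-1}f$. For brevity, for $U\subset\R^d$ write $[g]_{\delta,U}=\sup_{x\neq y\in U}|x-y|^{-\delta}|g(x)-g(y)|$.

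For $k=1$ there is essentially nothing to prove: the $k=1$ case of Theorem~\ref{pdebound} already has right-hand side $\sup_{B_{2R}(a)\cap\Omega}|u|+R^2\sup_{B_{2R}(a)\cap\Omega}|f|$, which is the right-hand side of \eqref{eq:cora} for $k=1$, and simply omitting the term $R^{1+\delta}[\partial u]_{\delta,B_R(a)\cap\Omega}\ge 0$ on the left gives the claim. For $k\ge 2$, omitting the term $R^{k+\delta}[\partial^k u]_{\delta,B_R(a)\cap\Omega}\ge 0$ on the left of Theorem~\ref{pdebound} leaves $\sum_{j=0}^k R^j\sup_{B_R(a)\cap\Omega}|\partial^j u|$ there, so it remains only to bound the single term $R^{k+\delta}[\partial^{k-2}f]_{\delta,B_{2R}(a)\cap\Omega}$ occurring on the right. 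The goal is to show
\[
R^{k+\delta}\,[\partial^{k-2}f]_{\delta,B_{2R}(a)\cap\Omega}\ \lesssim\ R^{(k-1)+2}\,\sup_{B_{2R}(a)\cap\Omega}|\partial^{k-1}f|\,;
\]
this is precisely the $j=k-1$ term missing from $\sum_{j=0}^{k-2}R^{j+2}\sup_{B_{2R}(a)\cap\Omega}|\partial^j f|$, so adding it extends the sum to $0\le j\le k-1$ and produces \eqref{eq:cora}. The constant is inherited from Theorem~\ref{pdebound} together with that of the displayed estimate, hence depends only on $d$, $k$, $\delta$, $M$ and $R_0/r_0$.

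The displayed estimate would follow from the fundamental theorem of calculus: if $x,y\in B_{2R}(a)\cap\Omega$ can be joined by a rectifiable arc $\gamma\subset B_{2R}(a)\cap\Omega$ of length $\lesssim|x-y|$, then $|\partial^{k-2}f(x)-\partial^{k-2}f(y)|\le\mathrm{length}(\gamma)\,\sup_{B_{2R}(a)\cap\Omega}|\partial^{k-1}f|\lesssim|x-y|\,\sup_{B_{2R}(a)\cap\Omega}|\partial^{k-1}f|$, and since $|x-y|\le\mathrm{diam}\,B_{2R}(a)=4R$ one may write $|x-y|\le(4R)^{1-\delta}|x-y|^\delta$; dividing by $|x-y|^\delta$, taking the supremum over $x\neq y$, and multiplying by $R^{k+\delta}$ yields the displayed bound.

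The one point that genuinely uses the hypothesis on $\Omega$, and which I expect to be the main (though routine) obstacle, is the existence of such arcs $\gamma$, since $B_{2R}(a)\cap\Omega$ need not be convex. Here I would use that a $C^{k,\delta}$ set is in particular a Lipschitz set, so near any boundary point it coincides, on the scale $r_0$, with the epigraph of a Lipschitz function with constant controlled by $M$, which is quasiconvex, while balls sitting well inside $\Omega$ meet $\Omega$ in convex sets. Covering $B_{2R}(a)\cap\Omega$ by a controlled number of such pieces of radius comparable to $\min\{R,r_0\}$, whose enlargements stay inside $B_{2R}(a)$, and chaining the local path bounds along the cover, yields an arc $\gamma\subset B_{2R}(a)\cap\Omega$ with $\mathrm{length}(\gamma)\lesssim|x-y|$ and implicit constant depending only on $d$, $M$ and $R_0/r_0$. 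With this geometric input the displayed estimate holds, and the corollary follows.
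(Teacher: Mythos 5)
Your overall strategy is exactly the paper's: the paper disposes of this corollary in a single sentence ("dropping the H\"older semi-norm on the left side and estimating it on the right side in terms of one higher derivative"), and you have correctly reduced everything to the $k\ge 2$ estimate $R^{k+\delta}\,[\partial^{k-2}f]_{\delta,B_{2R}(a)\cap\Omega}\lesssim R^{k+1}\sup_{B_{2R}(a)\cap\Omega}|\partial^{k-1}f|$, the $k=1$ case being trivial. You have also correctly identified that the only real content here is geometric.

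However, the geometric claim you rely on --- that \emph{every} pair $x,y\in B_{2R}(a)\cap\Omega$ is joined by an arc $\gamma\subset B_{2R}(a)\cap\Omega$ of length $\lesssim|x-y|$ --- is false in general, and no covering/chaining argument can produce it: $B_{2R}(a)\cap\Omega$ need not even be connected (take $\Omega$ horseshoe-shaped and $B_{2R}(a)$ meeting both tips while the bend lies outside the ball). Moreover, even when a short arc exists in $\Omega$, the up-and-over path built from the local graph representation is contained in $\Omega\cap B_{C_M|x-y|}(x)$ and may leave $B_{2R}(a)$. Both defects are repaired by a dichotomy your write-up does not make. (i) If $|x-y|\geq c\,r_0$ --- which by the uniform graph condition is automatic whenever $x$ and $y$ cannot be joined by a short arc in $\Omega$, since two locally separated sheets of $\partial\Omega$ must be at mutual distance $\gtrsim r_0$ --- do not aim for $\sup|\partial^{k-1}f|$ at all: use $|\partial^{k-2}f(x)-\partial^{k-2}f(y)|\le 2\sup_{B_{2R}(a)\cap\Omega}|\partial^{k-2}f|$ together with $|x-y|\gtrsim r_0\gtrsim (r_0/R_0)R$, which yields the bound $R^{k}\sup_{B_{2R}(a)\cap\Omega}|\partial^{k-2}f|$, i.e.\ exactly the $j=k-2$ summand already present on the right of \eqref{eq:cora}, with a constant depending only on $R_0/r_0$ as allowed. (ii) If $|x-y|\le c\,r_0$, your arc does exist in $\Omega$ with length $\lesssim_M|x-y|$, but one must either accept a supremum over a dilated ball $B_{C'R}(a)\cap\Omega$ on the right (harmless for the applications in Appendix~\ref{sec:appb}) or first apply Theorem~\ref{pdebound} at radius $R/C'$ around a bounded number of centers covering $B_R(a)\cap\Omega$ so that the arcs stay inside $B_{2R}(a)$. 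With these two adjustments the argument closes; as written, the displayed arc claim is the one step that would fail.
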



\subsection{Local Estimates}

The more difficult assertion in Theorem \ref{pdebound} is for balls such that $B_{2R}(a)\cap\partial\Omega\neq\emptyset$. The strategy in this case will be to flatten the boundary, but this results in a second order elliptic equation with variable coefficients. In this subsection we state and prove bounds on solutions of such equations for domains with a flat boundary portion. 

Let $\Omega\subset\R^d_+ \coloneqq \R^{d-1} \times (0,\infty)$ be an open set with an open boundary portion $T$ on $\partial\R^d_+$. We emphasize explicitly that the case $T=\emptyset$ is allowed. For $x,y\in\Omega$ we write, following \cite[Sect.~4.4]{GiTr},
\begin{equation}
\overline d_x \coloneqq\dist(x,\partial\Omega\setminus T) \,,
\qquad
\overline d_{x,y} \coloneqq \min\{\overline d_x,\overline d_y\} \,,
\end{equation}
and introduce the norms
\begin{equation}
|u|_{k,\Omega\cup T}^{(\sigma)} \coloneqq \sum_{j=0}^k \sup_{x\in\Omega} \overline d_x^{j+\sigma} |\partial^j u(x)|
\end{equation}
and
\begin{equation}\label{eq:a8}
|u|_{k,\delta,\Omega\cup T}^{(\sigma)} \coloneqq \sum_{j=0}^k \sup_{x\in\Omega} \overline d_x^{j+\sigma} |\partial^j u(x)| + \sup_{x,y\in\Omega} \overline d_{x,y}^{k+\delta+\sigma} \frac{|\partial^k u(x)-\partial^k u(y)|}{|x-y|^\delta} \,.
\end{equation}
One readily checks that these norms satisfy  $|f g|_{k,\delta,\Omega\cup T}^{(\sigma_1+\sigma_2)} \lesssim |f |_{k,\delta,\Omega\cup T}^{(\sigma_1)}| g|_{k,\delta,\Omega\cup T}^{(\sigma_2)}$ as well as $| \partial f|_{k,\delta,\Omega\cup T}^{(\sigma)}\lesssim |  f|_{k+1,\delta,\Omega\cup T}^{(\sigma-1)}$ and $|  f|_{k,\delta,\Omega\cup T}^{(\sigma)} \lesssim |  f|_{k+1,\delta,\Omega\cup T}^{(\sigma)}$ with implicit constants depending only on $d$, $k$, $\delta$ and $\sigma$.

The following two lemmas are the  main technical ingredients in the proof of Theorem~\ref{pdebound}.

\begin{lemma}\label{c1alpha}
Let $0<\delta<1$ and $\Omega\subset\R^d_+$ be an open set with a boundary portion $T$ on $\partial\R^d_+$. Let
\begin{equation}
Lu=f +\nabla\cdot g \qquad\text{in}\ \Omega
\qquad\text{and}\qquad
u=0 \qquad\text{on}\ T \,,
\end{equation}
where
\begin{equation}
L = 
- \sum_{r,s=1}^d \partial_r a_{r,s}\partial_s \,.
\end{equation}
Then
\begin{equation}
|u|_{1,\delta,\Omega\cup T}^{(0)} \lesssim |u|_{0,\Omega\cup T}^{(0)} + |f|_{0,\Omega\cup T}^{(2)} + |g|_{0,\delta,\Omega\cup T}^{(1)} \,,
\end{equation}
with the implicit constant depending only on $d$, $\delta$, $\lambda$ and $\Lambda$, where
\begin{equation}
\sum_{r,s=1}^d |a_{r,s}|_{0,\delta,\Omega\cup T}^{(0)} \leq \Lambda 
\end{equation}
and $\lambda>0$ is a uniform lower bound on the lowest eigenvalue of the symmetric matrix defined by $a_{r,s}$.
\end{lemma}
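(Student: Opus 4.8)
The plan is to prove Lemma~\ref{c1alpha} by a localization-and-scaling argument that reduces it to the classical interior and flat-boundary Schauder estimates for the divergence-form operator $L$ on a ball of unit size. Fix $x_0\in\Omega$ and set $\rho=\tfrac14\overline d_{x_0}=\tfrac14\dist(x_0,\partial\Omega\setminus T)$. There are two cases. If $\dist(x_0,\partial\R^d_+)\geq\rho$, then $B_\rho(x_0)$ meets $\partial\Omega$ nowhere (boundary points of $\Omega$ lie either on $T\subset\partial\R^d_+$, hence at distance $\geq\rho$ from $x_0$, or in $\partial\Omega\setminus T$, hence at distance $\geq 4\rho$), and since $B_\rho(x_0)$ is connected and meets $\Omega$ it lies in $\Omega$; this is the interior case. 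Otherwise, let $x_0'$ be the orthogonal projection of $x_0$ onto $\partial\R^d_+$, so that $\dist(x_0',\partial\Omega\setminus T)>3\rho$; a short connectedness argument, using that $\R^d_+\cap B_{2\rho}(x_0')$ is convex and that $\partial\Omega\setminus T$ is closed and at distance $>\rho$ from $B_{2\rho}(x_0')$, shows $\Omega\cap B_{2\rho}(x_0')=\R^d_+\cap B_{2\rho}(x_0')$ and that the flat portion $\{y_d=0\}\cap B_{2\rho}(x_0')$ is contained in $T$; this is the flat-boundary case, in which $u$ vanishes on the flat part and $x_0\in\Omega\cap B_\rho(x_0')$. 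In both cases the relevant ball never meets the ``bad'' boundary $\partial\Omega\setminus T$, and on it $u$ solves $Lu=f+\nabla\cdot g$ with zero Dirichlet data on its (flat or empty) boundary portion.

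Next I rescale to unit size. With $v(y)=u(x_0+\rho y)$, and likewise $\tilde f_0,\tilde g_0,\tilde a_{rs}$ obtained by composing with $x_0+\rho\,\cdot\,$, the equation becomes $\tilde L v=\rho^2\tilde f_0+\nabla\cdot(\rho\,\tilde g_0)$ on $B_1$ (resp.\ $\R^d_+\cap B_1$), with $\tilde L=-\sum_{r,s}\partial_r\tilde a_{rs}\partial_s$. The point of the weight exponent $\sigma=0$ is that, since $\overline d_x$ is comparable to $\overline d_{x_0}$ throughout $B_\rho(x_0)$, the bound $\sum_{r,s}|a_{rs}|^{(0)}_{0,\delta,\Omega\cup T}\leq\Lambda$ forces $\|\tilde a_{rs}\|_{C^\delta(B_1)}\lesssim\Lambda$, while the ellipticity bound $\lambda$ is scale invariant; hence the rescaled problem has coefficients of uniformly bounded Hölder norm and ellipticity $\lambda$. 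The classical divergence-form Schauder estimate (see \cite[Ch.~8]{GiTr}) then gives
\begin{equation}
\|v\|_{C^{1,\delta}(\R^d_+\cap B_{1/2})}\lesssim\|v\|_{L^\infty(\R^d_+\cap B_1)}+\rho^2\|\tilde f_0\|_{L^\infty(\R^d_+\cap B_1)}+\rho\,\|\tilde g_0\|_{C^\delta(\R^d_+\cap B_1)}
\end{equation}
with constant depending only on $d,\delta,\lambda,\Lambda$ (in the interior case the intersections with $\R^d_+$ are simply omitted; the $f$-term is obtained from the $L^q$-theory with $q>d$ and then bounded by the $L^\infty$-norm). Undoing the scaling and using $\rho\simeq\overline d_{x_0}\simeq\overline d_x$ for $x\in B_\rho(x_0)$, this yields the pointwise bound
\begin{equation}
|u(x_0)|+\overline d_{x_0}|\nabla u(x_0)|+\overline d_{x_0}^{1+\delta}[\nabla u]_{C^\delta(B_{\rho/2}(x_0)\cap\Omega)}\lesssim|u|^{(0)}_{0,\Omega\cup T}+|f|^{(2)}_{0,\Omega\cup T}+|g|^{(1)}_{0,\delta,\Omega\cup T},
\end{equation}
the three terms on the right arising from $\sup_{B_\rho}|u|$, $\rho^2\sup_{B_\rho}|f|$, and $\rho\sup_{B_\rho}|g|+\rho^{1+\delta}[g]_{C^\delta(B_\rho)}$ matched against the weight exponents $0,2,1$.

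Taking the supremum over $x_0\in\Omega$ controls the first two terms of $|u|^{(0)}_{1,\delta,\Omega\cup T}$. For the weighted Hölder seminorm of $\nabla u$, given $x,y\in\Omega$ with $\overline d_x\leq\overline d_y$ (so $\overline d_{x,y}=\overline d_x$), I split into two regimes: if $|x-y|\geq\tfrac18\overline d_x$, then $\overline d_{x,y}^{1+\delta}\tfrac{|\nabla u(x)-\nabla u(y)|}{|x-y|^\delta}\lesssim\overline d_x|\nabla u(x)|+\overline d_y|\nabla u(y)|$, already estimated; if $|x-y|<\tfrac18\overline d_x$, then $y$ lies in the ball $B_{\rho/2}(x)\cap\Omega$ on which the local estimate above controls $[\nabla u]_{C^\delta}$ by $\lesssim\overline d_x^{-(1+\delta)}$ times the right-hand side, and multiplying by $\overline d_{x,y}^{1+\delta}=\overline d_x^{1+\delta}$ gives the claim. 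This establishes Lemma~\ref{c1alpha}; the same local estimates applied a priori also show that finiteness of the right-hand side forces $u\in C^{1,\delta}_{\rm loc}(\Omega\cup T)$, so the inequality is not vacuous.

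I expect the main obstacle to be the geometric bookkeeping in the first paragraph — verifying that the ball of radius comparable to $\overline d_{x_0}$ sees only the flat boundary portion $T$ and that $\Omega$ coincides there with a half-ball — together with making the scaling completely uniform, so that the final constant depends only on $d,\delta,\lambda,\Lambda$ and not on $\rho$ or on any feature of $\Omega$. The genuine analytic input, the unit-scale Schauder estimate for $\tilde L$, is classical; everything else is the dictionary between its hypotheses and the weighted norms $|\cdot|^{(\sigma)}_{k,\delta,\Omega\cup T}$.
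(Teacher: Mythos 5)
Your proof is correct, but it is organized differently from the paper's. The paper's proof of Lemma~\ref{c1alpha} traces through the Gilbarg--Trudinger machinery \emph{inside the weighted norms}: it starts from the explicit half-ball estimate for $-\Delta$ (GT~(4.46)), extends it to general $\Omega$ by the argument of GT~Theorem~4.12/4.8, passes to constant coefficients by a linear change of variables (GT~Lemma~6.1), and finally handles variable H\"older coefficients by the freezing/perturbation argument of GT~Lemma~6.4 --- each step carried out directly for the interior norms $|\cdot|^{(\sigma)}_{k,\delta,\Omega\cup T}$. You instead take the finished \emph{unit-scale} local Schauder estimate for the full variable-coefficient divergence-form operator (interior and flat-boundary versions, GT~Ch.~8) as a black box, and recover the weighted statement by a single localize--rescale--patch pass: balls of radius comparable to $\overline d_{x_0}$, the dichotomy between the interior and the flat-boundary half-ball configuration, and the standard near/far splitting of the H\"older quotient. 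The trade-off is that the paper must redo the perturbation argument in weighted norms but never needs to leave them, while you push all the variable-coefficient work into the cited unit-scale estimate and only need the (correct) observation that the weight exponents $0,2,1$ on $u,f,g$ are exactly what survives the dilation by $\rho\sim\overline d_{x_0}$; your geometric reduction to the two model configurations is also carried out correctly, including the verification that the flat portion of the half-ball lies in $T$. Two small points deserve care: the $L^q$ exponent needed to absorb the $f$-term into a $C^{1,\delta}$ estimate is $q>d/(1-\delta)$ rather than merely $q>d$ (harmless here since you bound by the $L^\infty$-norm on a unit ball), and the unit-scale Schauder inequality is an a priori estimate, so the qualitative $C^{1,\delta}_{\rm loc}(\Omega\cup T)$ regularity of $u$ must be known beforehand --- which it is in the paper's application, where $u$ is a finite combination of Dirichlet eigenfunctions, and which the paper's own proof also assumes implicitly.
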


For us the bound with $g=0$ suffices, but $g$ appears naturally in the proof.

\begin{proof}
A similar, but less precise bound appears in \cite[Corollary 8.36]{GiTr}. Since its proof is sketched only very briefly, we provide some more details. The starting point is \cite[(4.46)]{GiTr}, which proves the lemma in the case $L=-\Delta$ and $\Omega =B_R(x_0)\cap\R^d_+$ with $x_0\in\R^d_+$. By the same argument as in the proof of \cite[Theorem 4.12]{GiTr} (which is not given, but which is similar to the proof of \cite[Theorem 4.8]{GiTr}), this bound leads to Lemma \ref{c1alpha} for $L=-\Delta$, but for general $\Omega$. Using a simple change of variables as in the proof of \cite[Lemma 6.1]{GiTr} we obtain the lemma for $L=-\nabla\cdot A\nabla$ with a constant matrix $A$ again for a general $\Omega$. Finally, using the perturbation argument as in the proof of \cite[Lemma 6.4]{GiTr} (which again is not given, but which is similar to the proof of \cite[Theorem 6.2]{GiTr}) we obtain the lemma.
\end{proof}

\begin{lemma}\label{c2alpha}
Let $k\geq 2$, $0<\delta<1$ and $\Omega\subset\R^d_+$ be an open set with a boundary portion $T$ on $\partial\R^d_+$. Let
\begin{equation}
Lu=f \qquad\text{in}\ \Omega
\qquad\text{and}\qquad
u=0 \qquad\text{on}\ T \,,
\end{equation}
where
\begin{equation}
L = - \sum_{r,s=1}^d a_{r,s} \partial_r\partial_s + \sum_{r=1}^d b_r\partial_r\,.
\end{equation}
Then
\begin{equation}
|u|_{k,\delta,\Omega\cup T}^{(0)} \lesssim |u|_{0,\Omega\cup T}^{(0)} + |f|_{k-2,\delta,\Omega\cup T}^{(2)}
\end{equation}
with the implicit constant depending only on $d$, $k$, $\delta$, $\lambda$ and $\Lambda$, where
\begin{equation}
\sum_{r,s=1}^d |a_{r,s}|_{k-2,\delta,\Omega\cup T}^{(0)} + \sum_{r=1}^d |b_r|_{k-2,\delta,\Omega\cup T}^{(1)}\leq \Lambda 
\end{equation}
and $\lambda>0$ is a uniform lower bound on the lowest eigenvalue of the symmetric matrix defined by $a_{r,s}$.

\end{lemma}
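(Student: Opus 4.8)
The plan is to follow the same route as the proof of Lemma~\ref{c1alpha}: first the model case $L=-\Delta$ on a half-ball, then a general domain with a flat boundary portion via scaling and covering, then a constant positive-definite principal part reduced to $-\Delta$ by a linear change of variables, and finally the general variable-coefficient operator by freezing coefficients and perturbation. \emph{Model case.} For $L=-\Delta$, $\Omega=B_R(x_0)\cap\R^d_+$ and $T=\partial\R^d_+\cap B_R(x_0)$ (with $T=\emptyset$ allowed), the required bound is the $C^{k,\delta}$ Schauder estimate for Poisson's equation with a flat, homogeneous boundary portion: the interior part is \cite[Thm.~4.8]{GiTr} and the flat-boundary part is the $C^{k,\delta}$ analogue of the half-ball estimate \cite[(4.46)]{GiTr} used in the proof of Lemma~\ref{c1alpha}, obtained for $k\ge 3$ by the bootstrap in which the tangential derivatives $\partial_j u$, $j<d$, solve $-\Delta(\partial_j u)=\partial_j f$ with zero data on $T$, while the normal second derivative is recovered from the equation via $\partial_d^2u=-\sum_{r<d}\partial_r^2u-f$. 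A scaling-and-covering argument exactly as in the step from \cite[(4.46)]{GiTr} to \cite[Thm.~4.12]{GiTr} then yields the weighted bound $|u|^{(0)}_{k,\delta,\Omega\cup T}\lesssim|u|^{(0)}_{0,\Omega\cup T}+|f|^{(2)}_{k-2,\delta,\Omega\cup T}$ for an arbitrary $\Omega\subset\R^d_+$ with boundary portion $T\subset\partial\R^d_+$, and a linear change of variables as in \cite[Lemma~6.1]{GiTr} extends it to $L=-\sum_{r,s}a_{r,s}\partial_r\partial_s$ with a constant symmetric matrix $(a_{r,s})$ of lowest eigenvalue $\ge\lambda$ and $\sum_{r,s}|a_{r,s}|\le\Lambda$.

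\emph{Variable coefficients.} For the general $L$, I would run the freezing-coefficients argument behind \cite[Lemma~6.4]{GiTr}/\cite[Thm.~6.2]{GiTr}. Fixing $x_0\in\Omega$, write
\[
-\sum_{r,s}a_{r,s}(x_0)\,\partial_r\partial_s u \;=\; f-\sum_r b_r\,\partial_r u+\sum_{r,s}\bigl(a_{r,s}(x_0)-a_{r,s}\bigr)\partial_r\partial_s u\;=:\;F_{x_0}\,,
\]
and apply the constant-coefficient estimate on a covering of $\Omega$ by interior balls — half-balls with a face on $T$ when $x$ is near $T$ — of radius a fixed small fraction of $\overline d_x$. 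Using the product and differentiation rules for the weighted norms recorded just before Lemma~\ref{c1alpha}: on each such ball the last sum is bounded by $\omega\,|u|^{(0)}_{k,\delta}$, with $\omega$ the oscillation of the $a_{r,s}$ over that ball, which by $\sum_{r,s}|a_{r,s}|^{(0)}_{0,\delta}\le\Lambda$ is $O(\Lambda)$ times a positive power of the ratio of the radius to $\overline d_x$ and hence uniformly small, so after summing over the covering it is absorbed into the left-hand side; the $b_r$-term is absorbed by the interpolation inequality $|u|^{(1)}_{k-1,\delta,\Omega\cup T}\le\eps\,|u|^{(0)}_{k,\delta,\Omega\cup T}+C_\eps\,|u|^{(0)}_{0,\Omega\cup T}$ together with $\sum_r|b_r|^{(1)}_{k-2,\delta,\Omega\cup T}\le\Lambda$; and the derivatives of the coefficients up to order $k-2$ produced when differentiating the frozen equation are controlled by the assumed bounds on $|a_{r,s}|^{(0)}_{k-2,\delta}$ and $|b_r|^{(1)}_{k-2,\delta}$ via the product rule. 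This gives $|u|^{(0)}_{k,\delta,\Omega\cup T}\lesssim|u|^{(0)}_{0,\Omega\cup T}+|f|^{(2)}_{k-2,\delta,\Omega\cup T}$, with implicit constant depending only on $d,k,\delta,\lambda,\Lambda$.

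\emph{Main obstacle.} The main difficulty is not a single hard idea but the careful bookkeeping of this localization-and-absorption step with the interior (weighted) Hölder norms in the presence of the flat boundary portion $T$: the covering must use balls of radius comparable to $\overline d_x=\dist(x,\partial\Omega\setminus T)$, replaced by half-balls with a face on $T$ for $x$ near $T$; one must check that the model estimate applies on each with a constant depending only on $d,k,\delta,\lambda,\Lambda$; and one must verify that summing the local estimates reconstructs exactly the global weighted norm on the left while keeping the frozen-coefficient error coefficient small uniformly in $x_0$. Since all of this parallels the corresponding arguments in \cite[Ch.~4 and Ch.~6]{GiTr} already used for Lemma~\ref{c1alpha}, I would carry it out by adapting those proofs, the only genuinely new ingredient being the induction on $k\ge 2$ via differentiation of the frozen equation.
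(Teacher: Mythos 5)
Your proposal is essentially correct, but it organizes the argument differently from the paper. The paper takes the variable-coefficient $k=2$ case as a black box from \cite[Lemma 6.4]{GiTr} (where the freezing-of-coefficients perturbation is already done) and then handles $k\geq 3$ by inducting directly on the \emph{variable-coefficient} equation: the tangential derivatives $\partial_j u$, $j<d$, solve $L(\partial_j u)=\partial_j f+\sum(\partial_j a_{r,s})\partial_r\partial_s u-\sum(\partial_j b_r)\partial_r u$ with zero data on $T$, and $\partial_d^2u$ is recovered from the equation; the price is that the commutator terms force one to prove the statement for the whole family of weighted norms $|\cdot|^{(\sigma)}_{k,\delta,\Omega\cup T}$ with arbitrary $\sigma\geq 0$, since each differentiation shifts the weight by one. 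You instead push the tangential bootstrap only through the constant-coefficient model case and then run the freezing-of-coefficients perturbation at order $k$. Both routes work; the paper's buys economy (the only analytic input beyond norm algebra is the $k=2$ result, and no higher-order interpolation inequalities are needed), while yours is the textbook higher-order Schauder route and avoids the $\sigma$-generalization. The one place in your write-up that needs more care is the claim that the frozen-coefficient error $\sum(a_{r,s}(x_0)-a_{r,s})\partial_r\partial_s u$ is bounded in the relevant $|\cdot|^{(2)}_{k-2,\delta}$ norm by $\omega\,|u|^{(0)}_{k,\delta}$ with $\omega$ the oscillation: only the top-order piece $\sup|a(x_0)-a|\cdot[\partial^k u]_\delta$ carries the small factor $\omega$, whereas the remaining Leibniz terms (e.g.\ $[a]_\delta\sup|\partial^k u|$ and terms with derivatives falling on $a$) are not small and must be absorbed via the higher-order interpolation inequalities $|u|^{(0)}_{k-1,\delta}\leq\eps|u|^{(0)}_{k,\delta}+C_\eps|u|^{(0)}_{0}$ for the weighted interior norms, which \cite{GiTr} states only for $k=2$ and which you would need to supply for general $k$. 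Since you invoke exactly these interpolation inequalities for the $b_r$ term, this is an imprecision rather than a gap, but it is where the bookkeeping in your version is genuinely heavier than in the paper's.
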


\begin{proof}
Lemma \ref{c2alpha} with $k=2$ coincides with \cite[Lemma 6.4]{GiTr}. Estimates similar to, but less precise than our statement for $k\geq 3$ are stated as \cite[Problem 6.2]{GiTr}, but without any details.

We shall show that for any integer $k\geq 2$ and any  $\sigma\geq 0$,
\begin{equation}
\label{eq:c1alphagen}
|u|_{k,\delta,\Omega\cup T}^{(\sigma)} \lesssim |u|_{0,\Omega\cup T}^{(\sigma)} + |f|_{k-2,\delta,\Omega\cup T}^{(\sigma+2)}
\end{equation}
where the implicit constant depends only on $d$, $k$, $\delta$, $\sigma$, $\lambda$ and $\Lambda$. We will prove this by induction on $k$.

First, let $k=2$. For $\sigma=0$ the claimed inequality is \cite[Lemma 6.4]{GiTr} (whose proof is not given, but which is similar to the proof of \cite[Theorem 6.2]{GiTr}). The proof for $\sigma>0$ follows by the same argument.

Now let $k\geq 3$ and $\sigma\geq 0$. We assume the inequality has already been shown for all smaller values of $k$ and for all values of $\sigma$. For $1\leq j\leq d-1$ the function $v=\partial_j u$ satisfies
\begin{equation}
Lv=\tilde f \qquad\text{in}\ \Omega
\qquad\text{and}\qquad
v=0 \qquad\text{on}\ T \,,
\end{equation}
where
\begin{equation}
\tilde f = \partial_j f + \sum_{r,s=1}^d (\partial_j a_{r,s}) \partial_r\partial_s u -\sum_{r=1}^d (\partial_j b_r)\partial_r u \,.
\end{equation}
Therefore, by the induction assumption \eqref{eq:c1alphagen} with $\sigma$ replaced by $\sigma+1$,
\begin{align}\nonumber
& |v|_{k-1,\delta,\Omega\cup T}^{(\sigma+1)} \\ \nonumber & \lesssim |v|_{0,\Omega\cup T}^{(\sigma+1)} + |\tilde f|_{k-3,\delta,\Omega\cup T}^{(\sigma+3)} \\
\nonumber & \leq |\partial_j u|_{0,\Omega\cup T}^{(\sigma+1)} + |\partial_j f|_{k-3,\delta,\Omega\cup T}^{(\sigma+3)} + \sum_{r,s} |(\partial_j a_{r,s}) \partial_r\partial_s u|_{k-3,\delta,\Omega\cup T}^{(\sigma+3)} + \sum_r |(\partial_j b_r)\partial_r u|_{k-3,\delta,\Omega\cup T}^{(\sigma+3)} \\
\nonumber& \lesssim |\partial_j u|_{0,\Omega\cup T}^{(\sigma+1)} + |\partial_j f|_{k-3,\delta,\Omega\cup T}^{(\sigma+3)} + \sum_{r,s} |\partial_j a_{r,s}|_{k-3,\delta,\Omega\cup T}^{(1)} |\partial_r\partial_s u|_{k-3,\delta,\Omega\cup T}^{(\sigma+2)} \\
\nonumber& \qquad\qquad\qquad+ \sum_r |\partial_j b_r|_{k-3,\delta,\Omega\cup T}^{(2)} |\partial_r u|_{k-3,\delta,\Omega\cup T}^{(\sigma+1)} \\
\nonumber& \lesssim |u|_{1,\Omega\cup T}^{(\sigma)} + |f|_{k-2,\delta,\Omega\cup T}^{(\sigma+2)} + \sum_{r,s} |a_{r,s}|_{k-2,\delta,\Omega\cup T}^{(0)} |u|_{k-1,\delta,\Omega\cup T}^{(\sigma)} + \sum_r |b_r|_{k-2,\delta,\Omega\cup T}^{(1)} |u|_{k-2,\delta,\Omega\cup T}^{(\sigma)} \\
& \lesssim |f|_{k-2,\delta,\Omega\cup T}^{(\sigma+2)} + |u|_{k-1,\delta,\Omega\cup T}^{(\sigma)} \,,
\end{align}
where we have used the properties of the norms discussed after Eq.~\eqref{eq:a8}. 
Bounding the last term on the right side using the induction assumption with $\sigma$, we finally obtain
\begin{equation}
\label{eq:c3alpha1}
|\partial_j u|_{k-1,\delta,\Omega\cup T}^{(\sigma+1)} \lesssim |u|_{0,\Omega\cup T}^{(\sigma)} + |f|_{k-2,\delta,\Omega\cup T}^{(\sigma+2)}
\qquad\text{if}\ j=1,\ldots, d-1 \,.
\end{equation}

On the other hand, we have
\begin{equation}
\partial_d^2 u = \frac{1}{a_{dd}} \left( - \sum_{(r,s)\neq(d,d)} a_{rs}\partial_r\partial_s u + \sum_r b_r \partial_r u - f \right)
\end{equation}
and therefore
\begin{align}\nonumber
|\partial_d^2 u|_{k-2,\delta,\Omega\cup T}^{(\sigma+2)} & \leq | a_{dd}^{-1}|_{k-2,\delta,\Omega\cup T}^{(0)} \left( \sum_{(r,s)\neq(d,d)} |a_{rs}|_{k-2,\delta,\Omega\cup T}^{(0)} |\partial_r\partial_s u|_{k-2,\delta,\Omega\cup T}^{(\sigma+2)} \right. \\
 & \qquad\qquad\qquad \left.
+ \sum_r |b_r|_{k-2,\delta,\Omega\cup T}^{(1)} |\partial_r u|_{k-2,\delta,\Omega\cup T}^{(\sigma+1)} + |f|_{k-2,\delta,\Omega\cup T}^{(\sigma+2)} \right) \,.
\end{align}
Our assumptions imply that $| a_{dd}^{-1}|_{k-2,\delta,\Omega\cup T}^{(0)}$ is bounded in terms of $\Lambda$ and $\lambda$. Moreover, $|\partial_r u|_{k-2,\delta,\Omega\cup T}^{(\sigma+1)}$ is bounded above  for any $1\leq r\leq d$ by $|u|_{k-1,\delta,\Omega\cup T}^{(\sigma)}$, which by the induction hypothesis  \eqref{eq:c1alphagen} is bounded by $|u|_{0,\Omega\cup T}^{(\sigma)} + |f|_{k-3,\delta,\Omega\cup T}^{(\sigma+2)}$. We thus conclude that
\begin{align}\nonumber
 |\partial_d^2 u|_{k-2,\delta,\Omega\cup T}^{(\sigma+2)} &  \lesssim \sum_{(r,s)\neq(d,d)} |\partial_r\partial_s u|_{k-2,\delta,\Omega\cup T}^{(\sigma+2)} + \sum_r |\partial_r u|_{k-2,\delta,\Omega\cup T}^{(\sigma+1)} + |f|_{k-2,\delta,\Omega\cup T}^{(\sigma+2)} \\
& \lesssim \sum_{j=1}^{d-1} |\partial_j u|_{k-1,\delta,\Omega\cup T}^{(\sigma+1)} + |f|_{k-2,\delta,\Omega\cup T}^{(\sigma+2)} \,.
\end{align}
Combining this with \eqref{eq:c3alpha1} we obtain the claimed estimate on $|u|_{k,\delta,\Omega\cup T}^{(\sigma)}$. This completes the proof  of Lemma \ref{c2alpha}. 
\end{proof}


\subsection{Proof of Theorem \ref{pdebound}}

We first assume that $\dist(a,\partial\Omega)\geq 2R$. In this case $B_{2R}(a)\subset\Omega$ and we can apply Lemmas \ref{c1alpha} and \ref{c2alpha} with $L=-\Delta$, $T=\emptyset$ and $B_{2R}(a)$ playing the role of $\Omega$. Since
\begin{equation}
|u|_{k,\delta,\Omega}^{(\sigma)} \leq \sum_{j=0}^k \sup_{x\in B_{2R}(a)} (2R)^{j+\sigma} |\partial^j u(x)| + \sup_{x,y\in B_{2R}(a)} (2R)^{k+\delta+\sigma} \frac{|\partial^k u(x)-\partial^k u(y)|}{|x-y|^\delta}
\end{equation}
(and similarly with $u$ replaced by $f$) 
and
\begin{equation}
|u|_{k,\delta,\Omega}^{(\sigma)} \geq \sum_{j=0}^k \sup_{x\in B_R(a)} R^{j+\sigma} |\partial^j u(x)| + \sup_{x,y\in B_R(a)} R^{k+\delta+\sigma} \frac{|\partial^k u(x)-\partial^k u(y)|}{|x-y|^\delta} \,,
\end{equation}
we immediately obtain the bound in this case. (Of course, in order to prove the bounds much simpler versions of Lemmas \ref{c1alpha} and \ref{c2alpha} would suffice.)

Now assume that $\dist(a,\partial\Omega)< 2R$. We set
\begin{equation}\label{eq:a27}
r_1 =
\begin{cases}
(2M)^{-1/\delta}r_0 & \text{if}\ k=1 \,,\\
(2M)^{-1}r_0 & \text{if}\ k\geq 2\,.
\end{cases}
\end{equation}
Without loss of generality we assume $M\geq \frac 12$, hence $r_1\leq r_0$. 
We will first assume that $R\leq r_1/4$, which implies that if $p\in\partial\Omega$ is chosen with $|p-a|=\dist(a,\partial\Omega)$, then
\begin{equation}
B_{2R}(a) \cap\Omega \subset B_{r_1}(p)\cap\Omega \,.
\end{equation}
(Indeed, if $|y-a|<2R$, then $|y-p|\leq |y-a|+|a-p|<2R+\dist(a,\partial\Omega)\leq 4R\leq r_1$.) Therefore, we can work in the boundary coordinates from the definition of a $C^{k,\delta}$ domain centered at the point $p$. After a translation and a rotation we may assume that $p=0$ and that there is a function $\Gamma:\{y'\in\R^{d-1}:\ |y'|<r_0\}\to\R$ with $\Gamma(0)=0$, $\nabla\Gamma(0)=0$ and
\begin{equation}
\Omega\cap B_{r_0}(0) = \{ (y',y_d)\in\R^{d-1}\times\R :\ |y'|<r_0 \,,\ y_d>\Gamma(y') \}\cap B_{r_0}(0) \,.
\end{equation}
We introduce the change of variables $\Phi:\Omega\cap B_{r_0}(0)\to\R^d_+$,
\begin{equation}
\Phi_m(y) = y_m \,\ \text{if}\ 1\leq m\leq d-1 \,,
\qquad
\Phi_d(y) = y_d-\Gamma(y') \,.
\end{equation}
The following lemma shows that decreasing $r_0$ to $r_1$ ensures that $\Phi$ is bi-Lipschitz.

\begin{lemma}\label{lem:biL}
For $x,y\in \Omega\cap B_{r_1}(0)$, we have
\begin{equation}
\frac{1}{2}|x-y| \leq \left|\Phi(x)-\Phi(y)\right| \leq \frac32 |x-y| \,.
\end{equation}
\end{lemma}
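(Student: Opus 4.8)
\emph{Proof proposal.} The plan is to reduce the whole statement to a single gradient bound, namely
\[
\sup_{|y'|<r_1} |\nabla\Gamma(y')| \le \tfrac12 \,,
\]
and then to view $\Phi$ as a perturbation of the identity acting only in the last coordinate. First I would note that since $x,y\in B_{r_1}(0)$ we have $|x'|,|y'|<r_1\le r_0$, and since $B_{r_1}(0)\subset\R^{d-1}$ is convex the whole segment $[x',y']$ stays in $B_{r_1}(0)$, so $\Gamma$ and the displayed bound apply along it. To prove the gradient bound I use $\nabla\Gamma(0)=0$ and distinguish the two cases in \eqref{eq:a27}. If $k=1$, then for $|y'|<r_1$,
\[
|\nabla\Gamma(y')| = |\nabla\Gamma(y')-\nabla\Gamma(0)| \le [\nabla\Gamma]_{C^{0,\delta}}\,|y'|^\delta \le M r_0^{-\delta} r_1^\delta = \tfrac12 \,,
\]
where $[\nabla\Gamma]_{C^{0,\delta}}\le M r_0^{-\delta}$ comes from the last term in \eqref{eq:a1} and $r_1^\delta=(2M)^{-1}r_0^\delta$. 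If $k\ge2$, then for $|y'|<r_1$,
\[
|\nabla\Gamma(y')| \le \Bigl(\sup_{|z'|<r_0}|\partial^2\Gamma(z')|\Bigr)\,|y'| \le M r_0^{-1} r_1 = \tfrac12 \,,
\]
where $\sup|\partial^2\Gamma|\le M r_0^{-1}$ is the $j=2$ term in \eqref{eq:a1} and $r_1=(2M)^{-1}r_0$.

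Next I would set $a=x'-y'\in\R^{d-1}$, $b=x_d-y_d$ and $c=\Gamma(x')-\Gamma(y')$, so that $x-y=(a,b)$ and, by the definition of $\Phi$, $\Phi(x)-\Phi(y)=(a,\,b-c)$. The fundamental theorem of calculus along $[x',y']$ together with the gradient bound gives $|c|\le\tfrac12|a|\le\tfrac12|x-y|$. The upper estimate then follows from the triangle inequality,
\[
|\Phi(x)-\Phi(y)| = |(a,b-c)| \le |(a,b)| + |(0,c)| \le |x-y| + \tfrac12|x-y| = \tfrac32|x-y| \,,
\]
and the lower estimate from
\[
|x-y| = |(a,b)| \le |(a,b-c)| + |(0,c)| \le |\Phi(x)-\Phi(y)| + \tfrac12|x-y| \,,
\]
which rearranges to $|\Phi(x)-\Phi(y)|\ge\tfrac12|x-y|$.

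There is no substantial obstacle here: the only point requiring care is checking that the specific values of $r_1$ in \eqref{eq:a27} are precisely calibrated so that $|\nabla\Gamma|\le\tfrac12$ on $B_{r_1}(0)$, which is the short case-by-case computation above using $\nabla\Gamma(0)=0$ and the bounds built into the definition \eqref{eq:a1} of a $C^{k,\delta}$ set.
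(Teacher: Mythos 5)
Your proof is correct and follows essentially the same route as the paper: both reduce the claim to the Lipschitz bound $|\Gamma(x')-\Gamma(y')|\le\tfrac12|x'-y'|$ on $B_{r_1}(0)$, obtained from $\nabla\Gamma(0)=0$ together with the H\"older seminorm (for $k=1$) or the second-derivative bound (for $k\ge2$) in \eqref{eq:a1}, and then conclude by the triangle inequality. The only cosmetic difference is that you first isolate the uniform gradient bound $|\nabla\Gamma|\le\tfrac12$ and then integrate, whereas the paper bounds the integral directly; the calibration of $r_1$ is verified identically.
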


\begin{proof}
For $x,y\in\Omega\cap B_{r_0}(0)$ we have by the triangle inequality
\begin{equation}
\left| \left|\Phi(x)-\Phi(y)\right| - \left|x-y\right|\right| \leq \left|\Gamma(x')-\Gamma(y')\right|.
\end{equation}
In order to further bound this, we write, using $\nabla\Gamma(0)=0$,
\begin{equation}
\Gamma(x')-\Gamma(y')  = \int_0^1 (x'-y')\cdot(\nabla\Gamma(y'+t(x'-y'))-\nabla\Gamma(0))\,dt \,.
\end{equation}
When $k=1$, we obtain
\begin{equation}
\left| \Gamma(x')-\Gamma(y') \right| \leq M r_0^{-\delta} \int_0^1 |x'-y'| |y'+t(x'-y')|^\delta\,dt
\leq M r_0^{-\delta} \max\{|x'|,|y'|\}^\delta |x'-y'| \,.
\end{equation}
For $|x'|,|y'|\leq r_1 = (2M)^{-1/\delta} r_0$, this is $\leq |x'-y'|/2$. The argument for $k\geq 2$ is similar.
\end{proof}

Let $\tilde\Omega=\Phi(B_{2R}(a)\cap\Omega)$. This is an open set in $\R^d_+$ with a boundary portion $T=\Phi(B_{2R}(a)\cap\partial\Omega)$ on $\partial\R^d_+$. For a function $g$ on $B_{2R}(a)\cap\Omega$ we define a function $\tilde g$ on $\tilde\Omega$ by
\begin{equation}
\tilde g(x) = g(\Phi^{-1}(x)) \,.
\end{equation}
We claim that
\begin{equation}
\label{eq:equivnorm}
|\tilde g|_{k,\delta,\tilde\Omega\cup T}^{(\sigma)} \lesssim 
\sum_{j=0}^k R^{j+\sigma} \sup_{B_{2R}(a)\cap\Omega} |\partial^j g| + R^{k+\delta+\sigma} \sup_{x,y\in B_{2R}(a)\cap\Omega} \frac{|\partial^k g(x)-\partial^k g(y)|}{|x-y|^\delta}
\end{equation}
with an implicit constant depending only on $d$, $k$, $\delta$ and $M$. Indeed, by Lemma~\ref{lem:biL}, for $x\in B_{2R}(a)\cap\Omega$,
\begin{equation}
\dist(\Phi(x),\partial\tilde\Omega\setminus T) \leq \frac{3}{2} \dist(x,\partial( B_{2R}(a)\cap\Omega)\setminus(B_{2R}(a)\cap\partial\Omega)) \leq 3R\,.
\end{equation}
Moreover, for $j\leq d-1$, we have $\partial_j\tilde g = \partial_j g + \partial_d g \partial_j \Gamma$, and $\partial_d\tilde g = \partial_d g$. Since $|\partial_j \Gamma|\leq M$, we see that $|\partial \tilde g|\lesssim |\partial g|$. When computing a second derivative, also a term like $\partial_d g \partial_j\partial_k\Gamma$ appears. Bounding $|\partial_j\partial_k\Gamma|\leq M r_0^{-1}$ and $R\lesssim r_0$, we obtain $|\partial^2 \tilde g|\lesssim |\partial^2 g| + R^{-1} |\partial g|$. The arguments for higher derivatives and for the H\"older term are similar.

After these preliminaries we now return to our differential equation. We have $-\Delta u =f$ in $\Omega\cap B_{2R}(a)$ and $u=0$ on $\partial\Omega\cap B_{2R}(a)$. Therefore the functions
\begin{equation}
\tilde u(x) = u(\Phi^{-1}(x)) \,,
\qquad
\tilde f(x) = f(\Phi^{-1}(x)) \,
\end{equation}
satisfy
\begin{equation}
L \tilde u = \tilde f
\qquad\text{in}\ \tilde\Omega
\qquad\text{and}\qquad
\tilde u = 0
\qquad\text{on}\ T
\end{equation}
with the operator
\begin{equation}
L= -\sum_{r,s=1}^d \partial_r a_{r,s} \partial_s \,,
\end{equation}
where
\begin{equation}
a_{r,s}=
\begin{cases}
\delta_{r,s} & \text{if}\ r,s\leq d-1 \,. \\
1+(\nabla\Gamma)^2 & \text{if}\ r=s=d \,,\\
- \partial_r\Gamma & \text{if}\ r<d=s \,,\\
- \partial_s\Gamma & \text{if}\ s<d=r \,.
\end{cases}
\end{equation}
A  straightforward computation shows that the smallest eigenvalue of the matrix defined by $a_{r,s}$ is given by $1+\frac 12((\nabla\Gamma)^2-\sqrt{(\nabla\Gamma)^4+ 4(\nabla\Gamma)^2})$. The function $t\mapsto 1+\frac 12 (t-\sqrt{t^2+4t})$ is positive for $t\geq 0$ and strictly decreasing to $0$ as $t\to\infty$. Therefore, since $|\nabla\Gamma|\leq M$ by our definition of $C^{k,\delta}$ smoothness, we see that the lowest eigenvalue is uniformly bounded below by some $\lambda>0$ depending only on $M$.

Moreover, using the definition of a $C^{k,\delta}$-set and the fact that $R\lesssim r_0$, we deduce from \eqref{eq:equivnorm} that
\begin{equation}
\sum_{r,s} |a_{r,s}|_{k-1,\delta,\tilde\Omega\cup T}^{(0)} \leq \Lambda
\end{equation}
with $\Lambda$ depending only on $d$, $k$, $\delta$ and $M$. Similarly, for 
\begin{equation}
b_r = -\sum_{s=1}^d \partial_s a_{sr} = \begin{cases}
0 & \text{if}\ r\leq d-1 \,,\\
\Delta\Gamma & \text{if}\ r=d \,,
\end{cases}
\end{equation}
and $k\geq 2$, we have
\begin{equation}
\sum_{r} |b_{r}|_{k-2,\delta,\tilde\Omega\cup T}^{(1)} \leq \Lambda \,.
\end{equation}
From Lemmas \ref{c1alpha} and \ref{c2alpha} we conclude that
\begin{equation}
\label{eq:inequalitytilde}
\| \tilde u \|_{k,\delta,\tilde\Omega\cup T}^{(0)} \lesssim
\| \tilde u\|_{0,\tilde\Omega\cup T}^{(0)} +
\begin{cases}
\|\tilde f\|_{0,\tilde\Omega\cup T}^{(2)} & \text{if}\ k=1 \,,\\
\|\tilde f\|_{k-2,\delta,\tilde\Omega\cup T}^{(2)} & \text{if}\ k\geq 2 \,.
\end{cases}
\end{equation}
According to \eqref{eq:equivnorm}, the right side of \eqref{eq:inequalitytilde} can be further bounded by a constant (depending only on $d$, $k$, $\delta$ and $M$) times
\begin{equation}
\sup_{B_{2R}(a)\cap\Omega}|u| +
\begin{cases}
R^2 \sup\limits_{B_{2R}(a)\cap\Omega}|f| & \text{if}\ k=1 \,,\\
\sum_{j=0}^{k-2} R^{j+2} \sup\limits_{B_{2R}(a)\cap\Omega}|\partial^j f| + R^{k+\delta} \sup\limits_{x,y\in B_{2R}(a)\cap\Omega} \frac{|\partial^{k-2}f(x)-\partial^{k-2}f(y)|}{|x-y|^\delta}
& \text{if}\ k\geq 2 \,.
\end{cases}
\end{equation}
We claim that the left side of \eqref{eq:inequalitytilde} is bounded from below by a constant (depending only on $d$, $k$, $\delta$ and $M$) times
\begin{equation}
\sum_{j=0}^{k} R^{j} \sup_{B_{R}(a)\cap\Omega}|\partial^j u| + R^{k+\delta} \sup_{x,y\in B_{R}(a)\cap\Omega} \frac{|\partial^{k}u(x)-\partial^{k}u(y)|}{|x-y|^\delta} \,.
\end{equation}
The proof of the latter fact is similar to that of \eqref{eq:equivnorm}. Namely, for $x\in B_R(a)\cap\Omega$, one has
\begin{equation}
\dist(\Phi(x),\partial\tilde\Omega\setminus T) \geq \frac12 \dist(x,\partial(B_{2R}(a)\cap\Omega)\setminus(B_{2R}(a)\cap\partial\Omega)) \geq \frac12 R \,.
\end{equation}
Moreover, factors of derivatives of $\Gamma$, which appear when computing derivatives of $u$ in terms of derivatives of $\tilde u$, are handled as in the proof of \eqref{eq:equivnorm}. This  completes the proof of the theorem in case $R_0 \leq r_1/4$ with $r_1$ defined in \eqref{eq:a27}. 

The case of larger $R_0$ is readily reduced to the previous case by covering the ball $B_R(a)$ with finitely many smaller balls of size $r_1/4$. As long as $R_0/r_0$ is bounded, this only modifies the constants in the bounds. 
\hfill\qed

\section{Bounds on the Kernel of Functions of the Dirichlet Laplacian}\label{sec:appb}

In this appendix we will use the bounds in Appendix~\ref{sec:appa}, specifically Corollary~\ref{pdeboundcor}, to obtain estimates on derivatives of the integral kernel of various functions of the Dirichlet Laplacian~$\Delta_\Omega$ for $\Omega\subset \R^d$. We work in arbitrary dimension $d\geq 1$. 

\subsection{Simple Bounds}

We recall \cite[Eq. (1.9.1)]{Da} that for any $x,y\in\Omega$, one has
\begin{equation}\label{eq:FK}
0\leq e^{t\Delta_\Omega}(x,y) \leq e^{t\Delta_{\R^d}}(x,y) = (4\pi t)^{-d/2} e^{-(x-y)^2/(4t)} \,.
\end{equation}
Therefore, by Bernstein's theorem we infer that for any completely monotone function $f$ on $[0,\infty)$, we have
\begin{equation}\label{eq:b2}
0\leq f(-\Delta_\Omega)(x,y) \leq f(-\Delta_{\R^3})(x,y) = \int_{\R^d} f(k^2) e^{ik\cdot(x-y)} \frac{dk}{(2\pi)^d} \,.
\end{equation}
This bound is used in the main text multiple times, for instance with $f(t)=t^{-1} e^{-t/K^2}$ and $f(t)=(t+K^2)^{-3}$.

To motivate the following, we shall first derive a more general but slightly worse bound on the diagonal $x=y$, assuming only that $f$ is non-increasing. Assuming that $\Omega$ is bounded (or more generally that the spectrum of $-\Delta_\Omega$ is discrete) we shall denote the eigenvalues of $-\Delta_\Omega$ (in increasing order and repeated according to their multiplicities) by $e_n$, and the corresponding eigenfunctions by $\varphi_n$.  According to \eqref{eq:FK} we have for any $K>0$
\begin{equation}
\sum_{e_n\leq K^2} |\varphi_n(x)|^2 \leq e^{t K^2} e^{t\Delta_\Omega}(x,x) \leq   e^{tK^2} (4\pi t)^{-d/2} \,.
\end{equation}
Optimizing in $t$ yields
\begin{equation}
\sum_{e_n\leq K^2} |\varphi_n(x)|^2 \leq \left( \frac e{2\pi d} \right)^{d/2} K^d = \left( \frac{2e}d \right)^{d/2} \Gamma(1+d/2)  \int_{\{|k|\leq K\}} \frac{dk}{(2\pi)^d} \,.
\end{equation}
Any non-increasing function $f$ with $\lim_{t\to\infty} f(t)=0$ can be written as a superposition of characteristic functions as $f(t)= -  \int_0^\infty \chi_{\{t\leq s\}} f'(s)\,ds$, and hence
\begin{equation}\label{eq:b5}
 \sum_n f(e_n) |\varphi_n(x)|^2 = f(-\Delta_\Omega)(x,x) \leq \left( \frac{2e}d \right)^{d/2} \Gamma(1+d/2)   \int_{\R^d} f(k^2) \frac{dk}{(2\pi)^d}
\end{equation}
for non-increasing functions. 

\subsection{Bounds on the Diagonal}

We now use the same method to derive bounds on $\sum_n f(e_n) |\partial^\beta\varphi_n(x)|^2$. To do so we shall use Corollary~\ref{pdeboundcor} to prove the following.

\begin{lemma}\label{lem:b1}
Assume that $\Omega\subset \R^d$ is a bounded, open $C^{k,\delta}$ set for some $k\geq 1$ and $0<\delta<1$, and let $R_0>0$. For any bounded function $g:\R_+\to \R$ of compact support,  any $\beta\in\N_0^d$ with $|\beta|\leq k$ and any $R\in (0,R_0)$, 
\begin{equation}\label{eq:b6}
R^{2|\beta|} \sum_n g(e_n)^2 |\partial^\beta \varphi_n(x)|^2 \lesssim \sum_{j=0}^{|\beta|}  \sup_{x'\in B_R(x)\cap\Omega} \sum_n g(e_n)^2 (R^2 e_n)^{2j} |\varphi_n(x')|^2 
\end{equation}
for all $x\in\Omega$.
\end{lemma}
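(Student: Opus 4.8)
The plan is to derive the bound from the elliptic estimate of Corollary~\ref{pdeboundcor} by a duality argument combined with an induction on $|\beta|$.

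\emph{Reduction by duality.} Since $g$ has compact support, for each $x$ only finitely many $g(e_n)$ are nonzero, so all sums occurring below are finite. For a sequence $(a_n)$ with $\sum_n|a_n|^2\le 1$ put $u=\sum_n g(e_n)\overline{a_n}\varphi_n$ and, more generally, $u_l=(-\Delta_\Omega)^l u=\sum_n e_n^l g(e_n)\overline{a_n}\varphi_n$ for $l\ge 0$. I claim it suffices to prove, for every such $(a_n)$, every $x\in\Omega$, every $R\in(0,R_0)$ and every multi-index $\beta$ with $|\beta|\le k$, the pointwise estimate
\begin{equation}\label{eq:pw-b1}
R^{|\beta|}\,|\partial^\beta u(x)|\lesssim \sum_{j=0}^{|\beta|}R^{2j}\sup_{x'\in B_R(x)\cap\Omega}|u_j(x')|\,.
\end{equation}
Indeed, by Cauchy--Schwarz in $\ell^2$ one has $\sum_n g(e_n)^2|\partial^\beta\varphi_n(x)|^2=\sup_{\sum|a_n|^2\le1}|\partial^\beta u(x)|^2$ and $|u_j(x')|^2\le\sum_n e_n^{2j}g(e_n)^2|\varphi_n(x')|^2$; hence, squaring \eqref{eq:pw-b1}, estimating the square of a sum of $|\beta|+1$ nonnegative terms by $|\beta|+1$ times the sum of their squares, using $R^{4j}e_n^{2j}=(R^2e_n)^{2j}$, and taking the supremum over $(a_n)$ gives exactly \eqref{eq:b6}. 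Note that each $u_l$ is a finite linear combination of Dirichlet eigenfunctions; in particular it vanishes on $\partial\Omega$ and, by elliptic (Schauder) regularity on the $C^{k,\delta}$ domain $\Omega$ (cf.\ the bootstrap in the proof of Lemma~\ref{psibounded}), it is regular enough for Corollary~\ref{pdeboundcor} to apply to it.

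\emph{Induction on $|\beta|$.} For $|\beta|=0$, \eqref{eq:pw-b1} is trivial (take $x'=x$). Let $m\ge 1$ and assume \eqref{eq:pw-b1} holds for all multi-indices of order $<m$, all $(a_n)$, all $x$, all $R\in(0,R_0)$ and all bounded weights of compact support in place of $g$. Fix $\beta$ with $|\beta|=m$. Since $\Omega$ is $C^{m,\delta}$, I would apply Corollary~\ref{pdeboundcor} with its parameter $k$ equal to $m$ to $u=u_0$, which solves $-\Delta_\Omega u_0=u_1$ in $\Omega$ with $u_0=0$ on $\partial\Omega$, on the ball $B_\rho(x)$ with $\rho=R/8$. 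Bounding $|\partial^\beta u_0(x)|\le\sup_{B_\rho(x)\cap\Omega}|\partial^m u_0|$ and using $\rho\le R$ this yields
\begin{equation}\label{eq:b1-afterCor}
R^{m}\,|\partial^\beta u_0(x)|\lesssim\sup_{B_R(x)\cap\Omega}|u_0|+\sum_{j=0}^{m-1}R^{j+2}\sup_{x'\in B_{R/4}(x)\cap\Omega}|\partial^j u_1|\,.
\end{equation}
Now $u_1$ is a finite linear combination of eigenfunctions with the bounded, compactly supported weight $t\mapsto t\,g(t)$ and the same coefficients $(a_n)$, and $(-\Delta_\Omega)^i u_1=u_{i+1}$. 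Applying the induction hypothesis \eqref{eq:pw-b1} to $u_1$ at each point $x'\in B_{R/4}(x)\cap\Omega$ with radius $R/2$ (so that $B_{R/2}(x')\subset B_R(x)$) gives $(R/2)^{j}|\partial^j u_1(x')|\lesssim\sum_{i=0}^{j}(R/2)^{2i}\sup_{B_{R/2}(x')\cap\Omega}|u_{i+1}|\le\sum_{i=0}^{j}R^{2i}\sup_{B_R(x)\cap\Omega}|u_{i+1}|$. Multiplying by $R^2$, absorbing the factor $2^{j}\le 2^{m-1}$ into $\lesssim$, and summing over $0\le j\le m-1$, the last sum in \eqref{eq:b1-afterCor} is $\lesssim\sum_{i=1}^{m}R^{2i}\sup_{B_R(x)\cap\Omega}|u_i|$. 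Together with the $\sup_{B_R(x)\cap\Omega}|u_0|$ term this gives $R^m|\partial^\beta u_0(x)|\lesssim\sum_{j=0}^{m}R^{2j}\sup_{B_R(x)\cap\Omega}|u_j|$, i.e.\ \eqref{eq:pw-b1} for $|\beta|=m$, closing the induction; the lemma then follows from the reduction above.

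\emph{Main difficulty.} The statement is essentially a repackaging of iterated interior-and-boundary elliptic regularity, and the genuinely delicate point is the bookkeeping of the balls: each invocation of Corollary~\ref{pdeboundcor} happens on a ball of radius a fixed fraction of $R$ and then enlarges it by a further factor $2$, whereas \eqref{eq:b6} requires the supremum on the right to be over $B_R(x)$ itself. The choices $\rho=R/8$ and radius $R/2$ in the induction step are made precisely so that every ball occurring stays inside $B_R(x)$, while all implicit constants keep depending only on $d$, $k$, $\delta$, $M$ and $R_0/r_0$; because the induction hypothesis is used as a black box, only a two-level nesting of balls actually has to be tracked. A secondary point is to confirm, via Schauder estimates on the $C^{k,\delta}$ domain, that finite linear combinations of Dirichlet eigenfunctions are admissible inputs to Corollary~\ref{pdeboundcor}, which is standard.
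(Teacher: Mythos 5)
Your proof is correct and follows essentially the same route as the paper's: a duality reduction over $\ell^2$-normalized coefficients, one application of Corollary~\ref{pdeboundcor} to $u=g(-\Delta_\Omega)\psi$, and an induction on $|\beta|$ in which the lower-order derivatives of $\Delta_\Omega u$ are handled by the induction hypothesis with the weight $t\,g(t)$. The only (cosmetic) difference is that you run the induction on the pointwise estimate for $u$ and apply Cauchy--Schwarz at the end, whereas the paper applies Cauchy--Schwarz first and inducts on the eigenfunction sums directly; your bookkeeping of the nested balls is in fact slightly more careful than the paper's.
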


\begin{proof}
We proceed by induction in $|\beta|$. For $|\beta|=0$, \eqref{eq:b6} obviously holds. Assume now $|\beta|\geq 1$. Pick a $\psi \in L^2(\Omega)$, and let $u= g(-\Delta_\Omega) \psi$. From Corollary~\ref{pdeboundcor}, we obtain for any $x\in \Omega$ 
\begin{equation}\label{eq:b7}
R^{|\beta|} | \partial^\beta u(x)| \lesssim \sup_{B_R(x)\cap \Omega} |u(x')| + \sum_{\alpha:\, |\alpha|<|\beta|} R^{|\alpha|+2} \sup_{B_R(x)\cap\Omega} | \partial^\alpha \Delta_\Omega u(x') |\,.
\end{equation}
Now
\begin{equation}
|u(x')| = | g(-\Delta_\Omega) \psi(x')| = \left| \sum_n g(e_n) \langle\varphi_n| \psi\rangle \varphi_n(x') \right| \leq \left( \sum_n g(e_n)^2 |\varphi_n(x')|^2\right)^{1/2} \|\psi\|_2
\end{equation}
and similarly
\begin{equation}\label{eq:b9}
| \partial^\alpha \Delta_\Omega u(x') | \leq \left( \sum_n g(e_n)^2 e_n^2 |\partial^\alpha\varphi_n(x')|^2\right)^{1/2} \|\psi\|_2 \,.
\end{equation}
By combining \eqref{eq:b7}--\eqref{eq:b9} and using the induction hypotheses for $\alpha$ with $|\alpha|< |\beta|$, we therefore obtain the bound
\begin{equation}
R^{2|\beta|} |\partial^\beta g(-\Delta_\Omega) \psi(x)|^2 \lesssim \|\psi\|_2^2 \sum_{j=0}^{|\beta|} \sup_{B_{2R}(x)\cap\Omega} \sum_n g(e_n)^2 (R^2e_n)^{2j} |\varphi_n(x')|^2
\end{equation}
valid for all $\psi \in L^2(\Omega)$. 
Since 
\begin{equation}
\sup_\psi \|\psi\|_2^{-2} |\partial^\beta g(-\Delta_\Omega) \psi(x)|^2 =\sum_n g(e_n)^2  |\partial^\beta\varphi_n(x)|^2
\end{equation}
the result follows. 
\end{proof}

We apply \eqref{eq:b6} with $g$ the characteristic function of $\{e\leq K^2\}$ for some $K>0$,  $R=K^{-1}$ and $R_0=e_1^{-1/2}$. This yields
\begin{equation}\label{eq:b12}
\sum_{e_n\leq K^2} | \partial^\beta \varphi_n(x)|^2 \lesssim K^{2|\beta|} \sup_{B_{K^{-1}}(x)\cap\Omega} \sum_{e_n\leq K^2} |\varphi_n(x')|^2 \lesssim K^{2|\beta| + d}
\end{equation}
where we have used \eqref{eq:b5} in the last step. 
More generally, we obtain for any non-increasing function $f$ with $\lim_{t\to\infty} t^{d/2+|\beta|}f(t)=0$  that
\begin{align}\nonumber
\sum_{n} f(e_n) |\partial^\beta\varphi_n(x)|^2 & = - \int_0^\infty \sum_{e_n\leq E} |\partial^\beta\varphi_n(x)|^2 f'(E)\,dE \\ \nonumber 
& \lesssim - \int_0^\infty E^{d/2+|\beta|} f'(E) \,dE \\ \nonumber
& = \const \int_0^\infty E^{d/2+|\beta|-1} f(E)\,dE \\
& =\const   \int_{\R^d} k^{2|\beta|} f(k^2) \frac{dk}{(2\pi)^d} \,. \label{eq:b13}
\end{align}

The validity of \eqref{eq:b12} is shown in  \cite[Thm.~ 17.5.3]{Ho} 
 if $\Omega$ has $C^\infty$ boundary. Following the proof there (which is based on regularity theory in $L^2$-based Sobolev spaces) one sees that a certain finite number of derivatives is actually sufficient, but the result is not as precise as ours, which only requires $C^{|\beta|,\delta}$ regularity of the boundary.

\subsection{Offdiagonal Bounds}\label{ss:b3}

In this section we shall derive a bound on the derivatives of the kernel of certain functions of the Dirichlet Laplacian, valid even away from the diagonal. These bounds are much less general than the ones in the previous two subsections, however. For simplicity we only consider the particular class of functions needed in the main text, but the method obviously extend to other functions as well. 

For $\Lambda>0$ and $\ell > 0$, let 
\begin{equation}
z_\ell(t) = t^{-\ell} \left( 1 - e^{-t/\Lambda^2} \right)^2.
\end{equation}

\begin{lemma}
Assume that $\Omega\subset \R^d$ is a bounded, open $C^{k,\delta}$ set for some $k\geq 1$ and $0<\delta<1$. For  any $\beta\in\N_0^d$ with $|\beta|\leq k$ and $|\beta| < 2+d/2$, and any $\ell \in ( |\beta|, 2+d/2)$ and $\Lambda >0$,   we have
\begin{equation}\label{a4:5}
 \left| \partial_x^\beta  z_\ell (-\Delta_\Omega)(x,y) \right| \lesssim 
 \begin{cases}   
 |x-y|^{2\ell -d -|\beta|}  & \text{for $\ell < d/2$} \\
   \ln(1+(\Lambda|x-y|)^{-1})  |x-y|^{-|\beta|}& \text{for $\ell = d/2$} \\
   \Lambda^{d-2\ell} |x-y|^{-|\beta|} & \text{for $\ell > d/2$}
 \end{cases}
\end{equation}
for $\Lambda|x-y|\leq 1$, and 
\begin{equation}\label{a4:5a}
 \left| \partial_x^\beta  z_\ell (-\Delta_\Omega)(x,y) \right| \lesssim  \Lambda^{-4} |x-y|^{2\ell - 4-d-|\beta|}
 \end{equation}
for $\Lambda|x-y| \geq 1$.
\end{lemma}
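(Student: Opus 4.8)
The strategy is to combine the offdiagonal bound for the heat kernel, Eq.~\eqref{eq:FK}, with the local elliptic regularity contained in Corollary~\ref{pdeboundcor}, exactly in the spirit of Lemma~\ref{lem:b1}, but now keeping track of the $x$--$y$ separation. First I would write $z_\ell$ as a superposition over the heat semigroup. Since $z_\ell(t) = t^{-\ell}(1-e^{-t/\Lambda^2})^2 = t^{-\ell}(1 - 2e^{-t/\Lambda^2} + e^{-2t/\Lambda^2})$ and $t^{-\ell} = \Gamma(\ell)^{-1}\int_0^\infty s^{\ell-1} e^{-st}\,ds$, one gets a representation
\begin{equation}
z_\ell(-\Delta_\Omega) = \frac{1}{\Gamma(\ell)} \int_0^\infty s^{\ell-1} \left( e^{(s+0)\Delta_\Omega} - 2 e^{(s+\Lambda^{-2})\Delta_\Omega} + e^{(s+2\Lambda^{-2})\Delta_\Omega}\right) ds \,,
\end{equation}
i.e.\ $z_\ell(-\Delta_\Omega)(x,y) = \Gamma(\ell)^{-1}\int_0^\infty s^{\ell-1} G_s(x,y)\,ds$ with $G_s(x,y) = e^{s\Delta_\Omega}(x,y) - 2e^{(s+\Lambda^{-2})\Delta_\Omega}(x,y) + e^{(s+2\Lambda^{-2})\Delta_\Omega}(x,y)$. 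The two cancellations encoded in $G_s$ are what produce the factor $\Lambda^{-4}$ and the improved power $|x-y|^{2\ell-4-d}$ in \eqref{a4:5a}: for $s \gg \Lambda^{-2}$ a double Taylor expansion in the semigroup parameter gains $\Lambda^{-4} s^{-2}$ relative to the single heat kernel, while for $s\lesssim \Lambda^{-2}$ one just uses $|G_s|\lesssim e^{s\Delta_\Omega}(x,y) + \dots$ termwise.

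Next, to get the $x$-derivatives I would apply Corollary~\ref{pdeboundcor} to $u = z_\ell(-\Delta_\Omega)\psi$ for arbitrary $\psi\in L^2$, on a ball $B_R(x)$ with $R \sim |x-y|$ (and $R \le R_0 = e_1^{-1/2}$); as in the proof of Lemma~\ref{lem:b1}, derivatives of $u$ up to order $|\beta|$ at $x$ are controlled by $\sup_{B_R(x)}|u|$ together with $\sup_{B_R(x)} R^{j+2}|\partial^j \Delta_\Omega u|$ for $j \le |\beta|-1$, and $\Delta_\Omega u$ is again a function of $-\Delta_\Omega$ applied to $\psi$ (namely $-(t\,z_\ell(t))(-\Delta_\Omega)\psi$, whose kernel is $-z_{\ell-1}(-\Delta_\Omega)(x',y)$ up to the $e^{-t/\Lambda^2}$-modifications). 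Thus everything reduces to pointwise \emph{un}differentiated bounds on kernels $z_{\ell'}(-\Delta_\Omega)(x',y)$ with $\ell' = \ell, \ell-1, \dots, \ell-|\beta|$ and $x'$ ranging over $B_R(x)$; note $|x'-y|\sim |x-y|$ on that ball once $R$ is chosen a small multiple of $|x-y|$, so there is no loss. For these undifferentiated kernels I would use the integral representation above and \eqref{eq:FK}: in the regime $\Lambda|x-y|\le 1$ split the $s$-integral at $s\sim |x-y|^2$, bounding $G_s$ by the Gaussian $(4\pi s)^{-d/2}e^{-|x-y|^2/(4s)}$ for small $s$ and by $\min\{(4\pi s)^{-d/2},\ \Lambda^{-4}s^{-2}(4\pi s)^{-d/2}\}$ (double-Taylor gain) for large $s$; the three cases $\ell<d/2$, $\ell=d/2$, $\ell>d/2$ arise from whether $\int^\infty s^{\ell-1-d/2}\,ds$ converges at infinity after the $\Lambda^{-4}s^{-2}$ cutoff kicks in — this yields the trichotomy in \eqref{a4:5}. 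In the regime $\Lambda|x-y|\ge 1$ the Gaussian decay at $s\lesssim |x-y|^2$ combined with the $\Lambda^{-4}s^{-2}$ gain at $s\gtrsim |x-y|^2\ge\Lambda^{-2}$ gives $\Lambda^{-4}|x-y|^{2\ell-4-d}$, and then each of the $|\beta|$ differentiations costs a factor $|x-y|^{-1}$, producing \eqref{a4:5a}.

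The condition $|\beta|<2+d/2$ and $\ell\in(|\beta|,2+d/2)$ enters precisely so that each intermediate kernel $z_{\ell'}(-\Delta_\Omega)(x',y)$ with $\ell' = \ell-j \ge \ell-|\beta|>0$ still has $\ell' > 0$ (so the Gamma representation makes sense) and so that the relevant $s$-integrals near $s=0$ converge, i.e.\ $\ell' - d/2 < 2$; I would check at the outset that these inequalities hold for every $\ell'$ appearing. The main obstacle I anticipate is bookkeeping rather than conceptual: one must verify that applying Corollary~\ref{pdeboundcor} does not lose powers of $|x-y|$ — this is why $R$ must be chosen comparable to $|x-y|$ (not to $R_0$), so that the $\sup$ over $B_R(x)\cap\Omega$ is genuinely a sup over points at distance $\sim|x-y|$ from $y$ — and that the induction on $|\beta|$ closes with the three-case structure intact at every level, the ``$\ell<d/2$'' branch in particular shifting to a different branch as $\ell$ decreases by integers. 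A secondary technical point is justifying the double Taylor expansion of $e^{(s+c\Lambda^{-2})\Delta_\Omega}(x,y)$ in the parameter $c$ with Gaussian-type remainder control uniform in $x,y$; this follows from differentiating \eqref{eq:FK}'s right-hand side (an explicit Gaussian) and dominating the Dirichlet kernel's time-derivatives accordingly, e.g.\ via $\partial_s e^{s\Delta_\Omega}(x,y) = \Delta_\Omega e^{s\Delta_\Omega}(x,y)$ and the spectral bound $|\Delta_\Omega e^{s\Delta_\Omega}(x,y)| \lesssim s^{-1} e^{(s/2)\Delta_{\R^d}}(x,y)$ obtained by writing $\Delta_\Omega e^{s\Delta_\Omega} = (\Delta_\Omega e^{(s/2)\Delta_\Omega}) e^{(s/2)\Delta_\Omega}$ and using $\|\Delta_\Omega e^{(s/2)\Delta_\Omega}\|\lesssim s^{-1}$.
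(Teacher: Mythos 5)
Your overall architecture coincides with the paper's: an induction via Corollary~\ref{pdeboundcor} with $R=|x-y|/2$ reduces $|\partial_x^\beta z_\ell(-\Delta_\Omega)(x,y)|$ to undifferentiated bounds on $z_{\ell-i}(-\Delta_\Omega)(x',y)$ for $0\le i\le|\beta|$ and $|x'-y|\gtrsim|x-y|$, and these are then estimated through a Laplace/heat-semigroup representation that exploits the second-difference structure of $(1-e^{-t/\Lambda^2})^2$, with the trichotomy in \eqref{a4:5} arising from splitting the integral exactly as you describe. The hypotheses on $\ell$ and $|\beta|$ are used as you say.

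There is, however, a genuine gap in how you extract the $\Lambda^{-4}$ gain. You keep the second finite difference on the semigroup time, $G_s=e^{s\Delta_\Omega}-2e^{(s+\Lambda^{-2})\Delta_\Omega}+e^{(s+2\Lambda^{-2})\Delta_\Omega}$, which forces you to prove a pointwise bound of the form $|G_s(x,y)|\lesssim\Lambda^{-4}s^{-2-d/2}e^{-c|x-y|^2/s}$, i.e.\ a \emph{Gaussian} upper bound on $\partial_s^2e^{s\Delta_\Omega}(x,y)$. This combined bound is genuinely needed: for $\Lambda|x-y|\ge1$ on the intermediate range $\Lambda^{-2}\le s\le|x-y|^2$, the termwise Gaussian bound alone yields a contribution of order $|x-y|^{2\ell-d}$, and the cancellation without off-diagonal decay yields $\Lambda^{d-2\ell}$; each exceeds the target $\Lambda^{-4}|x-y|^{2\ell-4-d}$ by a positive power of $\Lambda|x-y|$, so only the product of the two gains closes \eqref{a4:5a}. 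Your proposed justification --- composing the operator-norm bound $\|\Delta_\Omega e^{(s/2)\Delta_\Omega}\|\lesssim s^{-1}$ with the heat kernel --- does not produce a pointwise kernel bound with Gaussian off-diagonal decay (Cauchy--Schwarz on the composition gives only the on-diagonal $s^{-1-d/2}$), and no such time-derivative heat-kernel estimate is available in the paper. The fix is a change of variables inside your own representation: substituting $\lambda=s+c\Lambda^{-2}$ in each term gives $z_j(-\Delta_\Omega)=\Gamma(j)^{-1}\int_0^\infty e^{\lambda\Delta_\Omega}\,\bigl(\lambda^{j-1}-2[\lambda-\Lambda^{-2}]_+^{j-1}+[\lambda-2\Lambda^{-2}]_+^{j-1}\bigr)\,d\lambda$, so that the second difference acts on the explicit weight $\lambda^{j-1}$ --- where the bound by $\Lambda^{-4}\lambda^{j-3}$ for $\lambda\gtrsim\Lambda^{-2}$ is elementary calculus --- while the Dirichlet heat kernel enters only through its positivity and the domination \eqref{eq:FK}, which supplies the Gaussian factor simultaneously and for free. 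With that substitution, your splitting of the integral goes through verbatim and the rest of your argument is sound.
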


\begin{proof}
We use again Corollary~\ref{pdeboundcor} above. A simple induction argument as in the proof of Lemma~\ref{lem:b1} shows that 
\begin{equation}\label{a4:6}
R^{|\beta|}  \left| \partial_x^\beta   z_\ell (-\Delta_\Omega)(x,y) \right| \lesssim \sum_{i=0}^{|\beta|}  R^{2i} \sup_{x'\in B_R(x)} \left|  z_{\ell -i} (-\Delta_\Omega)(x',y) \right| 
\end{equation}
for any $R>0$ (smaller than some arbitrary, fixed value). To estimate the right side of \eqref{a4:6}, we write  for $j>0$
\begin{align}\nonumber
z_{j}(t) & =  t^{ -j } \left( 1 - e^{-t/\Lambda^2} \right)^2 \\ &  = \frac 1{\Gamma(j )} \int_0^\infty e^{-\lambda t}  \left( \lambda^{j-1} - 2 \left[ \lambda- \Lambda^{-2}\right]_+^{j-1} +  \left[ \lambda- 2 \Lambda^{-2}\right]_+^{j-1}  \right) d\lambda
\end{align}
where the term $\left[ \lambda- \Lambda^{-2}\right]_+^{j-1}$ is understood as being zero for $\lambda < \Lambda^{-2}$ even when $j < 1$, and likewise for $ \left[ \lambda- 2 \Lambda^{-2}\right]_+^{j-1}$. In particular, from \eqref{eq:FK}, we thus have
\begin{equation}\label{a4:8} 
\left| z_{j}(-\Delta_\Omega)(x,y) \right|  \leq \Lambda^{d-2j  } f_j( \Lambda |x-y|)
\end{equation}
with
\begin{align}
f_j(t) &  = \frac {1}{\Gamma(j) (4\pi)^{d/2} } \int_0^\infty e^{- t^2/(4\lambda)}  \left| \lambda^{j-1} - 2  \left[ \lambda- 1\right]_+^{j-1} +  \left[ \lambda- 2 \right]_+^{j-1}  \right| \lambda^{-d/2} \,d\lambda \,.
\end{align}
We note that 
\begin{equation}
 \left| \lambda^{j-1} - 2\left[ \lambda- 1\right]_+^{j-1} + \left[ \lambda- 2 \right]_+^{j-1}  \right| \lesssim \lambda^{j-3} 
\end{equation}
for $\lambda \geq 3$. Using this, one readily checks that as long as $0<j<2+d/2$, 
\begin{equation}\label{a4:11}
f_j(t) \lesssim t^{2j-4-d} \quad \text{for $t \geq 1$, and \ } f_j (t) \lesssim \begin{cases} 1 & \text{for $j>d/2$} \\ \ln (2/t) & \text{for $j=d/2$} \\ t^{2j-d}  & \text{for $j<d/2$} \end{cases} \quad \text{ for $t\leq 1$} \,.
\end{equation}
We plug these bounds into \eqref{a4:8} and choose $R= |x-y|/2$ in \eqref{a4:6}. (Note that $R\leq R_0$, as required for \eqref{a4:6}, where $R_0=$ diameter of $\Omega$.)   
For all $x' \in B_R(x)$, we then have $ |x'-y|  \geq  |x-y|/2$, and hence \eqref{a4:6}, \eqref{a4:8} and \eqref{a4:11} imply the desired bounds \eqref{a4:5} and \eqref{a4:5a} for this choice of $R$. 
\end{proof}

Recall the definition $u_{jk}(x) = \sup_{y\in \R^3} | p_j p_k |p|^{-4} w_{x+y}(y)|$ with
\begin{equation}\label{def:w2}
w_x(y) = z_{1/2}(-\Delta_\Omega)(x,y) \,.
\end{equation}
Applying the bounds \eqref{a4:5} and \eqref{a4:5a}, with $\ell = 5/2$, $d=3$ and $|\beta|=2$, we readily obtain
\begin{equation}\label{a4:1}
u_{jk}(x) \lesssim \min \{ \Lambda^{-2} |x|^{-2} , \Lambda^{-4} |x|^{-4} \}\,.
\end{equation}
The function $\min\{|x|^{-2}, |x|^{-4}\}$ is in $L^{6/5}(\R^3)$ and hence has finite Coulomb norm. By the Hardy--Littlewood--Sobolev inequality and scaling, it thus follows immediately that 
$\| u_{jk}\|_{\rm C} \lesssim \Lambda^{-5/2}$, 
as claimed in \eqref{claim1}. 

\bigskip
\noindent {\bf Acknowledgments.} 
Partial support through  the U.S. National Science Foundation, grant DMS-1363432 (R.L.F.), and the European Research Council (ERC) under the European Union's Horizon 2020 research and innovation programme (grant agreement No 694227; R.S.),  is acknowledged.


\end{document}